\newcommand{\tr}{\operatorname{tr}}
\newcommand{\Tr}{\operatorname{Tr}}
\newcommand{\diff}{\mathrm{d}}
\newcommand{\C}{\mathbb{C}}
\newtheorem{theorem}{Theorem}[section]
\newtheorem{proposition}{Proposition}[section]
\newtheorem{notation}{Notation}[section] 
\newtheorem{lemme}{Lemma}[section] 
\newtheorem{definition}{Definition}[section] 
\newtheorem{remark}{Remark}[section]
\title{The generalized master fields}
\author{Guillaume Cébron \thanks{Guillaume Cébron is supported by the ERC advanced grant ``non-commutative distributions in free probability", held by Roland Speicher.}\and Antoine Dahlqvist\thanks{Antoine Dahlqvist is supported in part by RTG 1845 and by EPSRC grant EP/I03372X/1.} \and Franck Gabriel \thanks{Franck Gabriel is supported by the ERC ``Behaviour near criticality", held by Pr. Hairer.}}
\begin{document}

\maketitle

\begin{abstract}The master field is the large $N$ limit of the Yang-Mills measure on the Euclidean plane. It can be viewed as a non-commutative process indexed by paths on the plane. We construct and study generalized master fields, called \emph{free planar Markovian holonomy fields} which are versions of the master field where the law of a simple loop can be as more general as it is possible. We prove that those free planar Markovian holonomy fields can be seen as well as the large $N$ limit of some Markovian holonomy fields on the plane with unitary structure group.
\end{abstract}

\tableofcontents{}

\section{Introduction}
In 1954, Yang and Mills considered a non-Abelian gauge theory, which has led to the elaboration of the standard model \cite{Yang1954}. The \emph{Yang-Mills measure} is then the reference measure to compute Feynman integrals. It is supposed to have a density with respect to a translation invariant measure on the space of connections on a principal bundle. Unfortunately, this definition is mathematically meaningless, since the space of connections is not locally compact. Giving a rigorous description of the Yang-Mills theory in four dimensions is still an open problem, but in the case of a two-dimensional space-time, its construction is possible. Physicists as Douglas~\cite{Douglas1995}, Gopakumar and Gross~\cite{Gopakumar1995} and mathematicians as Gross~\cite{Gross1988}, Driver~\cite{Driver1990}, Singer~\cite{Singer1995}, Sengupta~\cite{Sengupta1997} and Lévy~\cite{Levy2011} have explained how a connection on a surface can be thought as an element of the structure group for each path on this surface.

In this article, we focus our attention to the following case, under study in \cite{Levy2011}: the structure group is given by the \emph{unitary group $U(N)$} of dimension $N$ and the surface is given by the \emph{plane $\mathbb{R}^2$}. We refer to \cite{Levy2011} for a historical and mathematical expository of the Yang-Mills theory in that particular case. We only need to know that, in that situation, the planar Yang-Mill measure can be seen as the distribution of a \emph{planar Markovian holonomy field} \cite{Holonomy}, that is to say, a random process $\left(H_l\right)_{l \in {\sf L}_0}$, indexed by a set ${\sf L}_0$ of loops of the plane $\mathbb{R}^2$, with value in the group $U(N)$, and subject to some conditions (see Definition~\ref{planarmarkov}).

The so-called \emph{master field} is the limiting behaviour in large dimension $N$ of this Yang-Mills measure. It can be described with the help of the language of free probability of Voiculescu~\cite{Voiculescu1992}. In our setting, it is the limit in non-commutative distribution of $\left(H_l\right)_{l \in {\sf L}_0}$ (which implicitly depends on $N$) when $N$ tends to infinity. Thus, it is a non-commutative process $\left(H_l\right)_{l \in {\sf L}_0}$, indexed by a set ${\sf L}_0$ of loops of the plane $\mathbb{R}^2$, with value in the group of unitaries of a non-commutative probability space. It appears that the properties of being a planar Markovian holonomy field (Definition~\ref{planarmarkov}) for the Yang-Mills measure yields some structural properties for the master field. In this paper, any non-commutative process which shares those properties with the master field is called a \emph{free planar Markovian holonomy field} (see Definition~\ref{def:Levymaster}). Our first result, Theorem~\ref{Characterization Levy}, gives a complete parametrisation of the set of free planar Markovian holonomy fields, and comes together with a construction of all those free planar Markovian holonomy fields. In particular, it gives a third construction of the master field, different from~\cite{Anshelevich2012a,Levy2011}. Remark that being a free planar Markovian holonomy field is not sufficient to be the master field. Surprisingly, our second result, Theorem~\ref{th:infnorm}, says that the unique free planar Markovian holonomy field which is continuous in the operator norm is the master field: we distinguish the master field from other non-commutative processes by looking at global properties, without knowing the precise law of each loop.

The convergence of the planar Yang-Mills measure to the master field has been first rigorously proved in 2011~\cite{Anshelevich2012a}. Theorem~\ref{Existence Approximation}, the main motivation of this work, is the following result: every free planar Markovian holonomy field is the limit in non-commutative distribution of a sequence of planar Markovian holonomy fields. In other words, the non-commutative processes which share the global properties of the master field are also limits of planar Markovian holonomy fields in large dimension.

The rest of the article is organized as follows. In the next section, we give the definitions of a planar Markovian holonomy field, of a free planar Markovian holonomy field and of the master field. We also state our main results, that is to say Theorem~\ref{Existence Approximation}, Theorem~\ref{Characterization Levy} and Theorem~\ref{th:infnorm}. In Section~\ref{Sec:classification}, we study free planar Markovian holonomy fields. We give a construction of free planar Markovian holonomy fields and prove Theorem~\ref{Characterization Levy} and Theorem~\ref{th:infnorm}. Section~\ref{Sec:convergence} is devoted to prove Theorem~\ref{Existence Approximation}.

\section{Definition and main result}
In this section, we define classical and free planar Markovian holonomy fields in a parallel way, but also recall some basics of non-commutative probability which are needed to express our three main results in the last paragraph of the current section.
\subsection{Planar Markovian holonomy fields}

We are going to consider functions defined on the set of loops drawn in the plane and with value in some group.

The set of paths ${\sf P}$ in the plane is the set of rectifiable oriented curves drawn in $\mathbb{R}^{2}$ up to increasing reparametrization. Since we imposed the condition of rectifiability, any path $p\in {\sf P}$ has a length denoted by $\ell(p)$.  The set of loops based at $0$, denoted by ${\sf L}_{0}$, is the set of paths $l$ such that the two endpoints of $l$ are equal to $0$. Such a loop is called \emph{simple} when it does not intersect itself outside of its endpoints. There exists a natural structure on ${\sf L}_{0}$ given by the concatenation and the orientation-inversion of paths. 

\begin{notation}
Let $l_1$ and $l_2$ be two loops in ${\sf L}_{0}$. The concatenation of $l_1$ and $l_2$ is denoted by $l_1l_2$. Besides, $l_1^{-1}$ will denote the loop obtained by inverting the orientation of $l_1$. 
\end{notation}

\begin{definition}Let $\Gamma$ be a group and let ${\sf L}$ be a subset of ${\sf L}_0$. The set of multiplicative functions $\mathcal{M}ult({\sf L},\Gamma)$ from ${\sf L}$ to $\Gamma$ is the subset of functions in $\Gamma^{{\sf L}}$ such that for any loops $l_1, l_2, l_3 \in {\sf L}$, such that $l_1l_2 \in {\sf L}$ and $l_3^{-1} \in {\sf L}$, one has: 
\begin{align}
\label{eq1} f(l_1l_2) = f(l_2)f(l_1), \\
\label{eq2} f(l_3^{-1}) = f(l_3)^{-1}. 
\end{align}
\end{definition}

The notion of planar Markovian holonomy fields was defined in \cite{Holonomy}. In this paper, we will only consider {\em pure} planar Markovian holonomy fields. Since we will only restrain ourself to the setting of random holonomies on ${\sf L}_0$ (instead of the set of paths ${\sf P}$), let us give a definition of planar Markovian holonomy fields, slightly different, which fits well with our study.

We fix one underlying probability space $(\Omega,\mathcal{F},\mathbb{P})$. Let $G$ be a compact Lie group and let $d_G$ be a bi-invariant distance on $G$. The set $L^1(\Omega,G)$ of $G$-valued random variables on $\Omega$ is a group for the multiplication of random variables. 
\begin{definition}
\label{def:Markplanar}
A pure $G$-valued \emph{planar Markovian holonomy field} is an element $\left(H_l\right)_{l \in {\sf L}_0}$ of $\mathcal{M}ult({\sf L}_{0},L^1(\Omega,G))$ such that:\label{planarmarkov}
\begin{description}
\item[1-Invariance by area-preserving Lipschitz homeomorphisms: ] for any Lipschitz homeomorphism $\psi$ which preserves the Lebesgue measure on the plane, for any $n$-tuple of loops $l_1$, ..., $l_n$ which are sent by $\psi$ on $n$ loops in $L_0$, $\left(H_{l_1}, ..., H_{l_n}\right)$ has the same law as $\left(H_{\psi(l_1)}, ..., H_{\psi(l_n)}\right)$. 
\item[2-Independence property: ] for any simple loops $l_1$ and $l_2$ based at $0$ whose interiors ${\sf Int}(l_1)$ and ${\sf Int}(l_2)$ are disjoint, the two following families are independent:
$$\left\{H_l: l \in {\sf L}_0 \text{ whose image is in } \overline{{\sf Int}(l_1)} \right\},$$ 
$$\left\{H_l: l \in {\sf L}_0 \text{ whose image is in } \overline{{\sf Int}(l_2)}\right\}.$$
\item[3-Gauge invariance: ] for any $g \in G$, $\left(g H_{l} g^{-1}\right)_{l \in {\sf L}_0}$ has the same law as $\left( H_{l} \right)_{l \in {\sf L}_0}$.
\end{description}
\end{definition}

The difference between a pure and a general planar Markovian holonomy field appears in the independence property : a weaker version is required for planar Markovian holonomy fields \cite[Definition $4.1$]{Holonomy}. For sake of simplicity, since we will use only pure planar Markovian holonomy fields, we will call them planar Markovian holonomy fields and we will always assume that they are pure.

\begin{remark}The set $\mathcal{M}ult({\sf L}_0, G)$ can be endowed with the Borel $\sigma$-algebra $\mathcal{B}$ which is the smallest $\sigma$-algebra such that for any loops $l_{1}, ..., l_{n}$ in ${\sf L}_0$ and any continuous function $f: G^{n} \to \mathbb{R}$, the mapping $h \mapsto f\big(h(c_{1}), ..., h(c_{n})\big)$ is measurable.

In the article \cite{Holonomy}, a planar Markovian holonomy field is viewed as a probability measure $\mu$ on $(\mathcal{M}ult({\sf L}_0, G),\mathcal{B})$. Such an object can be turned into an element $\left(H_l\right)_{l \in L_0}$ of $\mathcal{M}ult({\sf L}_{0},L^1(\Omega,G))$ by setting $(\Omega,\mathcal{F},\mathbb{P})=(\mathcal{M}ult({\sf L}_0, G),\mathcal{B},\mu)$ and $H_l:\mathcal{M}ult({\sf L}_0, G)\to G$ which is given by the projection $H_l:h\mapsto h(l)$.

Conversely, let us start from a planar Markovian holonomy field $\left(H_l\right)_{l \in L_0}$ as defined in Definition~\ref{planarmarkov}. One can prove that there exists a measure $\mu$ on $\mathcal{M}ult({\sf L}_{0}, G)$ such that the canonical process of projections defined on $\left(\mathcal{M}ult({\sf L}_{0}, G),\mathcal{B}, \mu \right)$ has the same law as $(H_{l})_{l \in {\sf L}_{0}}$. Thus one can always suppose that almost surely $(H_{l})_{l \in {\sf L}_{0}}$ is in $\mathcal{M}ult({\sf L}_{0}, G)$. This allows to see a planar Markovian holonomy field either as a multiplicative function with value in $L^1(\Omega,G)$, or as a random multiplicative function with value in $G$ or as a probability measure on $\mathcal{M}ult({\sf L}_{0}, G)$.
\end{remark}

We have endowed the set of loops with an algebraic structure (concatenation and inversion), we can also endow it with a topological structure. Let us remind the reader the notion of convergence with fixed-endpoints defined by T. Lévy in \cite{Levy2010}. Let $(l_n)_{n \in \mathbb{N}}$ be a sequence of loops in ${\sf L}_{0}$. The sequence $(l_n)_{n \in \mathbb{N}}$ converges if and only if there exists a loop $l \in {\sf L}_{0}$ such that: 
\begin{align*}
d(l_n,l) = | \ell(l_n) - \ell(l)| + \inf \sup_{t \in [0,1]} | l_n(t)-l(t) | \underset{n \to \infty}{\longrightarrow} 0, 
\end{align*}
where the infimum is taken on the parametrizations of the loops $l_n$ and $l$. The left hand side defines actually a distance on ${\sf L}_0$. We will always consider continuity defined with respect to this notion of convergence. 
\begin{definition}
Let $(H_{l})_{l \in {\sf L}_0}$ be a $G$-valued planar Markovian holonomy field. It is \emph{stochastically continuous} if for any sequence of loops $(l_n)_{n\in \mathbb{N}}$ converging to $l$: 
\begin{align*}
\mathbb{E}\left[ d_G\left (H_{l_n}, H_{l}\right)\right] \underset{n \to \infty }{\longrightarrow}0, 
\end{align*}
or equivalently, $H_{l_n}$ converges in probability to $H_{l}$. 
\end{definition}

\subsection{Free planar Markovian holonomy fields}

When $G$ is a compact matrix Lie group in $M_N(\mathbb{C})$, a $G$-valued planar Markovian holonomy field can be consider as a process indexed by ${\sf L}_{0}$ with value in some non-commutative probability space in the following sense.

\begin{definition}A \emph{non-commutative probability space} $(\mathcal{A},\tau)$ consists of a unital {$*$-algebra} $\mathcal{A}$ endowed with a tracial positive linear functional $\tau:\mathcal{A}\to \mathbb{C}$ with $\tau(1)=1$.
\end{definition}
Indeed, we have $L^1(\Omega,G)\subset L^\infty(\Omega,M_N(\mathbb{C}))$. Then, the random holonomy field $(h(l))_{l\in {\sf L}_{0}}$ is a process with value in the non-commutative probability space $L^\infty(\Omega,M_N(\mathbb{C}))$ endowed with the trace $\mathbb{E}\circ tr$.

The definition of a free planar Markovian holonomy field follows the definition of a planar Markovian holonomy field viewed as a process indexed by ${\sf L}_{0}$ when the non-commutative probability space $(L^\infty(\Omega,M_N(\mathbb{C})),\mathbb{E}\circ tr)$ is replaced by an arbitrary one. We will see that this general definition is the good one in order to describe the limit of some planar Markovian holonomy fields whose dimension is growing. However, one has to find the right replacement of the different properties which caracterize a planar Markovian holonomy field.

Fortunately, the analogues of the basic notions of probability in the context of a non-commutative probability space are now deeply understood, and we refer to the books \cite{Speicher2006,Voiculescu1992} for a general description. Here are the non-commutative version of the concept of law and of the concept of independence.
 
\begin{definition}Let $(\mathcal{A},\tau)$ be a non-commutative probability space.
\begin{enumerate}
\item Let $(a_j)_{j\in J}$ be a family of $\mathcal{A}$. The \emph{non-commutative law} (or non-commutative distribution) of $(a_j)_{j\in J}$ is the map
 from the non-commutative polynomials $
\mathbb{C}\langle X_j,X_j^*:j\in J\rangle $ to $\mathbb{C}$ given by $  P \mapsto \tau( P(a_j:j\in J))$.
\item Two subalgebras $\mathcal{A}_1$ and $\mathcal{A}_2$ of $\mathcal{A}$ are\emph{ freely independent} if, for all $a_1\in \mathcal{A}_{i_1},\ldots, a_n\in \mathcal{A}_{i_n}$ such that $i_1,\ldots,i_n$ alternate between $1$ and $2$ (i.e. $i_1\neq i_2\neq \ldots \neq i_n$), we have
$\tau(a_1\cdots a_n)=0$ whenever $\tau(a_1)=\cdots=\tau(a_n)=0$. Two families $\mathcal{F}_1$ and $\mathcal{F}_2$ of elements of $\mathcal{A}$ are freely independent if the algebras they generate are freely independent.
\end{enumerate}
\end{definition}
We are now ready to state the definition of a free planar Markovian holonomy field.
\begin{definition}
\label{def:Levymaster}
Let $(\mathcal{A},\tau)$ be a non-commutative probability space and $\mathcal{A}_u$ be the group of unitaries of $\mathcal{A}$. A \emph{free planar Markovian holonomy field} is an element $(h_l)_{l\in {\sf L}_0}$ in $\mathcal{M}ult(\sf L_0,\mathcal{A}_u)$ such that
\begin{description}
\item[1-Invariance by area-preserving Lipschitz homeomorphisms : ] for any Lipschitz homeomorphism $\psi$ which preserves the Lebesgue measure on the plane, for any $n$-tuple of loops $l_1$, ..., $l_n$ which are sent by $\psi$ on $n$ loops in $L_0$, $\left(h_{l_1}, ...,h_{l_n}\right)$ has the same non-commutative law as $\left(h_{\psi(l_1)}, ...,h_{\psi(l_n)}\right)$.
\item[2-Independence property : ]for any simple loops $l_1$ and $l_2$ based at $0$ whose interiors ${\sf Int}(l_1)$ and ${\sf Int}(l_2)$ are disjoint, the two following families are freely independent:
$$\left\{h_l: l \in {\sf L}_0 \text{ whose image is in } \overline{{\sf Int}(l_1)} \right\},$$ 
$$\left\{h_l: l \in {\sf L}_0 \text{ whose image is in } \overline{{\sf Int}(l_2)}\right\}.$$
\item[3-Continuity : ] the convergence in non-commutative distribution of $h_{l_n}$ to $h_{l}$ whenever $(l_n)_{n\in \mathbb{N}}$ converges in ${\sf L}_0$ to $l$.
\end{description}
\end{definition}

The gauge invariance condition in the setting of free planar Markovian holonomy fields has no interest, since any family of $\mathcal{A}$ is invariant by conjugation in non-commutative law by any element of $\mathcal{A}_{u}$. Moreover, the continuity property has been formulated here differently because the stochastic continuity of a planar Markovian holonomy field has no rigourous equivalent in non-commutative probability. The formulation is justified by the fact that the stochastic continuity implies in particular the convergence in law of $H_{l_n}$ to $H_{l}$ whenever $(l_n)_{n\in \mathbb{N}}$ converges in ${\sf L}_0$ to $l$.

\subsection{Main results}
For all $N\geq 1$, the group of unitary matrices of size $N\times N$ is denoted by $U(N)$. Every free planar Markovian holonomy field is the limit of some $U(N)$-valued planar Markovian holonomy fields in the following sense.
\begin{theorem} \label{Existence Approximation}
Let $\left(h_l\right)_{l\in {\sf L}_0}$ be a free planar Markovian holonomy field in $\left(\mathcal{A},\tau\right)$. For all $N\geq 1$, there exists a $U(N)$-valued \emph{planar Markovian holonomy field} $({H_l}^{(N)})_{l\in {\sf L}_0}$, which is stochastically continuous, and whose non-commutative law converges  to the non-commutative law of $\left(h_l\right)_{l\in {\sf L}_0}$ in probability. In other words, for each $l\in {\sf L}_0$, one has the convergence in probability
$$\frac{1}{N}Tr({H_l}^{(N)})\underset{N}{\overset{\mathbb{P}}{\longrightarrow}}\tau(h_l).$$
\end{theorem}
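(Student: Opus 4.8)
The plan is to first invoke Theorem~\ref{Characterization Levy}: since every free planar Markovian holonomy field is parametrized (and built) in a canonical way, it suffices to produce the approximating sequence $({H_l}^{(N)})_{l\in{\sf L}_0}$ for the field attached to a given parameter. The parameter governing a free planar Markovian holonomy field is, morally, the non-commutative law of a simple loop of given area, equivalently a convolution semigroup of probability measures on the unit circle (a free multiplicative L\'evy process), and the idea is to realize this semigroup as a limit of $U(N)$-valued convolution semigroups. So the first step is: for each $N$, choose a L\'evy process on $U(N)$ — i.e.\ a convolution semigroup $(\nu^{(N)}_t)_{t\ge 0}$ of probability measures on $U(N)$, invariant by conjugation — whose empirical spectral distribution at time $t$ converges, as $N\to\infty$, to the prescribed spectral measure of the free field's simple loop of area $t$. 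This is the heart of the matter and uses the known correspondence between classical and free infinitely divisible laws on the circle together with the asymptotic freeness / convergence results for conjugation-invariant L\'evy processes on $U(N)$.

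The second step is to feed such a L\'evy process into the construction of a planar Markovian holonomy field on $\mathbb{R}^2$ with structure group $U(N)$. Here I would use the machinery of \cite{Holonomy,Levy2011}: given a conjugation-invariant L\'evy process on a compact Lie group $G$, one builds a $G$-valued planar Markovian holonomy field by assigning, to a generic finite family of simple loops bounding disjoint faces of areas $t_1,\dots,t_k$, independent increments distributed as $\nu^{(N)}_{t_1},\dots,\nu^{(N)}_{t_k}$, and extending multiplicatively along the combinatorial structure of the associated planar graph; the axioms (area-preserving invariance, independence over disjoint interiors, gauge invariance) and stochastic continuity follow from the corresponding properties of the L\'evy process exactly as for the Yang--Mills field. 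This gives the desired $({H_l}^{(N)})_{l\in{\sf L}_0}$, stochastically continuous by construction.

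The third step is to prove the convergence $\frac1N\Tr({H_l}^{(N)})\xrightarrow{\;\mathbb P\;}\tau(h_l)$ for each fixed $l\in{\sf L}_0$, and more generally the convergence in non-commutative law of the whole family. Reduce a general loop $l$ to a word in the holonomies of the faces of a planar graph that contains $l$ (the ``discrete'' description of a holonomy field): $H_l^{(N)}$ is then a fixed noncommutative polynomial in finitely many \emph{independent} conjugation-invariant $U(N)$-random matrices $U_1,\dots,U_k$ with laws $\nu^{(N)}_{t_1},\dots,\nu^{(N)}_{t_k}$. The same reduction applied to the free field expresses $\tau(h_l)$ via the same polynomial evaluated in \emph{freely} independent unitaries with the limiting spectral laws. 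Now the classical fact that independent conjugation-invariant random matrices are asymptotically free, combined with concentration of the trace of a polynomial in such matrices (the variance of $\frac1N\Tr$ of a fixed $*$-polynomial in independent Haar-type / L\'evy unitaries is $O(N^{-2})$), yields convergence in probability to the free value, which is precisely $\tau(h_l)$. Taking polynomials in several loop-holonomies simultaneously upgrades this to convergence of the full non-commutative law.

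The main obstacle I expect is the first step: building, for each $N$, a conjugation-invariant L\'evy process on $U(N)$ whose time-$t$ marginal converges spectrally to the prescribed free multiplicative L\'evy marginal, \emph{and} for which the family of increments over disjoint faces is genuinely asymptotically free in the matrix sense with the right limit. One must match the full Bercovici--Pata-type data (a jump measure on the circle together with a drift/diffusion part) between the classical and free pictures, and control that the approximation is uniform enough in the area parameter to survive the (stochastically continuous) extension to all of ${\sf L}_0$; the concentration estimate in the third step, while standard for Haar unitaries, also needs to be established for these more general L\'evy-distributed unitary matrices.
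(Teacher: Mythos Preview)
Your overall strategy matches the paper's: (i) associate to the free field its free unitary L\'evy process, (ii) approximate it by conjugation-invariant L\'evy processes on $U(N)$ (the paper cites \cite{Cebron2015b} for exactly this), (iii) build the $U(N)$-fields from these via the machinery of \cite{Holonomy,Levy2010}, and (iv) prove convergence loop by loop.

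The genuine gap is in your step~3. Your sentence ``reduce a general loop $l$ to a word in the holonomies of the faces of a planar graph that contains $l$'' is only valid for loops that actually lie in some finite planar graph --- piecewise affine loops, essentially. A generic rectifiable $l\in{\sf L}_0$ does \emph{not} sit in any finite graph, so your argument as written establishes the convergence only on the dense subset ${\sf L_{aff}}$. To pass to all of ${\sf L}_0$ you face an interchange of limits: take $l_n\to l$ in ${\sf L_{aff}}$; you know $\tfrac{1}{N}\Tr(H_{l_n}^{(N)})\to \Phi(l_n)$ as $N\to\infty$ for each $n$, and $H_{l_n}^{(N)}\to H_l^{(N)}$ as $n\to\infty$ for each $N$ by stochastic continuity, but neither uniformity is free. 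You flag this in your last paragraph as an obstacle, but you do not resolve it, and the concentration bound you propose (variance $O(N^{-2})$ for polynomials in L\'evy unitaries) is not established for general jump parts and in any case would give uniformity in the wrong direction.

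The paper's resolution is different from what you sketch and is the technical heart of the proof. Rather than a concentration estimate (uniform in $n$ as $N\to\infty$), the paper proves uniformity in $N$ as $n\to\infty$:
\[
\sup_{N}\ \mathbb{E}\Bigl[\bigl|\tfrac{1}{N}\Tr(H_{l_n}^{(N)})-\tfrac{1}{N}\Tr(H_l^{(N)})\bigr|\Bigr]\ \underset{n\to\infty}{\longrightarrow}\ 0.
\]
This comes from an $N$-uniform estimate $d_N(1,H_s^{(N)})\le K\sqrt{t}$ for simple loops $s$ of area $t$, where $d_N$ is the $L^2$ metric on $L^1(\Omega,U(N))$; feeding this into L\'evy's extension bound (Theorem~3.3.1 of \cite{Levy2010}) gives $d_N(H_l^{(N)},H_{D_n(l)}^{(N)})\le K\,\ell(l)^{3/4}(\ell(l)-\ell(D_n(l)))^{1/4}$ with $K$ independent of $N$. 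The $N$-independence of $K$ is exactly why the approximating L\'evy processes must be chosen so that $\mathbb{E}[\tfrac{1}{N}\Tr(Y_t^{(N)})]=\tau(y_t)$ for all $N$ (not just asymptotically), which yields $1-\tfrac{1}{N}\mathbb{E}[\Re\Tr(Y_t^{(N)})]\le Ct$ uniformly. This is a constraint on step~1 that your outline does not impose.
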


This theorem is a consequence of a complete parametrisation of all free planar Markovian holonomy fields that we will explain now. A Lévy measure on the unit circle ${\mathbb{U}}$ is a measure $v$ on ${\mathbb{U}}$ such that $v(\{1\})=0$ and $\int _{\mathbb{U}}(\Re(\zeta)-1) dv(\zeta)<\infty$.

\begin{theorem} \label{Characterization Levy}
The non-commutative distributions of free planar Markovian holonomy fields are in one-to-one correspondence with the set of triplets $(\alpha,b,v)$, where $\alpha\in \mathbb{R}$, $b\geq 0$ and $v$ is a Lévy measure on the unit circle.
\end{theorem}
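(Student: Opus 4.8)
The plan is to reduce the non-commutative distribution of a free planar Markovian holonomy field to a single weakly continuous multiplicative free convolution (i.e. $\boxtimes$-) semigroup of probability measures on the unit circle $\mathbb{U}$, and then to invoke the free multiplicative Lévy–Khintchine classification of such semigroups. Let $(h_l)_{l\in{\sf L}_0}$ be a free planar Markovian holonomy field. By axiom~1 together with the standard fact that any two simple loops based at $0$ enclosing regions of equal area can be mapped to one another by an area-preserving Lipschitz homeomorphism of the plane (see \cite{Levy2010,Holonomy}), the non-commutative law of $h_l$ for a simple loop $l$ depends only on the area $t$ of ${\sf Int}(l)$; since $h_l$ is unitary this is the law of a probability measure $\mu_t$ on $\mathbb{U}$. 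Axiom~3, applied to loops shrinking to the constant loop (whose holonomy is $1$), gives $\mu_0=\delta_1$ together with weak continuity of $t\mapsto\mu_t$. For the semigroup law, cut a simple loop $l$ enclosing area $s+t$ by a simple arc joining $0$ to a point of $l$ through ${\sf Int}(l)$: this produces two simple loops $a_1,a_2$ based at $0$, enclosing disjoint regions of areas $s$ and $t$, with $h_l=h_{a_2}h_{a_1}$ by \eqref{eq1} and~\eqref{eq2} (erasing the backtrack along the cutting arc); axiom~2 makes $h_{a_1},h_{a_2}$ freely independent with laws $\mu_s,\mu_t$, hence $\mu_{s+t}=\mu_s\boxtimes\mu_t$.

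Next I would show that the non-commutative distribution of the entire process $(h_l)_{l\in{\sf L}_0}$ is a deterministic function of $(\mu_t)_{t\ge 0}$, and conversely that every such semigroup is realised by a field. For the first (injectivity), given finitely many loops $l_1,\dots,l_n$ one uses axioms~1 and~3 to reduce to the case where they are all drawn on a fixed fine planar graph $\mathbb{G}$, and then expresses $\tau$ of an arbitrary word in the $h_{l_i}$ in terms of the holonomies around the bounded faces of $\mathbb{G}$; these face holonomies are not jointly free, but peeling the faces off one at a time, starting from the unbounded one, and applying axiom~2 at each step introduces at each stage a free factor with law $\mu_t$ ($t$ the area of the removed face), so the word can be computed inductively. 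For the converse (surjectivity), one starts from an arbitrary weakly continuous $\boxtimes$-semigroup on $\mathbb{U}$, places on the faces of each planar graph freely independent copies — suitably amalgamated along the common sub-holonomies — of the associated multiplicative free Lévy process, checks that these choices are consistent under refinement of the graph, and verifies axioms~1--3 for the resulting process; this construction is the content of Section~\ref{Sec:classification}.

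It then remains to observe that weakly continuous $\boxtimes$-semigroups $(\mu_t)_{t\ge 0}$ on $\mathbb{U}$ with $\mu_0=\delta_1$ are in bijection with triplets $(\alpha,b,v)$, $\alpha\in\mathbb{R}$, $b\ge 0$, $v$ a Lévy measure on $\mathbb{U}$ in the sense defined above: this is the free multiplicative analogue of the Lévy–Khintchine parametrisation of a convolution semigroup on the compact group $\mathbb{U}$, with $\alpha$ the drift, $b$ the free Brownian part and $v$ the Lévy measure (the master field being the purely diffusive case $v=0$). Composing this bijection with the one of the previous paragraphs yields the asserted one-to-one correspondence. I expect the main obstacle to be the middle step: the reduction of an arbitrary loop to graph form and the inductive ``peeling'' computation, and above all the construction in the surjectivity direction, where the face holonomies must be organised into a coherent system of amalgamated free products and the resulting object shown to satisfy all three axioms; the circle Lévy–Khintchine classification, while not trivial, can be treated as known input.
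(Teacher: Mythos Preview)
Your proposal is correct and follows essentially the same route as the paper: reduce the field to the free unitary L\'evy process obtained from a nested family of simple loops (Proposition~\ref{loopprocess}), invoke the Bercovici--Voiculescu/free L\'evy--Khintchine parametrisation of such processes by triplets $(\alpha,b,v)$ (Theorem~\ref{charact triplet}), and for surjectivity run the graph-by-graph construction of Section~\ref{Sec:classification} (Theorem~\ref{th:injectivlevy}).

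One small comment on your injectivity sketch. Your ``peeling'' picture is right in spirit, but you are making it harder than necessary when you say the face holonomies are not jointly free. For an arbitrary free planar Markovian holonomy field and any basis $(l_1,\dots,l_k)\in\mathcal{F}(\mathbb{G})$, the $h_{l_i}$ \emph{are} jointly free with marginals $\mu_{{\sf dx}(F_i)}$: the lasso loops share tree edges, but using axiom~3 one shrinks the tree paths so that the lassos converge to simple loops based at $0$ with pairwise disjoint interiors, to which axiom~2 applies directly. The paper packages this step by citing \cite[Theorem~22]{Holonomy}; your inductive peeling is a valid alternative, but once you know the basis holonomies are free the joint law is immediate without any recursion. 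Apart from this, your outline matches the paper's proof.
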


This parametrisation comes with a detailed construction of a free planar Markovian holonomy field for each triplet $(\alpha,b,v)$. The master field (up to a drift and a scaling of the area) is in particular the non-commutative distribution of the free planar Markovian holonomy field associated with $(\alpha,b,0)$, and we are able to provide the following property of the master field, which turns out to be quite unique. It requires the following notion of $L^\infty$-seminorm $\|\cdot\|_{L^\infty(\tau)}$ of a non-commutative probability space $(\mathcal{A},\tau)$: it is the seminorm with values in $[0,\infty]$ defined, for all $a\in \mathcal{A}$, by
\begin{align}
\label{eq:infnorm}
\|a\|_{L^\infty(\tau)}=\lim_{p\to \infty}\tau((a^*a)^p)^{1/2p}.
\end{align}
When considering non-commutative probability spaces which are von Neumann algebras with faithful, normal and tracial state $\tau$, it coincides with the operator norm. 
\begin{theorem} 
\label{th:infnorm}
A free planar Markovian holonomy field $\left(h_l\right)_{l\in {\sf L}_0}$ is the master field (up to a drift and a scaling of the area) if and only if it is continuous for the $L^\infty$-seminorm, in the sense that $\|h_{l_n}-h_{l}\|_{L^\infty(\tau)}$ tends to  $0$ whenever $(l_n)_{n\in \mathbb{N}}$ converges to $l$ in ${\sf L}_0$.
\end{theorem}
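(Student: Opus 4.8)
The plan is to obtain both implications from the classification of Theorem~\ref{Characterization Levy}. Let $(\alpha,b,v)$ be the triplet attached to the free planar Markovian holonomy field $(h_l)_{l\in{\sf L}_0}$; then the holonomy $h_l$ of a simple loop $l$ enclosing an area $t$ has non-commutative distribution $\mu_t$, the time-$t$ marginal of the free multiplicative Lévy process on the unitaries generated by $(\alpha,b,v)$, and being ``the master field up to a drift and a scaling of the area'' means precisely $v=0$. Two elementary facts will be used repeatedly: first, since $\|\cdot\|_{L^\infty(\tau)}$ is the operator norm in the von Neumann algebra generated by $(\mathcal A,\tau)$, it is submultiplicative and satisfies $\|ux\|_{L^\infty(\tau)}=\|x\|_{L^\infty(\tau)}$ for $u$ unitary, so a telescoping argument gives $\|\varepsilon_1\cdots\varepsilon_m-1\|_{L^\infty(\tau)}\le\sum_{j=1}^m\|\varepsilon_j-1\|_{L^\infty(\tau)}$ for unitaries $\varepsilon_j$; second, $\|h_{l_n}-h_l\|_{L^\infty(\tau)}=\|h_{l_n l^{-1}}-1\|_{L^\infty(\tau)}$, which (together with continuity of concatenation) reduces $L^\infty$-continuity to continuity at loops with trivial holonomy.

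For the implication ``master field $\Rightarrow$ $L^\infty$-continuous'' I may, after stripping the drift (which multiplies each $h_l$ by the scalar $e^{i\alpha c(l)}$, with $c$ the algebraic area of $l$, continuous in $l$) and rescaling the area, assume $(h_l)$ is the standard master field. The content of this implication is the operator-norm continuity of the master field, which I would either invoke from \cite{Levy2011,Anshelevich2012a} or re-derive from the construction of Section~\ref{Sec:classification}; the mechanism is that for a simple loop of area $t$ the law $\mu_t$ is that of the free unitary Brownian motion at time $t$, whose spectrum is an arc of $\mathbb U$ around $1$ of angular radius $O(\sqrt t)$ (Biane), so $\|h_l-1\|_{L^\infty(\tau)}=O(\sqrt t)$, and that for a general loop one realizes it in a fine graph, writes its holonomy as a word in the (freely independent) face-holonomies, and applies the telescoping bound above. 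The hard part here --- the real substance of the cited results --- is uniformity: to realize $l_n$ and $l$ simultaneously in a single graph with a controlled number of edge-crossings as $l_n\to l$, so that the resulting bound genuinely tends to $0$; the naive count of crossings is not enough.

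For the converse I argue by contraposition: assuming $v\neq 0$, I produce a convergent sequence of loops along which the holonomies do not converge in $\|\cdot\|_{L^\infty(\tau)}$. As $v(\{1\})=0$, one can fix $\delta>0$ with $\lambda:=v(\{\zeta\in\mathbb U:|\zeta-1|\ge\delta\})\in(0,\infty)$, the finiteness coming from $1-\Re\zeta\ge\tfrac12\delta^2$ on that set and $\int_{\mathbb U}(1-\Re\zeta)\,dv<\infty$. Let $l_n$ be a simple loop through $0$ enclosing area $1/n$; it converges in ${\sf L}_0$ to the constant loop $c$, for which $h_c=1$. Since the law of $h_{l_n}$ is $\mu_{1/n}$ and $2-\zeta-\bar\zeta=|1-\zeta|^2$ on $\mathbb U$,
\[
\|h_{l_n}-1\|_{L^\infty(\tau)}^2=\lim_{p\to\infty}\Big(\int_{\mathbb U}|1-\zeta|^{2p}\,d\mu_{1/n}(\zeta)\Big)^{1/p}=\max\{|1-\zeta|^2:\zeta\in\operatorname{supp}\mu_{1/n}\},
\]
so everything reduces to the claim that $\operatorname{supp}\mu_s\cap\{|\zeta-1|\ge\delta\}\neq\varnothing$ for every $s>0$. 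I would prove it by splitting the generator into the part carried by $\{|\zeta-1|\ge\delta\}$ and its remainder, so that $u_s$ becomes a product of two freely independent unitaries, one of which is a free multiplicative compound Poisson element with Lévy measure $v(\cdot\cap\{|\zeta-1|\ge\delta\})$ run for time $s$; such an element has, for every $s>0$, support containing that of its jump law. Equivalently, $\tfrac1s\mu_s$ converges weakly away from $1$ to $v$ as $s\to 0$, so $\mu_s(\{|\zeta-1|\ge\delta\})>0$ for all small $s$ --- a short-time asymptotics that I would extract from the construction in Section~\ref{Sec:classification}, and which can alternatively be read from the $U(N)$-approximations of Theorem~\ref{Existence Approximation}, where a jump part of rate proportional to $N$ keeps a macroscopic proportion of eigenvalues at distance $\gtrsim\delta$ from $1$ at every positive time. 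Granting the claim, $\|h_{l_n}-1\|_{L^\infty(\tau)}\ge\delta$ for all $n$ although $h_{l_n}$ would have to converge to $h_c=1$, so $(h_l)$ is not $L^\infty$-continuous. I expect this support statement to be the main obstacle, precisely because it is invisible to any finite set of moments: every moment of $\mu_s$ converges to the corresponding moment of $\delta_1$ as $s\to0$, so one has to locate the very edge of the spectrum rather than its bulk.
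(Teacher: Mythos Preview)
Your converse direction is essentially the paper's argument. The paper also takes simple loops $l(t)$ of area $t\to 0$ and shows that when $v\neq 0$ the support of $\mu_t$ cannot shrink to $\{1\}$; the key ingredient is exactly the short-time asymptotic you invoke, in the form of \cite[Corollary 3.8]{Cebron2015b}: $n(1-\Re x)\,d\mu_{1/n}(\omega_n x)$ converges weakly to $(1-\Re x)\,dv+\tfrac{b}{2}\delta_1$. This gives directly that if $\operatorname{supp}\mu_{1/n}\to\{1\}$ then $v=0$, without your detour through a compound Poisson splitting or the $U(N)$-approximation. So that half of your proposal is fine, though you should quote the precise asymptotic rather than gesture at several possible proofs.

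The forward direction, however, has a genuine gap exactly where you flag it: you correctly extract from Biane that $\|h_l-1\|_{L^\infty(\tau)}=O(\sqrt t)$ for simple $l$ of area $t$, but your plan to pass to general loops by ``realizing $l_n$ and $l$ in a single graph and telescoping over face-holonomies'' does not close, and you say so yourself (``the naive count of crossings is not enough''). The paper bypasses this entirely. It puts the bi-invariant pseudometric $d(x,y)=\|x-y\|_{L^\infty(\tau)}$ on the group $\mathcal{A}_u$ and applies L\'evy's extension theorem (Theorem~\ref{main}, i.e.\ Theorem 3.3.1 of \cite{Levy2010}) \emph{with this metric}: once the bound $d(1,h_l)\le K\sqrt{{\sf dx}(D)}$ is known for simple affine loops, that theorem yields a unique $d$-continuous multiplicative extension to all of ${\sf L}_0$, which must coincide with $(h_l)_{l\in{\sf L}_0}$. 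The uniformity you were worried about is already packaged inside that theorem (via the dyadic approximation estimate recalled as Proposition~\ref{main bound}), so you never have to count crossings by hand. This is the missing idea in your proposal.
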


\section{Classification and construction}\label{Sec:classification}

\subsection{Free planar Markovian holonomy fields and free Lévy processes}
Let $(h_l)_{l\in {\sf L}_0}$ be a free planar Markovian holonomy field. For all continuous process $(l(t))_{t\geq 0}$ of simple loops of ${\sf L}_0$ such that for any $0 \leq s \leq  t$, ${\sf Int}(l(s))\subset {\sf Int}(l(t))$ and such that $dx({\sf Int}(l(t)))=t$, the forthcoming Proposition~\ref{loopprocess} says that the process $(h_{l(t)})_{t\geq 0}$ is a free unitary Lévy process (first considered in \cite{Biane1998}) in the following sense.

\begin{definition}Let $(\mathcal{A},\tau)$ be a non-commutative probability space and $\mathcal{A}_u$ be the group of unitaries of $\mathcal{A}$. A \emph{free unitary Lévy process} is a process $(a_t)_{t\geq 0}$ in $\mathcal{A}_u$ such that:
\begin{enumerate}
\item for $s\leq t$, the non-commutative laws of $a_s^{-1}a_t$ and of $a_{t-s}$ are the same,
\item For $s< t$ the families $\{a_s^{-1}a_t\}$ and $\{a_u:u\leq s\}$ are freely independent,
\item the non-commutative law of $a_t$ converges to $1$ when $t$ tends to $0$.
\end{enumerate}
\end{definition}

Thanks to the invariance by area-preserving Lipschitz homeomorphisms, the non-commutative law of the free unitary Lévy process $(h_{l(t)})_{t\geq 0}$ obtained that way does not depend of the particular choice of such a family $(l(t))_{t\geq 0}$. In fact, following the proof of \cite[Theorem 22]{Holonomy}, we obtain the following result.

\begin{proposition} Let $\left(l(t)\right)_{t\geq 0}$ be a continuous process of simple loops of ${\sf L}_0$ such that the interiors of the increments are disjoint and such that $dx(Int(l(t)))=t$. For all free planar Markovian holonomy field $(h_l)_{l\in {\sf L}_0}$ on $(\mathcal{A},\tau)$, $(h_{l(t)})_{t\geq 0}$ is a free unitary Lévy process, and the non-commutative law of $(h_l)_{l\in {\sf L}_0}$ is uniquely determined by the non-commutative law of the free unitary Lévy process $(h_{l(t)})_{t\geq 0}$.\label{loopprocess}
\end{proposition}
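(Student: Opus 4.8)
The plan is to verify the three axioms of a free unitary L\'evy process for $(h_{l(t)})_{t\geq 0}$ one at a time, then to argue that any element $h_l$ can be expressed as a non-commutative polynomial in the increments of such a process, which will give the final uniqueness claim.

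\medskip

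\emph{Step 1: the increment law depends only on the area.} Fix $0\leq s\leq t$. Both $l(s)^{-1}l(t)$ and $l(t-s)$ are (after reduction) loops enclosing a simple domain of area $t-s$. Since the plane admits area-preserving Lipschitz homeomorphisms sending any simple loop of enclosed area $t-s$ onto any other, property~1 of Definition~\ref{def:Levymaster} (invariance by area-preserving Lipschitz homeomorphisms) shows that $h_{l(s)^{-1}l(t)}$ and $h_{l(t-s)}$ have the same non-commutative law. By multiplicativity, $h_{l(s)^{-1}l(t)}=h_{l(t)}\,h_{l(s)}^{-1}$; combined with gauge-type symmetry one rewrites this in the form $a_s^{-1}a_t$ with $a_t:=h_{l(t)}$, matching axiom~1 (one has to be slightly careful with the reversed order in \eqref{eq1}, but the traciality of $\tau$ makes the distributions of $h_{l(t)}h_{l(s)}^{-1}$ and $h_{l(s)}^{-1}h_{l(t)}$ coincide).

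\emph{Step 2: free independence of increments from the past.} For $s<t$, the increment is supported in the closed annular region $\overline{{\sf Int}(l(t))}\setminus {\sf Int}(l(s))$, while $\{h_{l(u)}:u\leq s\}$ is supported in $\overline{{\sf Int}(l(s))}$. These two regions have disjoint interiors, so after expressing both as generated by holonomies of loops inside two disjoint simple domains (using invariance to split the annulus if necessary, exactly as in the proof of \cite[Theorem 22]{Holonomy}), property~2 of Definition~\ref{def:Levymaster} yields free independence, which is axiom~2. The small technical point here is to realize the annulus-supported algebra as living inside the closure of the interior of a single simple loop; this is handled by the same cutting argument used in the classical setting.

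\emph{Step 3: continuity at $0$ and uniqueness.} As $t\to 0$, the loop $l(t)$ converges in ${\sf L}_0$ to the constant loop at $0$, whose holonomy is $1$; property~3 of Definition~\ref{def:Levymaster} then gives convergence in non-commutative distribution of $h_{l(t)}$ to $1$, which is axiom~3. Finally, for the uniqueness statement, one invokes the structure of planar loops: any loop $l\in {\sf L}_0$ can be obtained, up to area-preserving homeomorphism, from a finite concatenation of lassos whose loop-parts are of the form $l(t_i)^{-1}l(t_{i-1})$ for a suitable subdivision, so that $h_l$ is a word in the increments of $(h_{l(t)})_{t\geq 0}$ together with conjugating unitaries; the joint law of all these is determined by the increment law via Steps~1--2. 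Hence the non-commutative law of the whole field $(h_l)_{l\in{\sf L}_0}$ is determined by that of the free unitary L\'evy process. The independence of the construction from the choice of the family $(l(t))_{t\geq 0}$ follows the same way, since two such families are related by an area-preserving homeomorphism on each finite collection of times.

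\medskip

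The main obstacle I expect is Step~2: turning the ``annulus-supported'' algebra into one attached to a single simple domain so that Definition~\ref{def:Levymaster}'s independence property (stated only for interiors of simple loops) applies verbatim. This is exactly the combinatorial heart of \cite[Theorem 22]{Holonomy}, and it requires a careful use of invariance by area-preserving homeomorphisms to cut and rearrange the annular region; the rest is bookkeeping with multiplicativity and traciality.
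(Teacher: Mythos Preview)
Your verification that $(h_{l(t)})_{t\geq 0}$ is a free unitary L\'evy process follows the same route as the paper, which simply records that axioms 1--3 are read off Definition~\ref{def:Levymaster}; your Steps~1--2 unpack this, and your identification of the annulus issue in Step~2 together with the reference to \cite[Theorem~22]{Holonomy} matches the paper exactly.

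There is, however, a genuine gap in your uniqueness argument. The claim that \emph{any} loop $l\in{\sf L}_0$ is, up to an area-preserving homeomorphism, a finite concatenation of lassos with loop-parts $l(t_i)^{-1}l(t_{i-1})$ is only valid for loops drawn on a finite planar graph (equivalently, for piecewise affine loops or loops with finitely many self-intersections). A general rectifiable loop need not admit such a finite decomposition, and there is no single area-preserving homeomorphism carrying it to one. The paper's proof separates this into two distinct steps: first, invariance by area-preserving homeomorphisms determines the joint distribution of any finite family of \emph{simple} loops (this is where the lasso/graph combinatorics of \cite[Theorem~22]{Holonomy} enter and where your argument is correct); second, the continuity property~3 of Definition~\ref{def:Levymaster} is used to pass from this dense class to \emph{all} rectifiable loops by approximation. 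Your Step~3 invokes continuity only for $t\to 0$ and then skips directly to a lasso decomposition of an arbitrary $l$; you need instead to approximate $l$ by loops in a graph, apply the lasso decomposition there, and use property~3 to take the limit.
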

\begin{proof}The fact that $(h_{l(t)})_{t\geq 0}$ is a free unitary Lévy process is a direct consequence of Definition~\ref{def:Levymaster}. Now, let us explain why the non-commutative law $(h_l)_{l\in {\sf L}_0}$ is uniquely determined by the non-commutative law of the free unitary Lévy process $(h_{l(t)})_{t\geq 0}$. The arguments are those of \cite[Theorem 22]{Holonomy}, where the law of a process is replaced by the non-commutative distribution of a non-commutative process. For the reader convenience, we recall the two principal steps of the proof. First, the invariance by diffeomorphisms allows to deduce the non-commutative distribution of any finite collection of simple loops based at $0$. Secondly, the continuity condition of Definition~\ref{def:Levymaster} implies that this knowledge is sufficient to deduce the non-commutative distribution of any finite collection of loops, by an approximation argument.
\end{proof}

As a consequence, the classification of the non-commutative laws of free planar Markovian holonomy fields is equivalent to the classification of the non-commutative laws of free unitary Lévy processes. Here again, the characterisation of the non-commutative laws of free unitary Lévy processes has already been established in \cite{Bercovici1992}. We present here the (equivalent) description given in \cite{Cebron2015b}, when the characteristic pair is replaced by the free characteristic triplets. It allows to parametrize the free unitary Lévy processes (and so free planar Markovian holonomy fields), and will be useful in the rest of the paper.

Let $(a_t)_{t\geq 0}$ be a free unitary Lévy process in $(\mathcal{A},\tau)$.  The series
$\phi_{a_t}(z)=\sum_{n\geq 1}z^n \tau(a_t^n)$
is invertible near the origin, and we set 
$S_{a_t}(z)=\frac{1+z}{z}\phi_{a_t}^{-1}(z).$
A consequence of \cite{Bercovici1992} (see \cite{Cebron2015b} for the precise link) is the existence of $\alpha\in \mathbb{R}$, $b\geq 0$ and $v$ a Lévy measure on the unit circle ${\mathbb{U}}$ (i.e. a measure such that $v(\{1\})=0$ and $\int_{\mathbb{U}} (\Re(\zeta)-1) dv(\zeta)<\infty$) uniquely determined by the following identity
$$S_{a_t}(z)=\exp \left(-i\alpha t+(bz+b/2)t+t\int_{\mathbb{U}} i\Im(\zeta)+\frac{1-\zeta}{1+z(1-\zeta)}d v (\zeta)\right).$$
We call $(\alpha,b,v)$ the characteristic triplet of the Lévy process $(a_t)_{t\geq 0}$. Every $\alpha\in \mathbb{R}$, $b\geq 0$ and Lévy measure $v$ is the characteristic triplet of some free unitary Lévy process. Conversely, the non-commutative law of a free unitary Lévy process is determined by its characteristic triplet. For example, we have an explicit description of the first moment of the process, which turns out to be useful for the rest of the paper.

\begin{proposition}
\label{momentlibre}
Let $(a_t)_{t\geq 0}$ be a free unitary Lévy process with characteristic triplet $(\alpha,b,v)$. We have
$ \tau(a_t)= \exp\left(i\alpha t-bt/2+t\int_{\mathbb{U}} (\Re(\zeta)-1) dv(\zeta)\right).$
\end{proposition}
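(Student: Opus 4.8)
The plan is to read off the first moment $\tau(a_t)$ directly from the explicit formula for the $S$-transform by evaluating it at $z=0$. Recall that $\phi_{a_t}(z)=\tau(a_t)z+O(z^2)$ and that, as recalled above, $\phi_{a_t}$ is invertible near the origin; in particular $\tau(a_t)\neq 0$ and its compositional inverse satisfies $\phi_{a_t}^{-1}(z)=\tau(a_t)^{-1}z+O(z^2)$. Consequently $S_{a_t}(z)=\frac{1+z}{z}\phi_{a_t}^{-1}(z)$ extends holomorphically across $z=0$, the apparent pole being cancelled by the simple zero of $\phi_{a_t}^{-1}$, and
$$S_{a_t}(0)=\lim_{z\to 0}\frac{1+z}{z}\phi_{a_t}^{-1}(z)=\frac{1}{\tau(a_t)}.$$

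The second step is to evaluate the given closed form for $S_{a_t}(z)$ at $z=0$. The only point deserving attention is the integrand $i\Im(\zeta)+\frac{1-\zeta}{1+z(1-\zeta)}$: at $z=0$ it becomes $i\Im(\zeta)+(1-\zeta)=1-\Re(\zeta)$, so the a priori non-integrable imaginary contributions cancel pointwise and one is left with $\int_{\mathbb{U}}(1-\Re(\zeta))\,dv(\zeta)$, which is finite precisely because $v$ is a L\'evy measure. Hence
$$S_{a_t}(0)=\exp\left(-i\alpha t+\frac{b}{2}\,t+t\int_{\mathbb{U}}(1-\Re(\zeta))\,dv(\zeta)\right),$$
and taking the reciprocal, using $\tau(a_t)=S_{a_t}(0)^{-1}$, yields exactly $\tau(a_t)=\exp\left(i\alpha t-bt/2+t\int_{\mathbb{U}}(\Re(\zeta)-1)\,dv(\zeta)\right)$, as claimed.

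I do not expect a genuine obstacle here: everything reduces to a substitution $z=0$ and a one-line simplification of the integrand. The only things to verify carefully are that all the objects involved ($\phi_{a_t}^{-1}$, the $S$-transform and its reciprocal, and the integral in the characteristic triplet) are well defined at $z=0$ — which follows from $\phi_{a_t}$ being invertible near the origin (so $\tau(a_t)\neq 0$) and from the L\'evy measure condition $\int_{\mathbb{U}}(\Re(\zeta)-1)\,dv(\zeta)<\infty$ — and that the cancellation $i\Im(\zeta)+(1-\zeta)=1-\Re(\zeta)$ is used to make the resulting integral absolutely convergent.
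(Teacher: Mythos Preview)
Your argument is correct and follows essentially the same route as the paper: both compute $S_{a_t}(0)$ from the explicit formula and relate it to the first moment via $\phi_{a_t}(z)=\tau(a_t)z+O(z^2)$. Your version is in fact more careful than the paper's one-line proof, which states that $\tau(a_t)$ ``can be read as $S_{a_t}(0)$'' without making explicit (as you do) that the relation is $\tau(a_t)=S_{a_t}(0)^{-1}$, and without spelling out the cancellation $i\Im(\zeta)+(1-\zeta)=1-\Re(\zeta)$ that makes the integral at $z=0$ absolutely convergent.
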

\begin{proof}It suffices to remark that $\tau(a_t)$ is the coefficient of $z$ in $\phi_{a_t}(z)$, and so can be read as $S_{a_t}(0)$.
\end{proof}

Combining the previous characterization of free unitary Lévy processes and Proposition~\ref{loopprocess}, we get a weaker version of Theorem~\ref{Characterization Levy} which can be formulated as follows.

\begin{theorem}Let $\left(l(t)\right)_{t\geq 0}$ be a continuous process of simple loops of ${\sf L}_0$ such that the interiors of the increments are disjoint and such that $dx(Int(l(t)))=t$.\label{charact triplet}

The non-commutative distribution of a free planar Markovian holonomy field $(h_l)_{l\in {\sf L}_0}$ on $(\mathcal{A},\tau)$ is uniquely determined by the characteristic triplet $(\alpha,b,v)$ of the free unitary Lévy process $(h_{l(t)})_{t\geq 0}$. Moreover, this characteristic triplet does not depend of the choice of $(h_{l(t)})_{t\geq 0}$.
\end{theorem}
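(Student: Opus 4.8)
The plan is to combine the two ingredients that precede the statement: Proposition~\ref{loopprocess}, which reduces the non-commutative law of a free planar Markovian holonomy field $(h_l)_{l\in {\sf L}_0}$ to the non-commutative law of the single free unitary L\'evy process $(h_{l(t)})_{t\geq 0}$, and the characterization recalled from \cite{Bercovici1992,Cebron2015b}, which says that the non-commutative law of a free unitary L\'evy process is uniquely encoded by its characteristic triplet $(\alpha,b,v)$. Chaining these two facts gives exactly the first assertion of Theorem~\ref{charact triplet}.

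More concretely, I would argue as follows. First, fix one continuous process of simple loops $(l(t))_{t\geq 0}$ as in the statement (area of the interior equal to $t$, disjoint interiors of increments). By Proposition~\ref{loopprocess}, the process $(h_{l(t)})_{t\geq 0}$ is a free unitary L\'evy process and the non-commutative law of $(h_l)_{l\in {\sf L}_0}$ is uniquely determined by the non-commutative law of $(h_{l(t)})_{t\geq 0}$. Next, apply the result of \cite{Bercovici1992} in the form given via $\phi_{a_t}$ and $S_{a_t}$ just above: the non-commutative law of $(h_{l(t)})_{t\geq 0}$ is determined by, and determines, its characteristic triplet $(\alpha,b,v)$. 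Composing the two implications shows that the characteristic triplet $(\alpha,b,v)$ of $(h_{l(t)})_{t\geq 0}$ uniquely determines the non-commutative law of the whole field $(h_l)_{l\in {\sf L}_0}$.

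It then remains to check that the triplet does not depend on the choice of $(l(t))_{t\geq 0}$. Here I would invoke the invariance by area-preserving Lipschitz homeomorphisms (property 1 of Definition~\ref{def:Levymaster}), exactly as noted in the paragraph preceding Proposition~\ref{loopprocess}: given two such families $(l(t))_{t\geq 0}$ and $(l'(t))_{t\geq 0}$, for each finite set of times one can find an area-preserving Lipschitz homeomorphism of the plane sending the loops $l(t)$ to the loops $l'(t)$, so the two free unitary L\'evy processes have the same finite-dimensional non-commutative distributions, hence the same non-commutative law, hence the same characteristic triplet. One should be slightly careful that such a homeomorphism matching a finite increasing family of nested simple loops with prescribed enclosed areas indeed exists; this is the same elementary planar surgery used in the classical theory (cf.\ the proof of \cite[Theorem 22]{Holonomy}), and is the only mildly technical point.

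The main obstacle is therefore not deep: it is purely bookkeeping, namely correctly citing and stitching together Proposition~\ref{loopprocess} and the Bercovici--Pata-type classification, and making sure the area-preserving homeomorphism argument for independence of the triplet is stated at the right level of generality (finitely many loops at a time, then pass to the limit using continuity). Everything substantive has already been established upstream, so the proof is short. Note finally that this theorem is only the ``weaker version'' of Theorem~\ref{Characterization Levy}: it asserts the triplet determines the law and is well defined, but the surjectivity onto all triplets $(\alpha,b,v)$ — i.e.\ the construction of a genuine free planar Markovian holonomy field realizing a prescribed triplet — is deferred, and is exactly what the subsequent construction in this section supplies.
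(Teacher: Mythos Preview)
Your proposal is correct and follows exactly the paper's approach: the theorem is stated there as an immediate consequence of Proposition~\ref{loopprocess} combined with the Bercovici--Pata-type classification of free unitary L\'evy processes by their characteristic triplets, with the independence of the triplet on the choice of $(l(t))_{t\geq 0}$ coming from the area-preserving-homeomorphism invariance noted just before Proposition~\ref{loopprocess}. There is nothing to add.
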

The triplet $(\alpha,b,v)$ given by Theorem~\ref{charact triplet} is called the \emph{characteristic triplet of the free planar Markovian holonomy field}. Let us summarize Theorem~\ref{charact triplet} in Table~\ref{table:charac triplet}.

\begin{table}[h]
\centering
\begin{tabular}{ccccc}
Free P.M.H.F. & $\rightarrow$ & Free unitary Lévy processes & $\leftrightarrow$ & Characteristic triplets \\ 
$(h_{l})_{l\in {\sf L}_0}$ & $\rightarrow$ & $(h_{l(t)})_{t\geq 0}$ & $\leftrightarrow$ & $(\alpha,b,v)$ \\ 
\end{tabular} \caption{Characteristic triplet of a free planar Markovian holonomy field (free P.M.H.F.)}
\label{table:charac triplet}
\end{table}

We will see in next section that the arrow from free planar Markovian holonomy fields to free unitary Lévy processes can be reversed: for all  charateristic triplet $(\alpha,b,v)$, there exists a free planar Markovian holonomy field whose characteristic triplet is $(\alpha,b,v)$. For now, we turn our attention to a particular free planar Markovian holonomy field, which is known under the name of master field.

\begin{definition}Let $(\mathcal{A},\tau)$ be a non-commutative probability space, $\alpha \in \mathbb{R}$ and $b>0$. A \emph{master field} with unit volume $b$ and drift $\alpha$ is a free planar Markovian holonomy field whose characteristic triplet is $(\alpha,b,0)$.
\end{definition}

In case that $v=0$, the free unitary Lévy process whose characteristic triplet is $(\alpha,b,0)$ is a \emph{free unitary Brownian motion}, a particular process introduced in \cite{Biane1997a}, with some scaled speed given by $b$ and a drift given by $\alpha$. Thus, a master field is a free planar Markovian holonomy field whose free unitary Lévy process given by Proposition~\ref{loopprocess} is a free unitary Brownian motion. Here is a new characterization of free unitary Brownian motions among free unitary Lévy processes which turns out to be useful for us.

\begin{proposition}Let $(a_t)_{t\geq 0}$ be a free unitary Lévy process with characteristic triplet $(\alpha,b,v)$. The following sentences are equivalent:\label{prop:controlsqrt}
\begin{itemize}
\item the $L^\infty$-seminorm $\|a_t-1\|_{L^\infty(\tau)}$ tends to $0$ as $t$ tends to $0$;
\item the process $(a_t)_{t\geq 0}$ is a free unitary Brownian motion with speed $b$ and drift $\alpha$.
\end{itemize}
Moreover, when the above sentences are true, we have the following estimate for $t$ sufficiently small:
\begin{equation}\label{eq:controlsqrt}
\|a_t-1\|_{L^\infty(\tau)}=2\sin(\theta_t/2)\text{, where }\theta_t=|\alpha|t+\sqrt{(4-bt)bt}/2+\arccos(1-bt/2).
\end{equation}
\end{proposition}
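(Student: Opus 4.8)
The plan is to reformulate everything in terms of the spectral distribution $\mu_{a_t}$ of $a_t$, that is, the unique probability measure on $\mathbb{U}$ with $\tau(a_t^n)=\int_{\mathbb{U}}\zeta^n\,d\mu_{a_t}(\zeta)$ for all $n\in\mathbb{Z}$ (it exists by Herglotz's theorem because $\tau$ is positive). Since $(a_t-1)^*(a_t-1)=2-a_t-a_t^*$ is the element attached to the function $\zeta\mapsto|1-\zeta|^2$, formula~\eqref{eq:infnorm} becomes
$$\|a_t-1\|_{L^\infty(\tau)}=\lim_{p\to\infty}\Bigl(\int_{\mathbb{U}}|1-\zeta|^{2p}\,d\mu_{a_t}(\zeta)\Bigr)^{1/2p}=\max\{\,|1-\zeta|:\zeta\in\operatorname{supp}\mu_{a_t}\,\},$$
so the statement amounts to describing, from the characteristic triplet, how fast $\operatorname{supp}\mu_{a_t}$ shrinks onto $\{1\}$ as $t\to 0$.

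For the implication ``free unitary Brownian motion $\Rightarrow$ continuity'', together with the estimate~\eqref{eq:controlsqrt}: when $v=0$, comparing the displayed formula for $S_{a_t}$ with the $S$-transform of the standard free unitary Brownian motion $(u_s)_{s\ge 0}$ of~\cite{Biane1997a} shows that $a_t$ has the same non-commutative distribution as $e^{i\alpha t}u_{bt}$. Biane computed $\mu_{u_s}$ explicitly: for $s<4$ it is supported on the arc $\{e^{i\theta}:|\theta|\le \tfrac12\sqrt{s(4-s)}+\arccos(1-s/2)\}$. Hence, for $t$ small, $\operatorname{supp}\mu_{a_t}$ is that arc rotated by $\alpha t$; since $\max(|x+y|,|x-y|)=|x|+|y|$, its point farthest from $1$ lies at angular distance $\theta_t=|\alpha|t+\tfrac12\sqrt{(4-bt)bt}+\arccos(1-bt/2)$ from $1$, and the first display gives $\|a_t-1\|_{L^\infty(\tau)}=2\sin(\theta_t/2)$, which is~\eqref{eq:controlsqrt} and tends to $0$ as $t\to 0$.

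For the converse I would argue by contraposition: assuming $v\ne 0$, I show $\|a_t-1\|_{L^\infty(\tau)}$ does not tend to $0$. Set $b_t:=(a_t-1)^*(a_t-1)=2-a_t-a_t^*\ge 0$, whose distribution $\nu_{b_t}$ is the push-forward of $\mu_{a_t}$ by $\zeta\mapsto 2-2\Re(\zeta)$, an honest probability measure on $[0,4]$; note $\tau(b_t)>0$ for $t>0$ since then $|\tau(a_t)|<1$ by Proposition~\ref{momentlibre}. By~\eqref{eq:infnorm} one has $\|a_t-1\|_{L^\infty(\tau)}^2=\|b_t\|_{L^\infty(\tau)}=\max\operatorname{supp}\nu_{b_t}$, and the elementary bound $\int x^2\,d\nu_{b_t}\le(\max\operatorname{supp}\nu_{b_t})\int x\,d\nu_{b_t}$ gives $\|a_t-1\|_{L^\infty(\tau)}^2\ge\tau(b_t^2)/\tau(b_t)$, so it suffices to find the leading terms of $\tau(b_t)$ and $\tau(b_t^2)$ as $t\to 0$. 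Proposition~\ref{momentlibre} supplies $\tau(a_t)$, and extracting the coefficient of $z$ in $\bigl(\tfrac{1+z}{S_{a_t}(z)}\bigr)^2$ by Lagrange inversion (one step beyond the proof of Proposition~\ref{momentlibre}) supplies $\tau(a_t^2)$ to first order in $t$. Substituting these into $\tau(b_t)=2-2\Re\tau(a_t)$ and $\tau(b_t^2)=6-8\Re\tau(a_t)+2\Re\tau(a_t^2)$, the order-$1$ terms cancel and one obtains
$$\tau(b_t)=t\Bigl(b+2\!\int_{\mathbb{U}}(1-\Re(\zeta))\,dv(\zeta)\Bigr)+O(t^2),\qquad \tau(b_t^2)=4t\!\int_{\mathbb{U}}(1-\Re(\zeta))^2\,dv(\zeta)+O(t^2).$$
Because $v$ is a Lévy measure with $v(\{1\})=0$ and $v\ne 0$, the integral $\int_{\mathbb{U}}(1-\Re(\zeta))^2\,dv(\zeta)$ is finite (it is $\le 2\int_{\mathbb{U}}(1-\Re(\zeta))\,dv(\zeta)$) and strictly positive, so $\tau(b_t^2)/\tau(b_t)$ tends to a strictly positive constant and $\liminf_{t\to 0}\|a_t-1\|_{L^\infty(\tau)}>0$. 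Combining this with the fact that ``$v=0$'' is exactly the condition that $(a_t)$ be a free unitary Brownian motion with speed $b$ and drift $\alpha$ closes the equivalence.

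The only points needing care are the identity $\|a_t-1\|_{L^\infty(\tau)}=\max_{\operatorname{supp}\mu_{a_t}}|1-\zeta|$, which rests on the distribution of a single unitary being a genuine measure on $\mathbb{U}$ so that $L^{2p}$-norms increase to the $L^\infty$-norm, and the arc bookkeeping producing exactly the three terms of $\theta_t$. I expect the main obstacle to be verifying that the first-order coefficient of $\tau(b_t^2)$ equals $4\int_{\mathbb{U}}(1-\Re(\zeta))^2\,dv$ and, crucially, does not vanish when $v\ne 0$: this non-vanishing --- contrasted with the order-$t^2$ behaviour one gets when $v=0$ --- is precisely what separates the master field from every other free planar Markovian holonomy field.
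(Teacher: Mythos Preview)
Your argument is correct and, for the direction ``$v=0\Rightarrow$ continuity plus the estimate~\eqref{eq:controlsqrt}'', essentially identical to the paper's: both reduce to Biane's description of the support of $\mu_{u_s}$ and the identification $\|a_t-1\|_{L^\infty(\tau)}=\max_{\zeta\in\operatorname{supp}\mu_{a_t}}|1-\zeta|$.

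The converse direction is where you genuinely diverge from the paper. The paper invokes \cite[Corollary~3.8]{Cebron2015b}, which says that $n(1-\Re(x))\,d\mu_{1/n}(\omega_n x)$ converges weakly to $(1-\Re(x))\,dv+\tfrac{b}{2}\delta_1$; if the support of $\mu_t$ shrinks to $\{1\}$, this limit is forced to live at $\{1\}$, hence $v=0$. Your route is instead a self-contained moment computation: bound $\|a_t-1\|_{L^\infty(\tau)}^2$ below by $\tau(b_t^2)/\tau(b_t)$ and show the latter has a nonzero limit when $v\ne 0$. This is more elementary---it uses only the explicit $S$-transform and two terms of Lagrange inversion rather than an external L\'evy--Khintchine-type limit theorem---and it gives the sharper quantitative conclusion $\liminf_{t\to 0}\|a_t-1\|_{L^\infty(\tau)}^2\ge \dfrac{4\int(1-\Re\zeta)^2\,dv}{b+2\int(1-\Re\zeta)\,dv}>0$. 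The paper's argument, on the other hand, is more conceptual: it identifies $v$ directly as (a piece of) the small-time scaling limit of the spectral measures, which explains structurally why a nonzero $v$ obstructs shrinking of the support. One small cosmetic point: Lagrange inversion gives $m_2=\tfrac12[w]\bigl(\tfrac{1+w}{S_{a_t}(w)}\bigr)^2$, so your ``coefficient of $z$'' is $2\tau(a_t^2)$; this does not affect the first-order expansions you state, which I checked are correct.
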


\begin{proof}Let $(\alpha, b, v)$ be the characteristic triplet of $(a_t)_{t\geq 0}$. Recall that $(a_t)_{t\geq 0}$ is a free unitary Brownian motion if and only if $v=0$. Because $a_t$ is unitary, there exists a unique measure $\mu_t$ on $\mathbb{U}$ such that, for all $n\geq 0$, $\tau(a_t^n)=\int_{\mathbb{U}}x^n d\mu_t(x)$. The measure $\mu_t$ is known as the spectral measure of $a_t$ in $(\mathcal{A},\tau)$. Moreover, $\|a_t-1\|_{L^\infty(\tau)}$ is in this case the maximal distance between $1$ and any point in the support of $a_t$. As a consequence, the condition $\|a_t-1\|_{L^\infty(\tau)}$ tends to $0$ as $t$ tends to $0$ is equivalent to the convergence of the support of $\mu_t$ to $\{1\}$ for the Hausdorff distance of sets, as $t$ tends to $0$. Thus, it suffices to prove that the following sentences are equivalent:
\begin{itemize}
\item the support of $\mu_t$ converges to $\{1\}$ for the Hausdorff distance as $t$ tends to $0$;
\item we have $v=0$.
\end{itemize}

Let us first assume that the support of $\mu_t$ converges to $\{1\}$ for the Hausdorff distance of sets, as $t$ tends to $0$. Thanks to \cite[Corollary 3.8]{Cebron2015b}, we know that $n(1-\Re(x))d\mu_{1/n}(\omega_n x)$ converges weakly to $(1-\Re(x))dv+b/2\delta_1$ as $n$ tends to $\infty$, where $\omega_n= \frac{\tau(a_{1/n})}{|\tau(a_{1/n})|}.$ Finally, $(1-\Re(x))dv+b/2\delta_1$ is supported on $\{1\}$ and consequently, $v=0$.

Conversely, let us assume that $v=0$, or equivalently, that $S_{a_t}(z)$ is given by $\exp \left(-i\alpha t+(bz+b/2)t\right).$ \cite[Proposition 10]{Biane1997} gives the data of the support
$$\left\{e^{i\theta}:\theta\in \mathbb{R},|\theta- \alpha t| \leq \sqrt{(4-bt)bt}/2+\arccos(1-bt/2)\right\}$$
of $\mu_t$ for $t$ sufficiently small, from which we deduce that the support of $\mu_t$ converges to $\{1\}$ for the Hausdorff distance of sets as $t$ tends to $0$. Moreover, we deduce \eqref{eq:controlsqrt}
 as the maximal distance between $1$ and any point in the support of $\mu_t$ described above.
\end{proof}

\subsection{Construction}

In this subsection, we are going to show that, given a free Lévy process, one can associate a Free planar Markovian holonomy field: this will allow to reverse the left arrow in Table~\ref{table:charac triplet}.

A very convenient way to define the Yang-Mills field on two dimensional compact surfaces has been described by Thierry Lévy in \cite{Levy2010}. In a first step, it consists in defining it on finite graphs with some refinement compatibility. Then, this discrete definition is extended to all rectifiable loops by some continuity argument. By modifying the first step, F. Gabriel was able to define in \cite{Holonomy} all regular planar Markovian holonomy fields. We will see that this construction turns out to be sufficiently flexible to define also the master field, and more generally the free planar Markovian holonomy fields. 

To emphasize the parallel between the probabilistic construction (planar Markovian holonomy fields) and the non-commutative one (free planar Markovian holonomy fields), we will do them jointly in an abstract framework that encompasses the two notions of law we are dealing with. We will fix : 
\begin{itemize}
\item a group $\Gamma$ endowed with a pseudometric $d$ which is invariant by multiplications and by inversions: for any $h, g_1, g_2 \in \Gamma$, $d(hg_1,hg_2) = d(g_1h,g_2h) = d(g_1,g_2)$ and $d(g_1^{-1}, g_2^{-1}) = d(g_1,g_2)$. From now on, we will denote by $e$ the neutral element of $G$, 
\item a notion of law on $\Gamma$ given as follows.
\end{itemize}

\begin{definition}
\label{def:abstractlaw}
An abstract law is the data for any positive integer $k$ of an equivalence relation on $\Gamma^{k}$ such that for any positive integers $k$ and $k'$, for any ${\sf g^{1}} = (g^{1}_1, ... g^{1}_k)$ and ${\sf g^{2}} = (g^{2}_1, ..., g^{2}_k)$ in $\Gamma^{k}$, for any words $w_1$, ..., $w_{k'}$ in the letters $\left\{g_i,(g_i)^{-1}|i \in \{1,...,k\}\right\}$, the condition $(g^{1}_1,...,g^{1}_k) \sim (g_1^{2}, ..., g_{k}^{2})$ implies that:  
$$\left(w_1\left({\sf g}^{1},( {\sf g}^{1})^{-1}\right),..., w_{k'}\left({\sf g}^{1},( {\sf g}^{1})^{-1}\right)\right) \sim  \left(w_1\left({\sf g}^{2},( {\sf g}^{2})^{-1}\right),..., w_{k'}\left({\sf g}^{2},( {\sf g}^{2})^{-1}\right)\right).$$
\end{definition}

We will also suppose that the distance and the abstract law on $\Gamma$ are compatible: if $(g_1,g_2)$ and $(g'_1,g'_2)$ are two couples of $\Gamma$, if $(g_1,g_2) \sim (g'_1,g'_2)$ then:
\begin{align*}
d(g_1,g_2) = d(g'_1,g'_2).
\end{align*}

It was showed in \cite{Holonomy} that an important property needed, for the first step of the construction, is braidability. In order to introduce this property, we need to define the braid group. The simplest way to define this group is by using a generator-relation presentation ; one can read Section $1.2$ of \cite{Holonomy} for a geometric definition of the braid group. Let $k$ be an integer greater than 2.
 
\begin{definition}
\label{def:braid}
The braid group with $k$ strands $\mathcal{B}_{k}$ is the group with the following presentation: 
\begin{align*}
\left<\big(\beta_{i}\big)_{i=1}^{k-1} \text{\huge{\textbar} \normalsize} \forall i, j \in \{1,…, k-1\}, \begin{matrix} \mid i - j \mid = 1\Longrightarrow \beta_{i}\beta_{j}\beta_{i}= \beta_{j}\beta_{i}\beta_{j}\\\!\!\!\!\!\!\!\!\!\!\!\mid i-j \mid > 1\Longrightarrow \beta_{i} \beta_{j} = \beta_{j} \beta_{i} \end{matrix} \right>.
\end{align*}
\end{definition}

In the following, $(\beta_{i})_{i=1}^{k-1}$ will always denote the generators choosen in Definition \ref{def:braid}. An important fact about the braid group with $n$ strands $\mathcal{B}_{n}$ is that there exists a surjective morphism $\rho: \mathcal{B}_{k} \to \mathfrak{S}_k,$ which sends $\beta_i$ on the transposition $(i,i+1)$ for any $i \in \{1,...,k-1\}$. For the sake of simplicity, for any $\beta  \in \mathcal{B}_k$, we denote the permutation $\rho(\beta)$ by $\sigma_{\beta}$. We will need also to use some actions of the braid group on the free group $\mathbb{F}_{k}$ and on $\Gamma^{k}$. 

\begin{definition}
\label{actionlibre}
Let $\mathbb{F}_{k}$ be the free group of rank $k$ generated by $e_{1}, ..., e_{k}$. We define the natural action of $\mathcal{B}_{k}$ on $\mathbb{F}_{k}$ by: 
\begin{align*}
\beta_{i} e_{i} &= e_{i+1}, \\
\beta_{i} e_{i+1} &= e_{i+1} e_{i} e_{i+1}^{-1}, \\
\beta_{i} e_{j} &= e_{j}, \text{ for any j } \notin \{i,i+1\}. 
\end{align*} 
\end{definition} 

\begin{definition}
The natural actions of $\mathcal{B}_{k}$ and of $\mathfrak{S}_k$ on $\Gamma^{k}$ are defined by: 
\begin{align*}
\beta_{i} \bullet (x_{1},...,x_{i-1},x_{i},x_{i+1},...,x_{k}) &= \left(x_{1},...,x_{i-1},x_{i} x_{i+1} x_{i}^{-1}, x_{i},…,x_{k}\right), \\
\sigma \bullet (x_1,...,x_k) &= \left(x_{\sigma^{-1}(1)}, ..., x_{\sigma^{-1}(k)}\right), 
\end{align*}
for any $n$-tuple $(x_{i})_{i=1}^{k}$ in $\Gamma^{k}$, any integer $i\in\{1,…, k-1\}$ and any  permutation $\sigma \in \mathfrak{S}_{k}$.
\end{definition}

One can verify easily that the braid group relations are satisfied in these last definitions: the natural action of $\mathcal{B}_{k}$ on $\Gamma^{k}$ and on $\mathbb{F}^{k}$ are well defined. For a graphical computation of the braid action on $\Gamma^{k}$, one can read the discussion in \cite{Holonomy} after the Definition $7.4$.

\begin{definition}
Let ${\sf g} \in \Gamma^{k}$, we say that ${\sf g}$ is purely invariant by braids if for any braid $\beta \in \mathcal{B}_k$: 
\begin{align*}
\beta\bullet (g_1,...,g_k) \sim \sigma_{\beta} \bullet(g_1,...,g_k). 
\end{align*}
\end{definition}

In the following, a family $(g_t)_{t \geq 0}$ of elements of $\Gamma$ such that $g_{0}=e$ is called an abstract process. 
\begin{definition}
\label{def:purebraid}
A purely braidable stationary process is an abstract process $(g_t)_{t \geq 0}$ such that: 
\begin{itemize}
\item for any finite set $T$, for any family of positive real numbers $(x_t)_{t \in T}$, for any total orders $\leq$ and $\preceq$ on $T$:
\begin{align*}
\left( g_{ \sum_{t' \leq t} x_{t'}}  g_{ \sum_{t' < t} x_{t'}}^{-1} \right)_{t \in T} \sim \left( g_{ \sum_{t' \preceq t} x_{t'}}  g_{ \sum_{t' \prec t} x_{t'}}^{-1} \right)_{t \in T}, 
\end{align*}
\item for any positive integer $k$, for any $k$-tuple of reals $t_1< ...<t_k$, $\left(g_{t_k}g_{t_{k-1}}^{-1}, ..., g_{t_{1}}g_{0}^{-1}\right)$ is purely invariant by braids. 
\end{itemize}
\end{definition}

The set of piecewise affine loops will be important in the first step of the construction.
\begin{definition}
A loop $l \in {\sf L}_{0}$ is piecewise affine if it is the concatenation of segments. The set of such loops is denoted by ${\sf L_{aff}}$. 
\end{definition}

We will need also to take projective limits of some families of elements of $\Gamma$.

\begin{definition}
Let us denote by $\mathcal{P}_{f}({\sf L_{aff}})$ the set of finite subsets $F$ of ${\sf L_{aff}}$. Let us consider for any $F \in \mathcal{P}_{f}({\sf L_{aff}})$, a family $\left({\sf H}_{F}(l)\right)_{l \in F} \in \mathcal{M}ult(F,\Gamma)$. The family: 
\begin{align*}
\left(\left({\sf H}_{F}(l)\right)_{l \in F}\right)_{F \in \mathcal{P}_{f}({\sf L_{aff}})}
\end{align*}
is a projective family if for any finite subsets $F_1$ and  $F_2$ in $\mathcal{P}_{f}({\sf L_{aff}})$, if $F_1 \subset F_2$, then: 
\begin{align*}
\left({\sf H}_{F_1}(l)\right)_{l \in F_1} \sim \left({\sf H}_{F_2}(l)\right)_{l \in F_1}. 
\end{align*}
\end{definition}

In the following, we will always suppose that $(\Gamma, \sim)$ has the following projective property. 

\begin{definition}
\label{def:proj}
The group $\Gamma$ endowed with an abstract law satisfies the projective property if for any projective family of multiplicative functions $\left( \left({\sf H}_{F}(l)\right)_{l \in F}\right)_{F \in \mathcal{P}_{f}({\sf L_{aff}})}$, there exists a complete metric group $\left( \overline{\Gamma} , \overline{d}\right)$, endowed with an abstract law $\overline{\sim}$ such that: 
\begin{itemize} \item there exists a Lipschitz homeomorphism $\iota : \Gamma \to \overline{\Gamma}$ such that $\overline{d} \circ (\iota \times \iota) = d $, 
\item there exists a family $(\overline{\sf H}(l))_{l \in {\sf L_{aff}}}$ in $\mathcal{M}ult\left({\sf L_{aff}},\overline{\Gamma}\right)$ such that for any family $F\in \mathcal{P}_{f}({\sf L_{aff}})$, 
\begin{align*}
\left(\overline{\sf H}(l)\right)_{l \in F} \overline{\sim} \Big(\iota\big({\sf H}_{F}(l)\big)\Big)_{l \in F}, 
\end{align*}
\item translations and inversion are isometries on $\overline{\Gamma}$, 
\item the distance ${\overline{d}}$ is compatible with the abstract law $\overline{\sim}$: let $\{(g_1,g_2),(g_1',g_2')\} \subset \overline{\Gamma}^{2}$, if $(g_1,g_2)\overline{\sim} (g_1',g_2')$ then $\overline{d}(g_1,g_2) = \overline{d}(g_1',g_2')$,
\item for any integer $k$, for any sequences $(a_n)_{n\in \mathbb{N}}$ and $(b_n)_{n \in \mathbb{N}}$ of elements of $\overline{\Gamma}^{k}$, if $a_n$ and $b_n$ converges respectively to $a$ and $b$ when $n$ goes to infinity, and if for any positive integer $n$, $a_n \overline{\sim}{b_n}$ then $a\overline{\sim}b$. 

\end{itemize}
\end{definition}

Let us describe, in two steps, the abstract construction which will be applied after in order to construct planar Markovian holonomy fields and free planar Markovian holonomy fields.

\subsubsection{On finite planar graphs}
Let us consider a finite planar graph $\mathbb{G}$. This is the data of set of paths, the edges, $\mathbb{E}$, such that: 
\begin{itemize}
\item $\mathbb{E}$ is stable by inversion, 
\item two edges which are not each other's inverse meet, if at all, only at some of their endpoints, 
\item the bounded faces, which are the bounded connected components of $\mathbb{E}\setminus \bigcup\limits_{e \in \mathbb{E}} e\left([0,1]\right)$ are homeomorphic to an open disk. 
\end{itemize}
The set of vertices of $\mathbb{G}$ is by definition the set $\mathbb{V} = \bigcup\limits_{e  \in \mathbb{E}} e(0)$. From now on, we will always suppose that for any finite planar graph that we consider, $0 \in \mathbb{V}$.  A path in $\mathbb{G}$ is a concatenation of  its edges. It is called \emph{reduced} if it does not contain any sequence of the form $e e^{-1}$, with $e\in \mathbb{E}.$ Any finite family of reduced loops based at $0$ and drawn on $\mathbb{G}$, namely $\{l_1,...,l_k\}$, can be seen as a subset of $\pi_1(\mathbb{G})$, the fundamental group of $\mathbb{G}$ based at $0$. This group is a free group and a useful family of free generating subsets of $\pi_1(\mathbb{G})$ is given in Section 6.3 of \cite{Holonomy}. Let us discuss briefly how to construct these free generating subsets.

Let us consider $T$ a covering tree of $\mathbb{G}$. For any vertices $u$ and $v$ of $\mathbb{G}$, let us denote by $[u,v]_{T}$ the unique path in $T$ which goes from $u$ to $v$. Let us consider, for any bounded face $F$ of $\mathbb{G}$, a loop $c_{F}$ which surrounds the face $F$ in the anti-clockwise orientation and which starting point is denoted by $\underline{c}_{F}$. Let us denote by $\mathbb{F}^{b}$ the subset of bounded faces of $\mathbb{G}$. We consider the family of loops $\left({\sf l}^{T,(c_F)_{F}}_{F}\right)_{F \in \mathbb{F}^{b}}$ such that for any $F \in \mathbb{F}^{b}$: 
\begin{align*}
{\sf l}^{T,(c_F)_{F}}_{F} = \left[0,\underline{c_{F}}\right]_T c_{F} \left[\underline{c_{F}},0\right]_T.
\end{align*}

Let us consider the family $\mathcal{F}(\mathbb{G})$ of free generating subsets of $\pi_1\left(\mathbb{G}\right)$ constructed this way. An element of $\mathcal{F}(\mathbb{G})$ will be denoted by a subset $\left\{b_F, F\in \mathbb{F}^{b}\right\}$ where one has to understand that $b_{F}$ is the loop associated to the bounded face $F$. The most important properties of $\mathcal{F}(\mathbb{G})$, proved in Proposition 6.1 and Proposition 7.1 of \cite{Holonomy}, are given in the following theorem. 
\begin{theorem}
\label{th:base}
The following conditions hold: 
\begin{enumerate}
\item  for any $\{ b_F , F \in \mathbb{F}^{b}\} \in \mathcal{F}(\mathbb{G})$, $\{ b_F , F \in \mathbb{F}^{b}\}$ is a free generating subset of $\pi_{1}(\mathbb{G})$, 
\item for any two elements $\{ b_F , F \in \mathbb{F}^{b}\}$ and $\{ c_F , F \in \mathbb{F}^{b}\}$ in $\mathcal{F}(\mathbb{G})$, for any enumeration of the bounded faces $\left(F_1,...,F_{\#\mathbb{F}^{b}}\right)$,  there exists a braid $\beta \in \mathcal{B}_{k}$ such that: $$\left(b_{F_{\sigma_{\beta}(1)}},...b_{F_{\sigma_{\beta}(\#\mathbb{F}^{b})}}\right) = \beta \bullet \left(c_{F_1},...,c_{F_{\#\mathbb{F}^{b}}}\right),$$ 
\item for any Lipschitz homeomorphism $\Psi$ which sends $\mathbb{G}$ on $\mathbb{G}'$ and such that $\Psi(0) = 0$, for any $\left\{ b_F , F \in \mathbb{F}^{b}\right\}$ in $\mathcal{F}(\mathbb{G})$, the family $\left\{ \Psi(b_F) , F \in \mathbb{F}^{b}\right\}$ is in $\mathcal{F}(\mathbb{G}')$. 
\end{enumerate}
\end{theorem}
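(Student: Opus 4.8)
The plan is to prove the three assertions in turn, treating \textbf{(1)} as combinatorial topology of graphs and \textbf{(2)}--\textbf{(3)} as the ``change of geometric basis'' phenomenon; the detailed bookkeeping is carried out in Propositions 6.1 and 7.1 of \cite{Holonomy}, and I will only indicate the structure of the argument.

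For \textbf{(1)}, I would first recall that, for the covering tree $T$, the loops $[0,e(0)]_{T}\,e\,[e(1),0]_{T}$ over the geometric edges $e\notin T$ form a free generating subset of $\pi_{1}(\mathbb{G})$; hence $\pi_{1}(\mathbb{G})$ is free of rank $\#\mathbb{E}/2-\#\mathbb{V}+1$, which equals $\#\mathbb{F}^{b}$ by Euler's relation $\#\mathbb{V}-\#\mathbb{E}/2+(\#\mathbb{F}^{b}+1)=2$ for the connected planar graph $\mathbb{G}$. It remains to check that the particular family $\left({\sf l}^{T,(c_{F})_{F}}_{F}\right)_{F\in\mathbb{F}^{b}}$ is itself a free generating subset, for which I would argue by induction on $\#\mathbb{F}^{b}$. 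The base case is the tree. For the inductive step, since the unbounded face bounds a non-trivial cycle as soon as $\#\mathbb{F}^{b}\geq1$, one can pick an edge $e$ bordering the unbounded face which is not a bridge; it then borders exactly one bounded face $F_{0}$, and deleting $e$ yields a connected planar graph $\mathbb{G}'$ with bounded faces $\mathbb{F}^{b}\setminus\{F_{0}\}$ and $\pi_{1}(\mathbb{G})=\pi_{1}(\mathbb{G}')\ast\langle\gamma_{e}\rangle$, where $\gamma_{e}$ is the edge-generator of $e$. Choosing the data of $\mathbb{G}$ compatibly (same tree, same $c_{F}$ for $F\neq F_{0}$), the loops ${\sf l}_{F}$, $F\neq F_{0}$, form a free basis of $\pi_{1}(\mathbb{G}')$ by induction, and since $\partial F_{0}$ traverses $e$ exactly once one has ${\sf l}_{F_{0}}=u\,\gamma_{e}^{\pm1}\,v$ with $u,v\in\pi_{1}(\mathbb{G}')$; it follows at once that adjoining ${\sf l}_{F_{0}}$ produces a free generating subset of $\pi_{1}(\mathbb{G}')\ast\langle\gamma_{e}\rangle=\pi_{1}(\mathbb{G})$.

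For \textbf{(2)}, two elements of $\mathcal{F}(\mathbb{G})$ differ only through the choice of covering tree $T$ and of the anti-clockwise face loops $c_{F}$ (with their base points). I would show that any two such choices are joined by a finite chain of \emph{elementary moves}, each of which transforms the ordered family $\left(b_{F_{1}},\dots,b_{F_{\#\mathbb{F}^{b}}}\right)$ by a single generator $\beta_{i}^{\pm1}$ of $\mathcal{B}_{\#\mathbb{F}^{b}}$ acting on $\Gamma^{\#\mathbb{F}^{b}}$; the composition of these moves is then by construction a braid $\beta$ realizing the asserted identity, with $\sigma_{\beta}$ recording the induced reordering of the faces. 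The crucial verification is that replacing one $c_{F}$ by another anti-clockwise loop around the same face changes ${\sf l}_{F}$ into a conjugate $\widehat s^{\,-1}{\sf l}_{F}\widehat s$ in which $\widehat s$ is obtained by following a portion of $\partial F$, hence is a product of the ${\sf l}_{G}$ for the faces $G$ met in cyclic order around $F$, so that the conjugation decomposes into successive partial conjugations by neighbouring generators -- exactly the moves generating the braid action of Definition~\ref{actionlibre}. Conceptually, the set of geometric bases attached to $\mathbb{G}$ is a torsor under the mapping class group of the disc punctured at the bounded faces, which is $\mathcal{B}_{\#\mathbb{F}^{b}}$; a change of tree is handled the same way, one edge-swap at a time, and the braid relations themselves are forced by planarity (non-adjacent faces commute, adjacent faces satisfy the Yang--Baxter relation).

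Assertion \textbf{(3)} is then a matter of naturality: an orientation-preserving Lipschitz homeomorphism $\Psi$ with $\Psi(0)=0$ carries $\mathbb{G}$ to $\mathbb{G}'$, covering trees to covering trees, each bounded face $F$ to a bounded face $\Psi(F)$, and each anti-clockwise loop $c_{F}$ to an anti-clockwise loop around $\Psi(F)$; since $\Psi$ commutes with concatenation and with the tree-path operation $[u,v]_{T}\mapsto[\Psi(u),\Psi(v)]_{\Psi(T)}$, it maps ${\sf l}^{T,(c_{F})_{F}}_{F}$ to ${\sf l}^{\Psi(T),(\Psi(c_{F}))_{F}}_{\Psi(F)}$, exhibiting $\{\Psi(b_{F})\}$ as an element of $\mathcal{F}(\mathbb{G}')$ (if $\Psi$ reverses orientation one moreover invokes part \textbf{(2)} after inverting the loops). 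I expect \textbf{(2)} to be the main obstacle: pinning down precisely how the generating family changes under an arbitrary change of tree and of face loops, and in particular certifying that every such change lies in the braid subgroup of $\mathrm{Aut}(\mathbb{F}_{\#\mathbb{F}^{b}})$ and not in some larger group, is the delicate, genuinely planar point, and this is where I would rely on the explicit computations of \cite{Holonomy}.
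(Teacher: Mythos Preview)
The paper does not actually prove this theorem: immediately before the statement it says that these properties are ``proved in Proposition 6.1 and Proposition 7.1 of \cite{Holonomy}'', and no argument is given afterwards. So there is no ``paper's own proof'' to compare with beyond that citation; your proposal, which also ultimately defers to \cite{Holonomy}, is therefore aligned with what the paper does, only more explicit.

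As a sketch your outline is reasonable, but two points are worth tightening if you intend it to stand on its own. In \textbf{(1)} your induction, as written, proves freeness only for the \emph{particular} data you chose ``compatibly'' with $\mathbb{G}'$; to cover an arbitrary element of $\mathcal{F}(\mathbb{G})$ you must either run the induction for every choice of $(T,(c_F)_F)$, or first establish \textbf{(2)} and then note that the braid action is by automorphisms of the free group, so one free basis forces all. In \textbf{(2)} the sentence ``$\widehat s$ is \ldots a product of the ${\sf l}_{G}$ for the faces $G$ met in cyclic order around $F$'' is the heart of the matter and is not obvious as stated: a portion of $\partial F$ is a path in the graph, not a priori a word in the other face loops, and turning it into such a word (in the right cyclic order, so that the conjugation factorises into $\beta_i^{\pm1}$'s) is precisely the planar computation carried out in \cite{Holonomy}. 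Your mapping-class-group remark is the correct conceptual picture, but it does not by itself replace that computation.
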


Let us consider a finite planar graph $\mathbb{G}$ with $k$ bounded faces. Let us consider an enumeration of the $k$ bounded faces of $\mathbb{G}$ : $(F_1,...,F_k)$. Let $(g_1,...,g_k)$ be a purely invariant by braids $k$-tuple of $\Gamma$.  

In the following, we will show that one can construct in some sense a canonical element of ${\sf Hom}(\pi_1(\mathbb{G}), \Gamma^{op})$, where $\Gamma^{op}$ is the group which underlying set is the same as $\Gamma$ and which product is given by $x._{op} y = yx$. Let us choose ${\sf l}$ an element of $\mathcal{F}(\mathbb{G})$: using the enumeration of the faces, one can consider ${\sf l}$ as a sequence $(l_1,...,l_k)$. By the first condition of Theorem \ref{th:base}, ${\sf l}$ is a free generating family of $\pi_1(\mathbb{G})$, thus, for any loop $l \in \pi_1(\mathbb{G})$, we can find a unique reduced word $w$ in $\left\{ l_1,...,l_k, l_1^{-1}, ..., l_k^{-1} \right\}$ such that $l = w(l_1,...,l_k, l_1^{-1}, ..., l_k^{-1})$ : we define ${\sf H}_{{\sf l}, {\sf g}}(l) = w^{op}\left(g_1,...,g_k, g_1^{-1}, ..., g_k^{-1}\right)$ where $w^{op}$ is the word $w$ read from left to right.

\begin{theorem}
The law of ${\sf H}_{{\sf l}, {\sf g}}$ does not depend on the choice of ${\sf l}$. For any other ${\sf l'} =( l_1',...,l_k') \in \mathcal{F}(\mathbb{G})$, for any positive integer $r$, for any $c_1$, ..., $c_{r}$ in $\pi_{1}(\mathbb{G})$:
\begin{align*}
\left( {\sf H}_{{\sf l}, {\sf g}} ( c_1),..., {\sf H}_{{\sf l}, {\sf g}}(c_{r}) \right) \sim \left({\sf H}_{{\sf l'}, {\sf g}} ( c_1),..., {\sf H}_{{\sf l'}, {\sf g}}(c_{r})\right). 
\end{align*}
\end{theorem}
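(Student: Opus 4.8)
The plan is to exploit the two structural facts provided by Theorem~\ref{th:base}: any two free generating families coming from $\mathcal{F}(\mathbb{G})$ differ by the action of a braid, and the tuple ${\sf g}$ is purely invariant by braids. First I would reduce to the case where the two families ${\sf l}$ and ${\sf l'}$ have the \emph{same} underlying set of loops but possibly in a different order, handled by the permutation part, together with a genuine braid twist. Concretely, fix an enumeration $(F_1,\dots,F_k)$ of the bounded faces; by condition~2 of Theorem~\ref{th:base} there is a braid $\beta\in\mathcal{B}_k$ with $(l'_{\sigma_\beta(1)},\dots,l'_{\sigma_\beta(k)})=\beta\bullet(l_1,\dots,l_k)$, where the braid action on the free group $\pi_1(\mathbb{G})$ is the one of Definition~\ref{actionlibre} (this is the content of how $\mathcal{F}(\mathbb{G})$ is built, cf. \cite{Holonomy}). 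So the transition between the two free bases is entirely encoded by $\beta$ acting on $\mathbb{F}_k\cong\pi_1(\mathbb{G})$.

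The key computation I would carry out is a compatibility lemma: for every $c\in\pi_1(\mathbb{G})$ and every braid $\beta$, writing $c=w(l_1,\dots,l_k)$ as a reduced word in the basis ${\sf l}$, the element ${\sf H}_{{\sf l'},{\sf g}}(c)$ equals $\bigl(w^{\mathrm{op}}\bigr)$ evaluated not at $(g_1,\dots,g_k)$ but at the tuple $\beta\bullet(g_1,\dots,g_k)$ read through the same combinatorics — in other words the map $w\mapsto w^{\mathrm{op}}(\,\cdot\,)$ intertwines the braid action on $\mathbb{F}_k$ with the braid action $\bullet$ on $\Gamma^k$ (adjusted by $\Gamma^{op}$). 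This is a purely formal check: the defining relations $\beta_i e_i=e_{i+1}$, $\beta_ie_{i+1}=e_{i+1}e_ie_{i+1}^{-1}$ on $\mathbb{F}_k$ are matched, after passing to $w^{\mathrm{op}}$ and to the opposite product, exactly by $\beta_i\bullet(\dots,x_i,x_{i+1},\dots)=(\dots,x_ix_{i+1}x_i^{-1},x_i,\dots)$, because reading a word backwards in $\Gamma^{op}$ reverses both the multiplication order and the role of conjugation. I would verify this on generators and conclude by multiplicativity of ${\sf H}$ and of the braid actions. This matching is really where the slightly asymmetric-looking definition of ${\sf H}_{{\sf l},{\sf g}}$ via $w^{\mathrm{op}}$ pays off, and it is the step I expect to be the main obstacle — not because it is deep, but because keeping track of the three reversals (word order, group opposite, and the conjugation direction in the braid action) is where sign-type errors creep in; a careful graphical bookkeeping as in \cite{Holonomy} after Definition~7.4 is the safest route.

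Once the intertwining lemma is in hand, the theorem follows quickly. Applying it with the braid $\beta$ relating ${\sf l}$ and ${\sf l'}$, we get, for all $c_1,\dots,c_r\in\pi_1(\mathbb{G})$,
\begin{align*}
\bigl({\sf H}_{{\sf l'},{\sf g}}(c_1),\dots,{\sf H}_{{\sf l'},{\sf g}}(c_r)\bigr)
&= \bigl({\sf H}_{{\sf l},\,\beta\bullet{\sf g}}(c_1),\dots,{\sf H}_{{\sf l},\,\beta\bullet{\sf g}}(c_r)\bigr),
\end{align*}
up to the relabelling of the arguments by $\sigma_\beta$, which only permutes the fixed basis and hence does not change which word $w$ represents a given $c_j$. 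Now invoke pure braid invariance of ${\sf g}$: $\beta\bullet(g_1,\dots,g_k)\sim\sigma_\beta\bullet(g_1,\dots,g_k)$. Since ${\sf H}_{{\sf l},{\sf g}}(c_j)$ is a fixed word $w_j^{\mathrm{op}}$ in the $g_i$ and their inverses, the defining property of an abstract law (Definition~\ref{def:abstractlaw}) — that $\sim$ is preserved under applying any fixed tuple of words — lets us replace $\beta\bullet{\sf g}$ by $\sigma_\beta\bullet{\sf g}$ inside all the ${\sf H}_{{\sf l},\cdot}(c_j)$ simultaneously. Finally, ${\sf H}_{{\sf l},\,\sigma_\beta\bullet{\sf g}}(c_j)$ is, by inspection, nothing but ${\sf H}_{{\sf l},{\sf g}}(c_j)$ with the basis permuted, i.e. exactly ${\sf H}_{{\sf l},{\sf g}}(c_j)$ again since $c_j$ does not move. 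Chaining these equalities and equivalences gives $\bigl({\sf H}_{{\sf l},{\sf g}}(c_1),\dots,{\sf H}_{{\sf l},{\sf g}}(c_r)\bigr)\sim\bigl({\sf H}_{{\sf l'},{\sf g}}(c_1),\dots,{\sf H}_{{\sf l'},{\sf g}}(c_r)\bigr)$, which is the assertion; in particular the law of ${\sf H}_{{\sf l},{\sf g}}$ is independent of the choice of ${\sf l}\in\mathcal{F}(\mathbb{G})$.
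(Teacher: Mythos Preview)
Your overall strategy is the paper's: use the braid $\beta$ from Theorem~\ref{th:base} relating ${\sf l}$ and ${\sf l'}$, an intertwining between the braid action on the free group and on $\Gamma^k$ through the anti-homomorphism ${\sf H}$, pure braid invariance of ${\sf g}$, and the stability of $\sim$ under words (Definition~\ref{def:abstractlaw}). The paper organises it slightly differently: it first reduces, via the homomorphism property and the abstract-law axiom, to checking the equivalence on the tuple $(c_1,\dots,c_r)=(l'_1,\dots,l'_k)$. There the intertwining (this is Lemma~7.1 of \cite{Holonomy}) gives the exact identity
\[
\sigma_\beta^{-1}\bullet\big({\sf H}_{{\sf l},{\sf g}}(l'_1),\dots,{\sf H}_{{\sf l},{\sf g}}(l'_k)\big)=\beta^{-1}\bullet(g_1,\dots,g_k),
\]
and pure braid invariance $\sigma_\beta\bullet(\beta^{-1}\bullet{\sf g})\sim{\sf g}=\big({\sf H}_{{\sf l'},{\sf g}}(l'_j)\big)_j$ finishes in one line.

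Your version, which keeps the $c_j$ general throughout, has a genuine gap at the last step. The claim that ``${\sf H}_{{\sf l},\sigma_\beta\bullet{\sf g}}(c_j)$ is exactly ${\sf H}_{{\sf l},{\sf g}}(c_j)$ since $c_j$ does not move'' is false: you are evaluating the \emph{same} word $w_j^{\mathrm{op}}$ at the permuted tuple $(g_{\sigma_\beta^{-1}(1)},\dots,g_{\sigma_\beta^{-1}(k)})$, and that is a different element of $\Gamma$ in general (already for $k=2$, $\beta=\beta_1$, $c=l_1$ one gets $g_2$ instead of $g_1$). What is correct is ${\sf H}_{{\sf l},\sigma_\beta\bullet{\sf g}}={\sf H}_{\sigma_\beta^{-1}\bullet{\sf l},{\sf g}}$, but that is not ${\sf H}_{{\sf l},{\sf g}}$. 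The permutation does not disappear: it must cancel against the one hidden in your ``up to relabelling by $\sigma_\beta$'' at the intertwining step, which you left vague. If you state the intertwining precisely --- the tuple $h$ replacing ${\sf g}$ is of the form $\sigma_\beta^{\pm1}\bullet\beta^{\pm1}\bullet{\sf g}$, the signs depending on conventions --- then pure braid invariance together with word-stability of $\sim$ give $h\sim{\sf g}$ in a single step, with no leftover permutation to argue away. The paper's reduction to the generators $l'_j$ is the cleanest way to avoid this bookkeeping trap.
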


\begin{proof}
Let us consider ${\sf l'} =( l_1',...,l_k') \in \mathcal{F}(\mathbb{G})$. Since ${\sf H}_{{\sf l}, {\sf g}}$ and ${\sf H}_{{\sf l'}, {\sf g}}$ are Lipschitz homeomorphisms and using the definition of an abstract law, it is enough to prove that: 
\begin{align*}
\left({\sf H}_{{\sf l}, {\sf g}}\left(l_1' \right), ..., {\sf H}_{{\sf l}, {\sf g}}\left(l_k' \right)\right) \sim \left({\sf H}_{{\sf l'}, {\sf g}}\left(l_1' \right), ..., {\sf H}_{{\sf l'}, {\sf g}}\left(l_k' \right)\right).
\end{align*}
Using the condition $2$ of Theorem \ref{th:base}, we know that there exists a braid $\beta \in \mathcal{B}_k$ such that: 
\begin{align*}
\left(l_{\sigma_{\beta}(1)}',...,l_{\sigma_{\beta}(k)}'\right) = \beta \bullet (l_1,...,l_k).
\end{align*}
Using the multiplicativity property of ${\sf H}_{{\sf l}, {\sf g}}$ and Lemma $7.1$ of \cite{Holonomy}: 
\begin{align*}
\sigma_{\beta}^{-1} \bullet \left({\sf H}_{{\sf l}, {\sf g}}\left(l_1' \right), ..., {\sf H}_{{\sf l}, {\sf g}}\left(l_k' \right)\right) = \beta^{-1} \bullet (g_1,...,g_k).
\end{align*}
Since ${\sf g}$ is purely invariant by braids, $\sigma_{\beta} \bullet (\beta^{-1} \bullet (g_1,...,g_k)) \sim (g_1,...,g_k)$, and since by definition, 
\begin{align*}
 (g_1,...,g_k) = \left({\sf H}_{{\sf l'}, {\sf g}}\left(l_1' \right), ..., {\sf H}_{{\sf l'}, {\sf g}}\left(l_k' \right)\right),
\end{align*}
we get the following fact: 
\begin{align*}
\left({\sf H}_{{\sf l}, {\sf g}}\left(l_1' \right), ..., {\sf H}_{{\sf l}, {\sf g}} \left(l_k' \right)\right) \sim  \left({\sf H}_{{\sf l'}, {\sf g}}\left(l_1' \right), ..., {\sf H}_{{\sf l'}, {\sf g}}\left(l_k' \right)\right),
\end{align*}
which allows us to conclude the proof. 
\end{proof}

From now on, let us consider ${\sf g} = \left(g_t\right)_{t \geq 0}$ a purely braidable stationary process in $\Gamma$. Let ${\sf dx}$ be the Lebesgue measure on the plane. 
Let us consider for any $i \in \{1,...,k\}$, 
\begin{align*}
h_{i} = g_{\sum_{j=1}^{i}{\sf dx}(F_j)} \left(g_{\sum_{j=1}^{i-1}{\sf dx}(F_j)}\right)^{-1},
\end{align*}
and ${\sf h} = (h_1,...,h_k)$. The $k$-tuple ${\sf h}$ is purely invariant by braids. Let us choose a family ${\sf l}$ in $\mathcal{F}(\mathbb{G})$. Since the law of ${\sf H}_{{\sf l}, {\sf h}}$ does not depend on the choice of ${\sf l}$, and since we only care about the laws of the objects, we can define: 
\begin{align*}
{\sf HF}^{{\sf g}}_{\mathbb{G}} = {\sf H}_{{\sf l}, {\sf h}}.
\end{align*}

One does not forget that we chose, since the beginning, an enumeration of the $k$ bounded faces of $\mathbb{G}$. Actually, the choice of enumeration does not matter. 

\begin{lemme}
The law of ${\sf HF}^{\sf g}_{\mathbb{G}}$ does not depend on the choice of enumeration of the $k$ bounded faces of $\mathbb{G}$. 
\end{lemme}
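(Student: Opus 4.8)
The plan is to note that any two enumerations of the $k$ bounded faces differ by a permutation $\sigma\in\mathfrak{S}_k$, and then to follow how $\sigma$ acts on the two ingredients of the construction of ${\sf HF}^{\sf g}_{\mathbb{G}}$: the free generating tuple read off an element of $\mathcal{F}(\mathbb{G})$, and the tuple of increments of ${\sf g}$ along the enumeration. Denote by ${\sf HF}^{{\sf g},\mathcal{E}}_{\mathbb{G}}$ the field obtained from an enumeration $\mathcal{E}$, and fix once and for all an element $\{b_F:F\in\mathbb{F}^b\}\in\mathcal{F}(\mathbb{G})$; by the previous theorem the law of ${\sf H}_{{\sf l},{\sf h}}$ does not depend on the choice inside $\mathcal{F}(\mathbb{G})$, so the same element may be used for both enumerations. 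If $\mathcal{E}=(F_1,\dots,F_k)$ and $\mathcal{E}'=(F_{\sigma(1)},\dots,F_{\sigma(k)})$, then reading $\{b_F\}$ along $\mathcal{E}'$ simply permutes the tuple ${\sf l}^{\mathcal{E}}=(b_{F_1},\dots,b_{F_k})$ by $\sigma$ into ${\sf l}^{\mathcal{E}'}=(b_{F_{\sigma(1)}},\dots,b_{F_{\sigma(k)}})$, while the increments along $\mathcal{E}'$ are ${\sf h}^{\mathcal{E}'}=(g_{s_i}g_{s_{i-1}}^{-1})_{1\le i\le k}$ with $s_i=\sum_{j\le i}{\sf dx}(F_{\sigma(j)})$, a tuple which is again purely invariant by braids, so that ${\sf HF}^{{\sf g},\mathcal{E}'}_{\mathbb{G}}$ is well defined.

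The key step is then to apply the first bullet of Definition~\ref{def:purebraid} with $T=\mathbb{F}^{b}$, weights $x_F={\sf dx}(F)$ (which are positive, since a bounded face is a nonempty open subset of the plane), and the two total orders on $T$ induced by $\mathcal{E}$ and $\mathcal{E}'$. This yields directly that the two increment tuples are related by $\sigma\bullet{\sf h}^{\mathcal{E}'}\sim{\sf h}^{\mathcal{E}}$, that is, $(h^{\mathcal{E}'}_{\sigma^{-1}(1)},\dots,h^{\mathcal{E}'}_{\sigma^{-1}(k)})\sim(h^{\mathcal{E}}_{1},\dots,h^{\mathcal{E}}_{k})$, where $\bullet$ denotes the permutation action on $\Gamma^{k}$.

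Finally I would unwind the definition of ${\sf HF}^{\sf g}_{\mathbb{G}}$ on words. For $c\in\pi_1(\mathbb{G})$, if $c=w_c({\sf l}^{\mathcal{E}})$ as a reduced word, then $c$ equals the word $w_c$ with its variables relabelled by $\sigma^{-1}$ evaluated in ${\sf l}^{\mathcal{E}'}$, hence ${\sf HF}^{{\sf g},\mathcal{E}'}_{\mathbb{G}}(c)=w_c^{op}(h^{\mathcal{E}'}_{\sigma^{-1}(1)},\dots,h^{\mathcal{E}'}_{\sigma^{-1}(k)})=w_c^{op}(\sigma\bullet{\sf h}^{\mathcal{E}'})$, while ${\sf HF}^{{\sf g},\mathcal{E}}_{\mathbb{G}}(c)=w_c^{op}({\sf h}^{\mathcal{E}})$. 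Since for any $c_1,\dots,c_r\in\pi_1(\mathbb{G})$ the family $(w_{c_1}^{op},\dots,w_{c_r}^{op})$ is a family of words in $k$ letters and their inverses, the defining property of an abstract law applied to $\sigma\bullet{\sf h}^{\mathcal{E}'}\sim{\sf h}^{\mathcal{E}}$ gives $({\sf HF}^{{\sf g},\mathcal{E}}_{\mathbb{G}}(c_1),\dots,{\sf HF}^{{\sf g},\mathcal{E}}_{\mathbb{G}}(c_r))\sim({\sf HF}^{{\sf g},\mathcal{E}'}_{\mathbb{G}}(c_1),\dots,{\sf HF}^{{\sf g},\mathcal{E}'}_{\mathbb{G}}(c_r))$ for every finite family, which is exactly the claim.

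The one point that needs care is the permutation bookkeeping: one must keep straight that the generating tuple gets reordered by $\sigma$ while the two increment tuples are matched through $\sigma\bullet{\sf h}^{\mathcal{E}'}$ and not ${\sf h}^{\mathcal{E}'}$ itself, and one should check that the first bullet of Definition~\ref{def:purebraid}, phrased for a family indexed by an abstract finite set $T$, indeed covers an arbitrary permutation and not just a transposition of consecutive faces. If one prefers, one can instead treat the case where $\mathcal{E}'$ exchanges two consecutive faces and then conclude by transitivity of $\sim$ together with the fact that adjacent transpositions generate $\mathfrak{S}_k$; everything else is a direct expansion of the definitions of ${\sf HF}^{\sf g}_{\mathbb{G}}$ and of an abstract law.
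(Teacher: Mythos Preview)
Your proof is correct and follows essentially the same approach as the paper: both hinge on the first bullet of Definition~\ref{def:purebraid} applied with $T=\mathbb{F}^b$, $x_F={\sf dx}(F)$, and the two total orders given by the two enumerations. The paper's proof simply records the resulting equivalence of increment families and leaves the word-by-word bookkeeping you spell out (the permutation action on ${\sf l}$ and on ${\sf h}$, and the transport through $w_c^{op}$) implicit.
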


\begin{proof}
Let us suppose that we have two enumerations of the bounded faces of $\mathbb{G}$: this means that we have two total orders $\leq$ and $\preceq$ on $\mathbb{F}^{b}$. Using the condition $1$ of the definition of a purely braidable stationary process, we know that: 
\begin{align*}
\left( g_{ \sum_{F' \leq F} {\sf dx}(F')}  g_{ \sum_{F' < F} {\sf}{\sf dx}({F'})}^{-1} \right)_{F \in \mathbb{F}^{b}} \sim \left( g_{ \sum_{F' \preceq F} {\sf dx}(F')}  g_{ \sum_{F' \prec F} {\sf dx}(F')}^{-1} \right)_{F \in \mathbb{F}^{b}}. 
\end{align*}
This allows us to see that the law of ${\sf HF}^{\sf g}_{\mathbb{G}}$ does not not depend on the choice of enumeration. 
\end{proof}

Since the law does not depend on the choice of the enumeration, we will pick an enumeration for any given finite planar graph, but we will forget about this choice. 

The Lipschitz homeomorphisms that we have just constructed are compatible for different graphs. Let us consider two graphs $\mathbb{G}_1$ and $\mathbb{G}_2$ such that $\mathbb{G}_2$ is finer than $\mathbb{G}_1$: this implies that $\pi_1\left(\mathbb{G}_1\right) \subset \pi_1\left(\mathbb{G}_2\right)$.

\begin{theorem}
\label{th:compatibilite}
For any integer $k$, for any $k$-tuple of loops $(l_1,...,l_k)$ in $\pi_1\left(\mathbb{G}_1\right)$, 
\begin{align*}
\left({\sf HF}^{\sf g}_{\mathbb{G}_1} \left( l_1\right),..., {\sf HF}^{\sf g}_{\mathbb{G}_1} \left( l_k\right)\right) \sim \left({\sf HF}^{\sf g}_{\mathbb{G}_2} \left( l_1\right),..., {\sf HF}^{\sf g}_{\mathbb{G}_2} \left( l_k\right)\right).
\end{align*}
\end{theorem}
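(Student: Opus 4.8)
The plan is to reduce the statement to a purely combinatorial fact about the free generating subsets of $\pi_1(\mathbb{G}_2)$ and their interaction with the bounded faces of the coarser graph $\mathbb{G}_1$. First I would fix an enumeration $(F_1,\dots,F_m)$ of the bounded faces of $\mathbb{G}_2$; since $\mathbb{G}_2$ refines $\mathbb{G}_1$, each bounded face $G$ of $\mathbb{G}_1$ is a disjoint union of a subcollection of the $F_j$'s, and after reordering we may assume the enumeration of the faces of $\mathbb{G}_2$ is \emph{adapted} to the faces of $\mathbb{G}_1$, meaning the faces inside a given $G$ are consecutive. The key observation is then that there is a choice of free generating family ${\sf l}\in\mathcal{F}(\mathbb{G}_2)$, say ${\sf l}=(l_1,\dots,l_m)$, such that the face loop $b_G$ of $G$ (an element of some ${\sf l}'\in\mathcal{F}(\mathbb{G}_1)$) is expressed in $\pi_1(\mathbb{G}_2)$ as an ordered product of the $l_j$'s with $F_j\subset G$, up to conjugation by a fixed word: this is exactly the statement that the boundary of $G$ is homotopic in $\mathbb{G}_2$ to the concatenation of the boundaries of the $F_j$ it contains (compare Section 6.3 of \cite{Holonomy}), and a covering tree of $\mathbb{G}_2$ can be chosen to extend one of $\mathbb{G}_1$ so that the base-point corrections $[0,\underline{c}_F]_T$ match up.

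Next I would transport this through the definition of ${\sf HF}^{\sf g}_{\mathbb{G}}$. With the area increments $h_j = g_{\sum_{i\le j}{\sf dx}(F_i)}\,(g_{\sum_{i<j}{\sf dx}(F_i)})^{-1}$ for $\mathbb{G}_2$, the adapted enumeration gives, for each face $G$ of $\mathbb{G}_1$, that the product of the consecutive $h_j$'s with $F_j\subset G$ telescopes to precisely $g_{\sum_{i\le k(G)}{\sf dx}(F_i)}\,(g_{\sum_{i<k'(G)}{\sf dx}(F_i)})^{-1}$, where the exponents are exactly the partial sums of the ${\sf dx}(G')$ over faces of $\mathbb{G}_1$ in the induced order — i.e.\ the area increments ${\sf dx}(G)$ that define ${\sf HF}^{\sf g}_{\mathbb{G}_1}$. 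By the previous lemma the law of ${\sf HF}^{\sf g}_{\mathbb{G}_1}$ does not depend on the enumeration, so we may use this induced order. Combining with the multiplicativity of ${\sf H}_{{\sf l},{\sf h}}$ (which turns the ordered product expressing $b_G$ in terms of the $l_j$ into the corresponding ordered product of the $h_j$, read in $\Gamma^{op}$) and the fact that the law of ${\sf H}_{{\sf l},{\sf h}}$ is independent of the choice of generating family ${\sf l}\in\mathcal{F}(\mathbb{G}_2)$, I obtain that the family $\big({\sf HF}^{\sf g}_{\mathbb{G}_2}(b_G)\big)_{G\in\mathbb{F}^b(\mathbb{G}_1)}$ has the same abstract law as $\big({\sf HF}^{\sf g}_{\mathbb{G}_1}(b_G)\big)_{G\in\mathbb{F}^b(\mathbb{G}_1)}$.

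Finally, since $\{b_G : G\in\mathbb{F}^b(\mathbb{G}_1)\}$ is a free generating subset of $\pi_1(\mathbb{G}_1)$ (Theorem \ref{th:base}, condition 1), every loop $l_i\in\pi_1(\mathbb{G}_1)$ appearing in the statement is a fixed word in the $b_G$ and their inverses; both ${\sf HF}^{\sf g}_{\mathbb{G}_1}$ and ${\sf HF}^{\sf g}_{\mathbb{G}_2}$ are multiplicative, hence are determined on $\pi_1(\mathbb{G}_1)$ by their values on the $b_G$ through the same words. Applying the stability of the abstract law under words (Definition \ref{def:abstractlaw}) to the equality of laws just established yields the claimed relation for arbitrary $(l_1,\dots,l_k)$ in $\pi_1(\mathbb{G}_1)$. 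The main obstacle is the first paragraph: one must check carefully that the covering tree and the face loops of $\mathbb{G}_2$ can be chosen coherently with those of $\mathbb{G}_1$ so that the telescoping of area increments is exact and the base-point conjugations cancel — essentially a bookkeeping of the combinatorics in Section 6.3 of \cite{Holonomy}, but it is where all the geometric content sits; once that normalization is in place, the rest is a formal manipulation of abstract laws.
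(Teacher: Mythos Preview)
Your proposal is correct and follows essentially the same route as the paper: the paper's proof refers to Proposition~8.2 of \cite{Holonomy} and only spells out the elementary case where $\mathbb{G}_1$ has a single bounded face split by $\mathbb{G}_2$ into two faces ${\sf F}_r,{\sf F}_l$, where one checks directly that $l=l_rl_l$ in $\pi_1(\mathbb{G}_2)$ and hence ${\sf HF}^{\sf g}_{\mathbb{G}_2}(l)=(g_{{\sf dx}({\sf F}_r)+{\sf dx}({\sf F}_l)}g_{{\sf dx}({\sf F}_r)}^{-1})g_{{\sf dx}({\sf F}_r)}=g_{{\sf dx}({\sf F})}\sim {\sf HF}^{\sf g}_{\mathbb{G}_1}(l)$. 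Your argument is precisely the general version of this telescoping: choose compatible covering trees so that an element of $\mathcal{F}(\mathbb{G}_1)$ is realized by ordered products of elements of a family in $\mathcal{F}(\mathbb{G}_2)$, use an adapted enumeration so the area increments telescope, and then propagate to arbitrary loops via multiplicativity and Definition~\ref{def:abstractlaw}. The ``main obstacle'' you flag---arranging the covering trees and face loops coherently---is exactly the bookkeeping deferred to \cite{Holonomy} in the paper.
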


\begin{proof}
The proof follows exactly the one of the compatibility condition in Proposition $8.2$ in \cite{Holonomy}. Let us only explain the case when $\mathbb{G}_1$ has a unique bounded face, ${\sf F}$, and $\mathbb{G}_2$ has two bounded faces in ${\sf F}$, that we denote by ${\sf F}_{r}$ and ${\sf F}_{l}$ such that $0$ is in the boundary of ${\sf F}_{r}$ and ${\sf F}_{l}$. Let $l$ be the unique loop based at $0$ in $\pi_1(\mathbb{G}_1)$ which surrounds the face ${\sf F}$ in the anticlockwise orientation. Let us prove that: 
\begin{align*}
{\sf HF}_{\mathbb{G}_1}^{{\sf g}} (l) \sim {\sf HF}_{\mathbb{G}_2}^{{\sf g}} (l).
\end{align*}
Let us consider $l_{r}$ and $l_{l}$ two loops based at $0$ such that:
\begin{enumerate}
\item $l_{r}$ is based at $0$ and surrounds ${\sf F}_{r}$ in the anticlockwise orientation,
\item $l_{l}$ is based at $0$ and surrounds ${\sf F}_{l}$ in the anticlockwise orientation,
\item $l=l_{r}l_{l}$ in $\pi_{1}(\mathbb{G}_2)$.
\end{enumerate} 
Then, 
\begin{align*}
{\sf HF}_{\mathbb{G}_2}^{{\sf g}} (l) = {\sf HF}_{\mathbb{G}_2}^{{\sf g}} (l_l){\sf HF}_{\mathbb{G}_2}^{{\sf g}} (l_r). 
\end{align*}
In order to study the law of ${\sf HF}_{\mathbb{G}_2}^{{\sf g}} (l_l){\sf HF}_{\mathbb{G}_2}^{{\sf g}} (l_r)$ we can choose any enumeration of the faces, thus we take the enumeration: $({\sf F}_{r}, {\sf F}_{l})$. This gives: 
\begin{align*}
{\sf HF}_{\mathbb{G}_2}^{{\sf g}} (l_l){\sf HF}_{\mathbb{G}_2}^{{\sf g}} (l_r) \sim \left(g_{{\sf dx}({\sf F}_r)+ {\sf dx}({\sf F}_l)}g^{-1}_{{\sf dx}({\sf F}_r)}\right)  g_{{\sf dx} ({\sf F}_{r})} = g_{{\sf dx}({\sf F}_r)+ {\sf dx}({\sf F}_l)} = g_{{\sf dx}({\sf F})}\sim {\sf HF}_{\mathbb{G}_{1}}^{\sf g} (l),
\end{align*}
where we used the fact that ${\sf dx}({\sf F}_r) +{\sf dx}({\sf F}_l) = {\sf dx}({\sf F})$.
\end{proof}

Let us consider a finite subset  $F=\{l_1,...,l_k\}$ of ${\sf L_{aff}}$. A fairly simple but important lemma is the following. 
\begin{lemme}
\label{lemme:graphe}
There exists a finite planar graph $\mathbb{G}$, whose edges are piecewise affine, such that any loop in $\{l_1,..., l_k\}$ is a concatenation of edges of $\mathbb{G}$. 
\end{lemme}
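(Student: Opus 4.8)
The plan is to take for $\mathbb{G}$ the planar arrangement generated by the finitely many affine segments out of which $l_1,\dots,l_k$ are built, suitably subdivided, and then to check by hand that this is a finite planar graph in the sense of the paper. Concretely, I would first write each $l_i$ as a concatenation $l_i=s_{i,1}\cdots s_{i,n_i}$ of affine segments, and collect all the underlying (unoriented) segments $s_{i,j}$ into a finite set $S$ of closed segments of $\mathbb{R}^2$. Let $P\subset\mathbb{R}^2$ be the finite set made of all endpoints of members of $S$ together with all points $s\cap s'$ where $s,s'\in S$ are not collinear; each such intersection is a single point, and two collinear members of $S$ overlap, if at all, in a segment whose endpoints already lie in $P$. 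For each $s\in S$ the finitely many points of $P$ lying on $s$ subdivide $s$ into finitely many \emph{elementary} segments, i.e. sub-segments whose relative interiors avoid $P$; let $\mathbb{E}$ be the set of all these elementary segments, each endowed with its two orientations, and $\mathbb{G}$ the associated object.

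Next I would verify the three axioms of a finite planar graph. Finiteness and stability of $\mathbb{E}$ under inversion are immediate from the construction. For the incidence axiom, take $e,e'\in\mathbb{E}$ with $e\ne(e')^{-1}$ and $x\in e\cap e'$. If the underlying members of $S$ are collinear, then after subdividing at all of $P$ the two elementary pieces are either equal as sets (so $e=e'$, or the excluded case $e=(e')^{-1}$) or meet in a single common endpoint, using that the endpoints of the overlap of two collinear members of $S$ belong to $P$. If they are not collinear, then $x$ is an endpoint of a member of $S$ or a transversal crossing, hence $x\in P$; since the relative interiors of $e$ and $e'$ avoid $P$, $x$ must be an endpoint of both. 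Thus distinct non-inverse edges meet, if at all, only at endpoints. Moreover the vertex set $\mathbb{V}=\bigcup_{e\in\mathbb{E}}e(0)$ equals $P$, and since $0$ is the basepoint of every $l_i$ it is an endpoint of $s_{i,1}$, so $0\in P=\mathbb{V}$ as required by the standing convention.

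It remains to check the face condition, and this is the one nontrivial point. Put $X=\bigcup_{s\in S}s$, the union of the images of $l_1,\dots,l_k$; since every $l_i$ passes through $0$, $X$ is compact and connected. The bounded faces of $\mathbb{G}$ are exactly the bounded connected components of $\mathbb{R}^2\setminus X$, equivalently the components of $S^2\setminus X$ other than the one containing $\infty$. By Alexander duality in $S^2$ one has $\tilde H_1(S^2\setminus X)\cong\tilde H^0(X)=0$ because $X$ is connected, so each such component is an open connected subsurface of $S^2$ with vanishing $H_1$; being planar it has free, hence trivial, fundamental group, so it is simply connected and therefore homeomorphic to an open disk. (Were $X$ disconnected one would instead add finitely many auxiliary segments joining distinct boundary components and resolving slits, but here this is unnecessary.) Finally, each $l_i$ is the concatenation of the $s_{i,j}$, and each $s_{i,j}$ is a concatenation of its elementary pieces with suitable orientations, all of which are edges of $\mathbb{G}$; hence every $l_i$ is a concatenation of edges of $\mathbb{G}$. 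The two steps I expect to require genuine care are the handling of overlapping collinear segments in the incidence check and the disk property of bounded faces — the latter really using that all the loops share the basepoint $0$, so that $X$ is connected — while everything else is routine bookkeeping.
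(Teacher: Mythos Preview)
The paper states this lemma without proof, calling it ``fairly simple but important,'' and immediately uses it. Your proposal fills this gap correctly: forming the arrangement of the constituent segments and subdividing at all endpoints and crossings is the natural construction, and your verification of the graph axioms is sound. The incidence axiom is handled carefully, including the collinear-overlap case, and the one genuinely nontrivial point---that bounded faces are open disks---is dealt with properly: since $X$ is a finite simplicial complex, Alexander duality applies in its ordinary form to give $\tilde H_1(S^2\setminus X)=0$, and the passage from $H_1=0$ to $\pi_1=0$ via freeness of the fundamental group of an open planar set, followed by the Riemann mapping theorem (or the classification of simply connected surfaces), is valid. Your observation that connectedness of $X$---guaranteed here because all $l_i$ are based at $0$---is what makes the face argument go through without adding auxiliary edges is exactly right; in more general settings one would need the parenthetical fix you mention.
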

Let  $\mathbb{G}$ be such a graph. We define the family: $$\left({\sf HF}^{\sf g}_{F}( l_1), ..., {\sf HF}^{\sf g}_{F}( l_k)\right) = \left( {\sf HF}^{\sf g}_{\mathbb{G}}(l_1),...,  {\sf HF}^{\sf g}_{\mathbb{G}}(l_k) \right).$$
Using the Theorem \ref{th:compatibilite}, the law of  $\left({\sf HF}^{\sf g}_{F}(l_1), ..., {\sf HF}_{F}^{\sf g}(l_k)\right)$ does not depend on the choice of graph $\mathbb{G}$. Thus, we have constructed, for any finite subset $F$ of ${\sf L_{aff}}$, a family of multiplicative functions $\left({\sf HF}^{\sf g}_{F}(l)\right)_{l \in {\sf L_{aff}}}$. Again, the Theorem \ref{th:compatibilite} allows us to see that:
\begin{align*}
\Big\{ \left({\sf HF}^{\sf g}_{F}(l)\right)_{l \in F} , F\in \mathcal{P}_{f}\left( {\sf L_{aff}}\right) \Big\}
\end{align*}
is a projective family. Since we supposed that the group $\Gamma$ satisfies the projective property, defined in Definition \ref{def:proj}, there exists a complete metric group $\left(\overline{\Gamma}, \overline{d}\right)$ endowed with an abstract law $\overline{\sim}$, an homeomorphism $\iota : \Gamma \to \overline{\Gamma}$ and a family 
\begin{align*}
\left({\sf HF}^{\sf g}(l)\right)_{l \in {\sf L_{aff}}} \in \mathcal{M}ult({\sf L_{aff}}, \overline{\Gamma})
\end{align*}
such that for any finite family $F$ of loops in ${\sf L_{aff}}$, 
\begin{align*}
({\sf HF}^{\sf g}(l))_{l \in F} \overline{\sim} \left(\iota \left({\sf HF}_{F}^{\sf g}(l)\right)\right)_{l \in F}. 
\end{align*}

Thus for any purely braidable stationary process ${\sf g} = (g_t)_{t \geq 0}$ in $\Gamma$, we managed to construct an element ${\sf HF}^{\sf g}$ of $\mathcal{M}ult({\sf L_{aff}}, \overline{\Gamma})$, such that for any finite planar graph $\mathbb{G}$ whose edges are piecewise affine, for any family of loops $(l_1,...,l_k) \in \pi_1(\mathbb{G})$, 
\begin{align}
\label{eq:proj}
\left( {\sf HF}^{\sf g}(l_1), ..., {\sf HF}^{\sf g}(l_k) \right) \overline{\sim} \left( \iota\left( {\sf HF}_{\mathbb{G}}^{\sf g}(l_1) \right) , ...,  \iota\left( {\sf HF}^{\sf g}_{\mathbb{G}}(l_k) \right)\right).
\end{align}
In the next section, we will prove that, under some analytical condition on ${\sf g}$, one can extend by continuity ${\sf HF}^{\sf g}$ in order to get an element of $\mathcal{M}ult({\sf L}, \overline{\Gamma})$ such that for any finite planar graph $\mathbb{G}$, for any family of loops $(l_1,...,l_k) \in \pi_1(\mathbb{G})$, the Equation (\ref{eq:proj}) is satisfied. 

\subsubsection{On rectifiable loops}

In this section we will extend ${\sf HF}^{\sf g}$ in order to define a function in $\mathcal{M}ult({\sf L}_0, \overline{\Gamma})$. From now on, we will denote the function ${\sf HF}^{\sf g}$ defined on ${\sf L}_{{\sf aff}}$ by ${\sf HF}^{\sf g}_{\sf aff}$:  the name ${\sf HF}^{\sf g}$ will be used for the extension of ${\sf HF}^{\sf g}_{\sf aff}$. 

In order to extend ${\sf HF}^{\sf g}_{\sf Aff}$, we will need to suppose that the process ${\sf g} = (g_{t})_{t \geq 0}$ satisfies the following condition: there exists a constant $K \geq 0$ such that, for any $t \geq 0$,
\begin{align}
\label{eq:inegalite}
d(e,g_{t}) \leq K \sqrt{t}.
\end{align}
From now on, we will always suppose this analytical bound on ${\sf g}$. 

In fact, the analytical bound $(\ref{eq:inegalite})$ implies that $(g_t)_{t \geq 0}$ is continuous. Indeed, for any positive real numbers $t$ and $\epsilon$, using the invariance of $d$, $d\left(g_t, g_{t+\epsilon}\right) = d\left(e,g_{t+\epsilon} g_{t}^{-1}\right)$. Yet, using the first axiom in Definition \ref{def:purebraid}, $\left(g_t, g_{t+\epsilon}g_{t}^{-1}\right) \sim \left(g_{t+\epsilon}g_{\epsilon}^{-1}, g_{\epsilon}\right)$. Thus, $g_{t+\epsilon}g_{t}^{-1} \sim g_{\epsilon}$ and by compatibility of the distance and the abstract law, $d(e,g_{t+\epsilon}g_{t}^{-1}) = d(e, g_{\epsilon}) \leq K \sqrt{\epsilon}$.

Let us consider any simple loop $l \in {\sf L_{aff}}$ bounding a domain $D$.  By the projectivity property, ${\sf HF}^{{\sf g}}_{\sf Aff}(l) \overline{\sim} \iota\left( {\sf HF}_{\{l\}}^{\sf g}(l) \right)$. Using the second and fourth conditions in Definition \ref{def:proj}: 
\begin{align*}
\overline{d}( e,{\sf HF}_{\sf Aff}^{{\sf g}}(l)  ) = \overline{d}\left(\iota(e),  \iota\left( {\sf HF}_{\{l\}}^{\sf g}(l) \right)  \right) = d\left( e, {\sf HF}_{\{l\}}^{\sf g}(l)\right).
\end{align*}
Since, by definition, ${\sf HF}_{\{l\}}^{\sf g}(l)$ has the same law as $g_{{\sf dx}(D)}$, using the compatibility of the abstract law with the distance on $\Gamma$, $d\left( e, {\sf HF}_{\{l\}}^{\sf g}(l)\right) =d\left( e,  g_{{\sf dx}(D)}\right) $. Thus, for any simple loop $l \in {\sf L_{aff}}$ bounding a domain $D$, 
\begin{align*}
\overline{d}(e, {\sf HF}_{\sf aff}^{\sf g}(l)) \leq K \sqrt{{\sf dx} (D)}. 
\end{align*}
This allows us to apply the following theorem proved by T. Lévy in \cite{Levy2010}.

\begin{theorem}[Theorem 3.3.1 in \cite{Levy2010}]
\label{main}
Let ${\sf H} \in \mathcal{M}ult( {\sf L_{aff}}, \overline{\Gamma})$ be a multiplicative function. Assume that there exists $K \geq 0$ such that for all simple loop $l \in {\sf L_{aff}}$ bounding a domain $D$ and such that $\ell(l) \leq K^{-1}$, the following inequality holds: 
\begin{align}
\label{hold}
\overline{d}(e,{\sf H}(l)) \leq K \sqrt{{\sf dx}(D)}.
\end{align}
 Then the function ${\sf H}$ admits a unique extension as an element of $\mathcal{M}ult({\sf L}_0, \overline{\Gamma})$ which is continuous for the convergence with fixed endpoints.
\end{theorem}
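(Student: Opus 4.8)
The plan is to obtain the extension by a density-and-continuity argument, so that the whole statement reduces to one quantitative modulus-of-continuity estimate for ${\sf H}$ on ${\sf L_{aff}}$, together with soft arguments inside the complete metric group $(\overline{\Gamma},\overline{d})$. \emph{Uniqueness} is the easy half: the piecewise affine loops are dense in $({\sf L}_{0},d)$ for convergence with fixed endpoints --- every rectifiable loop is the uniform limit, with converging lengths, of its inscribed piecewise affine approximations --- and two continuous multiplicative functions that agree on a dense subset coincide. The heart of the matter is thus the following \emph{quantitative bound}: there is a function $\Phi_{R}\colon[0,\infty)\to[0,\infty)$ with $\Phi_{R}(0^{+})=0$ such that $\overline{d}({\sf H}(l),{\sf H}(l'))\le\Phi_{R}(\varepsilon)$ whenever $l,l'\in{\sf L_{aff}}$ are based at $0$, $\ell(l),\ell(l')\le R$, and $\varepsilon$ bounds both the uniform distance between $l$ and $l'$ and $|\ell(l)-\ell(l')|$ (for a fixed rectifiable $l$ one then applies this with $R=\ell(l)+1$).

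To prove the bound, use multiplicativity together with the bi-invariance of $\overline{d}$: $\overline{d}({\sf H}(l),{\sf H}(l'))=\overline{d}\big(e,{\sf H}(l')^{-1}{\sf H}(l)\big)=\overline{d}\big(e,{\sf H}(l\,l'^{-1})\big)$, so everything reduces to estimating $\overline{d}(e,{\sf H}(c))$ for the ``difference loop'' $c=l\,l'^{-1}$, which has length $\le 2R$. Draw $c$ on a finite planar graph $\mathbb{G}$ all of whose faces $F$ are small enough to be bounded by a simple loop $c_{F}$ of length $\le K^{-1}$. Expanding $[c]\in\pi_{1}(\mathbb{G})$ along the lasso generators $p_{F}\,c_{F}^{\pm1}\,p_{F}^{-1}$ attached to the face loops and using multiplicativity, ${\sf H}(c)$ becomes a product of conjugates of the elements ${\sf H}(c_{F})$; since conjugation is a $\overline{d}$-isometry (translations and inversion are isometries on $\overline{\Gamma}$), each factor lies within $\overline{d}(e,{\sf H}(c_{F}))\le K\sqrt{{\sf dx}(F)}$ of $e$ by hypothesis, whence $\overline{d}(e,{\sf H}(c))\le K\sum_{F}|n_{F}(c)|\,\sqrt{{\sf dx}(F)}$, where $n_{F}(c)$ denotes the winding number of $c$ around $F$.

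The hard part is to show that this last sum tends to $0$ with $\varepsilon$, uniformly for $\ell(l),\ell(l')\le R$: the number of faces of $\mathbb{G}$ may be enormous and the triangle inequality alone is far too lossy. One exploits two facts. First, $c=l\,l'^{-1}$ is \emph{thin}: the straight-line homotopy between $l$ and $l'$ fixes the base point and stays in the $\varepsilon$-tube $T_{\varepsilon}$ about the image of $l$, so $n_{F}(c)=0$ for every face $F$ meeting the complement of $T_{\varepsilon}$, while $|T_{\varepsilon}|$ is at most of order $R\varepsilon$; combined with an isoperimetric-type inequality for winding functions (of Banchoff--Pohl type, $\int n_{c}^{2}\,{\sf dx}\le\ell(c)^{2}/4\pi$), this forces the ``swept area'' $\sum_{F}|n_{F}(c)|\,{\sf dx}(F)$ to be small. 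Second --- and this is the technical core, carried out in \cite{Levy2010} --- the lassos attached to the pairwise disjoint faces of $\mathbb{G}$ contribute to $\overline{d}(e,{\sf H}(c))$ in a better-than-additive, quasi-Pythagorean manner, so that the estimate is ultimately governed by the square root of the swept area rather than by the sum of the $\sqrt{{\sf dx}(F)}$; this is exactly where the \emph{quadratic} shape $K\sqrt{{\sf dx}(D)}$ of the hypothesis is indispensable. The contribution of the length defect $|\ell(l)-\ell(l')|$ is handled by an analogous but easier reduction, inserting short bridging segments that cancel in pairs. Putting these together produces the modulus $\Phi_{R}$, and this combinatorial control of the winding-weighted face areas for thin loops, beyond naive subadditivity, is the main obstacle of the proof.

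Granting the bound, the conclusion follows by routine steps. For $l\in{\sf L}_{0}$ choose piecewise affine $l_{n}\to l$; the bound shows that $({\sf H}(l_{n}))_{n}$ is $\overline{d}$-Cauchy, hence convergent since $\overline{\Gamma}$ is complete, and we set ${\sf H}(l)=\lim_{n}{\sf H}(l_{n})$; applying the bound to an interlacing of two approximating sequences shows this limit does not depend on the choice, so ${\sf H}$ is well defined on ${\sf L}_{0}$. Multiplicativity passes to the limit because in a group whose left and right translations are isometries the product map is jointly continuous and inversion is continuous; and the same bound, now applied to arbitrary loops approximated by affine ones, yields continuity of the extension for convergence with fixed endpoints. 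Together with the uniqueness noted above, this proves the theorem.
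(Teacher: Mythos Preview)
The paper does not give its own proof of this statement: it is quoted verbatim as Theorem~3.3.1 of \cite{Levy2010} and used as a black box, so there is nothing to compare your argument against here. Your outline is in the spirit of L\'evy's actual proof (density of piecewise affine loops, reduction via bi-invariance to a ``difference loop'' $c=l\,l'^{-1}$, and a quantitative control of $\overline{d}(e,{\sf H}(c))$ by a swept-area quantity), and you correctly flag that the technical core lies in \cite{Levy2010}.

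One point is imprecise and would be a genuine gap if taken at face value. From the lasso decomposition and the triangle inequality you do \emph{not} get $\overline{d}(e,{\sf H}(c))\le K\sum_F |n_F(c)|\sqrt{{\sf dx}(F)}$: what the triangle inequality yields is a bound by $K\sum_j \sqrt{{\sf dx}(F_{i_j})}$, where the sum runs over the letters of the reduced word of $[c]$ in the lasso basis. The total exponent of the generator for $F$ in that word is indeed the winding number $n_F(c)$, but the number of occurrences of that letter can exceed $|n_F(c)|$ (cancellations in the abelianisation), and it is this word length, not $\sum_F|n_F(c)|$, that the naive estimate controls. L\'evy's argument does not proceed by writing down such a global inequality and then bounding the sum; it works with the \emph{specific} dyadic approximations $D_n(l)$ and a careful inductive/telescoping estimate on $d(H_{D_m(l)},H_{D_n(l)})$, leading to the explicit modulus $K\,\ell(l)^{3/4}(\ell(l)-\ell(D_n(l)))^{1/4}$ quoted later in the present paper as Proposition~\ref{main bound}. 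Your invocation of a Banchoff--Pohl type bound on $\int n_c^2\,{\sf dx}$ and of a ``quasi-Pythagorean'' improvement is heuristically on target for why the $\sqrt{\cdot}$ hypothesis is the right one, but as written it does not substitute for that combinatorial analysis. Since you explicitly defer the core to \cite{Levy2010}, this is more an imprecision in the intermediate inequality than a fatal error, but you should not present $\sum_F|n_F(c)|\sqrt{{\sf dx}(F)}$ as a consequence of the triangle inequality.
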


This theorem allows us to extend the function ${\sf HF}^{\sf g}_{\sf Aff}$, and thus to prove the following result.

\begin{theorem}
\label{th:construction}
Let ${\sf g} = (g_t)_{t \geq 0}$ be a purely braidable stationary process satisfying \eqref{eq:inegalite}. There exists a complete metric group $\left( \overline{\Gamma}, \overline{d} \right)$, endowed with an abstract law $\overline{\sim}$, which satisfies the first and the last three conditions of Definition \ref{def:proj}, and there exists a multiplicative function ${\sf HF}^{{\sf g}}$ in $\mathcal{M}ult({\sf L}_{0}, \overline{\Gamma})$ such that:
\begin{description}
\item[1-Invariance by area-preserving Lipschitz homeomorphisms: ]for any Lipschitz homeomorphism $\psi$ which preserves the Lebesgue measure on the plane, for any $n$-tuple of loops $l_1$, ..., $l_n$ which are sent by $\psi$ on $n$ loops in $\sf{L_0}$, 
\begin{align*}
\left({\sf HF}^{\sf g}(l_1), ..., {\sf HF}^{\sf g}(l_n)\right) \overline{\sim} \left({\sf HF}^{\sf g}(\psi(l_1)), ..., {\sf HF}^{\sf g}(\psi(l_n))\right). 
\end{align*}
\item[2-Finite dimensional law: ]for any finite planar graph $\mathbb{G}$, for any enumeration of the bounded faces $(F_1,...,F_k)$, for any family of loops $(l_1,...,l_k) \in \mathcal{F}(\mathbb{G})$, 
\begin{align*}
\left({\sf HF}^{\sf g}(l_i)\right)_{i=1}^{k} \overline{\sim} \left(\iota\left(g_{\sum_{j=1}^{i}{\sf dx}(F_j)} \left(g_{\sum_{j=1}^{i-1}{\sf dx}(F_j)}\right)^{-1}\right)\right)_{i=1}^{k},
\end{align*}
\item[3-Continuity: ]the function  ${\sf HF}^{{\sf g}}$ is continuous for the convergence with fixed-endpoints. 
\end{description}
\end{theorem}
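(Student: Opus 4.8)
The plan is to assemble the pieces already built in this subsection; the only genuinely new point is that the abstract framework $(\Gamma,d,\sim)$ was set up precisely so that everything below applies verbatim to both the classical and the non-commutative notion of law, so the argument is essentially that of \cite{Holonomy,Levy2010}. First I would fix $(\overline{\Gamma},\overline{d},\overline{\sim})$ and $\iota$ by applying the projective property (Definition \ref{def:proj}) to the projective family $\big\{({\sf HF}^{\sf g}_F(l))_{l\in F}:F\in\mathcal{P}_{f}({\sf L_{aff}})\big\}$ constructed above. By construction the resulting $(\overline{\Gamma},\overline{d})$ is a complete metric group carrying an abstract law $\overline{\sim}$, with a Lipschitz homeomorphism $\iota:\Gamma\to\overline{\Gamma}$ satisfying $\overline{d}\circ(\iota\times\iota)=d$, on which translations and inversion are isometries, whose distance is compatible with $\overline{\sim}$, and which is closed under limits; that is, $\overline{\Gamma}$ satisfies the first and the last three conditions of Definition \ref{def:proj}, as required. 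It also provides ${\sf HF}^{\sf g}_{\sf aff}\in\mathcal{M}ult({\sf L_{aff}},\overline{\Gamma})$ realising \eqref{eq:proj}. Since, as shown just before the statement, \eqref{eq:inegalite} combined with \eqref{eq:proj} gives $\overline{d}(e,{\sf HF}^{\sf g}_{\sf aff}(l))\leq K\sqrt{{\sf dx}(D)}$ for every simple piecewise affine loop $l$ bounding a domain $D$, Theorem \ref{main} applies and yields a unique continuous extension ${\sf HF}^{\sf g}\in\mathcal{M}ult({\sf L}_0,\overline{\Gamma})$; this is property 3.

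Next I would prove properties 1 and 2 first for piecewise affine data. For a finite planar graph $\mathbb{G}$ with piecewise affine edges, an enumeration $(F_1,\dots,F_k)$ of its bounded faces, and ${\sf l}=(l_1,\dots,l_k)\in\mathcal{F}(\mathbb{G})$, the loop $l_i$ is the $i$-th free generator of $\pi_1(\mathbb{G})$, so its reduced word is $e_i$ and hence, by definition, ${\sf HF}^{\sf g}_{\mathbb{G}}(l_i)={\sf H}_{{\sf l},{\sf h}}(l_i)=h_i=g_{\sum_{j\leq i}{\sf dx}(F_j)}(g_{\sum_{j<i}{\sf dx}(F_j)})^{-1}$; together with \eqref{eq:proj} this is exactly property 2 for such $\mathbb{G}$. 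For property 1, given piecewise affine loops $l_1,\dots,l_n\in\pi_1(\mathbb{G})$ and an area-preserving Lipschitz homeomorphism $\psi$ (which then maps $\mathbb{G}$ to a graph $\mathbb{G}'$), Theorem \ref{th:base}(3) sends $\mathcal{F}(\mathbb{G})$ to $\mathcal{F}(\mathbb{G}')$ and, $\psi$ preserving the area of each face, the tuples ${\sf h}$ attached to $\mathbb{G}$ and to $\mathbb{G}'$ coincide; expressing the $l_i$ and the $\psi(l_i)$ in free generating families and using the multiplicativity of ${\sf HF}^{\sf g}$ together with the axioms of an abstract law, both $({\sf HF}^{\sf g}(l_i))_i$ and $({\sf HF}^{\sf g}(\psi(l_i)))_i$ turn out to be $\overline{\sim}$ to the same tuple $(\iota(h_i))_i$.

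Finally I would remove the piecewise affine restriction by a density-and-continuity argument: approximate an arbitrary finite planar graph $\mathbb{G}$ by finite planar graphs $\mathbb{G}_n$ with piecewise affine edges whose free generating loops converge, with fixed endpoints, to those of $\mathbb{G}$ and whose face areas converge; the piecewise affine case gives the $\overline{\sim}$ relation along the sequence, the continuity of ${\sf HF}^{\sf g}$ (from the first paragraph) and of $t\mapsto g_t$ (the latter established above from \eqref{eq:inegalite}) together with $\iota$ being a homeomorphism make both sides converge, and the last axiom of Definition \ref{def:proj} transfers the relation to the limit, yielding property 2 in full. Property 1 in full then follows, either by applying the general property 2 to any graphs carrying the $l_i$ and the $\psi(l_i)$ and invoking area preservation, or by approximating the $l_i$ by piecewise affine loops and passing to the limit as before.

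The hard part is precisely this last approximation step: producing piecewise affine graphs (and, where needed, piecewise affine area-preserving homeomorphisms) for which the relevant loops and all the relevant face areas converge simultaneously in the fixed-endpoint topology. This is, however, not new — it is carried out in \cite[Section 8]{Holonomy} and in \cite{Levy2010}, and the arguments there use only the abstract data $(\Gamma,d,\sim)$, the projective property, and Theorem \ref{main}, so they apply here unchanged; the point of the present construction is simply that the entire machinery has been phrased abstractly enough that substituting the non-commutative notion of law for the classical one requires no modification.
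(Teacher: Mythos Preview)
Your proposal is correct and follows essentially the same route as the paper: apply the projective property to get $(\overline{\Gamma},\overline{d},\overline{\sim})$ and ${\sf HF}^{\sf g}_{\sf aff}$, extend continuously via Theorem~\ref{main}, verify property~2 first for piecewise affine graphs and then for general graphs by the approximation of \cite[Theorem~3.2]{Holonomy} (together with the continuity of $t\mapsto g_t$ and the closure of $\overline{\sim}$ under limits), and deduce property~1 from the general property~2 via Theorem~\ref{th:base}(3). The only caveat is that your ``property~1 for piecewise affine data'' in the middle paragraph is not a self-contained step, since $\psi(\mathbb{G})$ need not be piecewise affine; but you correctly fix this in the last paragraph by deriving property~1 from the general property~2, exactly as the paper does.
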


\begin{proof}
We will only sketch the proof of this theorem since it follows the same ideas used in the proof of Propositions $8.2$ and $8.4$ of \cite{Holonomy}. 

We have seen that we can extend the function ${\sf HF}^{\sf g}_{{\sf Aff}}$ in order to get a multiplicative function ${\sf HF}^{\sf g}$ in $\mathcal{M}ult({\sf L}_{0}, \overline{\Gamma})$ which is continuous for the convergence with fixed-endpoints. Besides, by definition it satisfies the finite dimensional law condition when $\mathbb{G}$ is a finite planar graph whose edges are piecewise affine. 

It remains to prove that the finite dimensional law condition is valid for any finite planar graph $\mathbb{G}$ and that the invariance by area-preserving Lipschitz homeomorphisms is satisfied by  ${\sf HF}^{\sf g}$. 

Let us consider a finite planar graph $\mathbb{G}$. We recall that we have always assumed that $0$ is a vertex of $\mathbb{G}$. Using Theorem $3.2$ in \cite{Holonomy}, one can approximate $\mathbb{G}$ by a sequence $\left( \mathbb{G}_n\right)_{n \in \mathbb{N}}$ of finite planar graphs with piecewise affine edges  in such a way that $\mathbb{G}_n$ is the image of $\mathbb{G}$ by a Lipschitz homeomorphism $\psi_n$ and for any bounded face $F$ of $\mathbb{G}$, ${\sf dx}\left(\psi_n(F)\right)$ converges to ${\sf dx}\left(F\right)$. Besides, the vertices of $\mathbb{G}_n$ are equal to the vertices of $\mathbb{G}$. If $(F_1,...,F_k)$ is an enumeration of the bounded faces of $\mathbb{G}$, and $(l_1, ..., l_n) \in \mathcal{F}(\mathbb{G})$, then for any positive integer $n$, $(\psi_n(l_1), ..., \psi_n(l_k))$ is in $\mathcal{F}(\mathbb{G}_n)$. Thus,
\begin{align*}
\Big({\sf HF^{g}} \left( \psi_n(l_i)\right)\Big)_{i=1}^{k} \overline{\sim} \left(\iota\left(g_{\sum_{j=1}^{i}{\sf dx}(\psi_n(F_j))} \left(g_{\sum_{j=1}^{i-1}{\sf dx}(\psi_n(F_j))}\right)^{-1}\right)\right)_{i=1}^{k}.
\end{align*} 
Using the continuity of the field ${\sf HF^{g}}$, the left hand side converges to $\left({\sf HF^{g}} \left( l_i \right)\right)_{i=1}^{k}$ and using the analytical condition on ${\sf g}$, the different axioms and the remark about the continuity of the process ${\sf g}$ explained after the Equation \ref{eq:inegalite}, the right hand side is converging to $ \left(\iota\left(g_{\sum_{j=1}^{i}{\sf dx}(F_j)} \left(g_{\sum_{j=1}^{i-1}{\sf dx}(F_j)}\right)^{-1}\right)\right)_{i=1}^{k}$. This implies that: 
\begin{align*}
\left({\sf HF^{g}} \left( l_i \right)\right)_{i=1}^{k} \overline{\sim} \left(\iota\left(g_{\sum_{j=1}^{i}{\sf dx}(F_j)} \left(g_{\sum_{j=1}^{i-1}{\sf dx}(F_j)}\right)^{-1}\right)\right)_{i=1}^{k}.
\end{align*}

In order to prove the invariance by area-preserving diffeomorphisms, using the continuity of the field ${\sf HF^{g}}$, it is enough to consider loops with piecewise affine edges, and thus, using Lemma \ref{lemme:graphe}, we have to prove that for any finite planar graph $\mathbb{G}$ and $\mathbb{G'}$ with $k$ bounded faces, if $\psi$ preserves the Lebesgue measure, and if $\mathbb{G}' = \psi(\mathbb{G})$, then for any loops $(l_1,...,l_n)$ in $\mathbb{G}$, based at $0$, $\left({\sf HF^{g}}(\psi(l_1)),...,{\sf HF^{g}}(\psi(l_n))\right)$ has the same law as $\left({\sf HF^{g}}(l_1),...,{\sf HF^{g}}(l_n)\right)$. This boils down to prove that for any $(l_1,...,l_k) \in \mathcal{F}(\mathbb{G})$, $\left({\sf HF^{g}}(\psi(l_1)), ..., {\sf HF^{g}}(\psi(l_k))\right)$ has the same law as $\left({\sf HF^{g}}(l_1),...,{\sf HF^g}(l_k)\right)$ which is true since $\left(\psi(l_1),...,\psi(l_k)\right) \in \mathcal{F}(\mathbb{G}')$ and for any bounded face $F$ of $\mathbb{G}$, ${\sf dx}(F) = {\sf dx}(\psi(F))$.
\end{proof}

When ${\sf g} = (g_{t})_{t \geq 0}$ is a usual or a free Lévy process, the finite dimensional law condition will imply the independence property of either Definition \ref{def:Markplanar} or Definition \ref{def:Levymaster}.

\subsubsection{Planar Markovian holonomy fields and free planar Markovian holonomy fields\label{From Levy to HF}}
In this section, we explain how one can use the abstract setting constructed in the previous sections in order to construct planar Markovian holonomy fields and free planar Markovian holonomy fields. The construction that we get for  planar Markovian holonomy fields is the same than the one done in \cite{Holonomy}.

Let us consider a continuous process of simple loops of ${\sf L}_0$ such that the interiors of the increments are disjoint and such that ${\sf dx}\left({\sf Int}(l(t))\right)=t$. Let us recall that $G$ is a compact Lie group, and $(\mathcal{A}, \tau)$ holds for a non-commutative probability space. Let $(Y_t)_{t \geq 0}$ and $(y_t)_{t \geq 0}$ be respectively a $G$-valued Lévy process and a $\mathcal{A}$-valued free Lévy process. Let us suppose that $(Y_t)_{t \geq 0}$ is invariant in law by conjugation by the group $G$.  

\begin{lemme}
The processes  $(Y_t)_{t \geq 0}$ and $(y_t)_{t \geq 0}$ are purely braidable stationary processes. 
\end{lemme}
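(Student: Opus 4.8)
The statement to prove is that a $G$-valued Lévy process $(Y_t)_{t\geq 0}$ invariant in law by conjugation, and a free Lévy process $(y_t)_{t\geq 0}$, are both purely braidable stationary processes in the sense of Definition~\ref{def:purebraid}. Both conditions in that definition concern \emph{laws} (respectively non-commutative distributions) of tuples built from increments, so the strategy is to check the two bullets of Definition~\ref{def:purebraid} separately, using only the defining properties of a (free) Lévy process: stationarity of increments, (free) independence of increments, and the fact that $Y_0=e$ (resp.\ $y_0=1$).

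First I would treat the stationarity-type axiom (the first bullet): for a finite set $T$, positive reals $(x_t)_{t\in T}$, and two total orders $\leq$ and $\preceq$, one must compare the law of the tuple of ``consecutive increments'' $\bigl(g_{\sum_{t'\leq t}x_{t'}}g_{\sum_{t'<t}x_{t'}}^{-1}\bigr)_{t\in T}$ across the two orderings. The key observation is that, regardless of the order chosen, this tuple is a reindexing of the family of increments of $(g_s)_{s}$ over a common partition of $[0,\sum_t x_t]$ into $\#T$ consecutive intervals of lengths $(x_t)_{t\in T}$; changing the order only permutes which increment gets which label and does not change which sub-intervals appear. By the independence of increments of a (free) Lévy process together with stationarity, the joint law of the family of increments over a fixed partition into consecutive intervals is a product (resp.\ free convolution factorization) of the individual laws, each depending only on the length of its interval. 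Hence permuting the labels leaves the joint law invariant, which is exactly the required relation $\sim$. For the classical case this is literally equality of laws; for the free case one invokes that the non-commutative distribution of a tuple of freely independent elements is determined by the individual distributions, so a label permutation preserves it.

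Next I would handle the braid-invariance axiom (the second bullet): for $t_1<\dots<t_k$ one must show $\bigl(g_{t_k}g_{t_{k-1}}^{-1},\dots,g_{t_1}g_0^{-1}\bigr)$ is purely invariant by braids, i.e.\ for every braid $\beta$, $\beta\bullet(g_1,\dots,g_k)\sim\sigma_\beta\bullet(g_1,\dots,g_k)$ where $g_i=g_{t_{k-i+1}}g_{t_{k-i}}^{-1}$. Since the $g_i$ are the increments over disjoint consecutive intervals, they are (freely) independent, each with its own law. It then suffices to check the relation on the generators $\beta_i$ of $\mathcal{B}_k$ and to argue that both sides transform consistently under composition; for a generator $\beta_i$, the braid action replaces $(x_i,x_{i+1})$ by $(x_ix_{i+1}x_i^{-1},x_i)$ while $\sigma_{\beta_i}$ simply swaps them. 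Using (free) independence and the fact that $x\mapsto x_ix_{i+1}x_i^{-1}$ has, given $x_i$, the same conditional law as $x_{i+1}$ (because conjugation is a law-preserving operation here — in the classical case by the conjugation-invariance hypothesis on $(Y_t)$, in the free case automatically, since conjugation by a unitary preserves non-commutative distributions), one gets that $(x_ix_{i+1}x_i^{-1},x_i)$ and $(x_{i+1},x_i)$ have the same (joint, non-commutative) law, and this pair is independent of the untouched coordinates, so the full tuples agree in law. This is precisely the computation carried out in the setting of \cite{Holonomy}; I would cite the relevant lemma there (the braid-action computation after Definition~$7.4$ of \cite{Holonomy}) and adapt it, replacing ``law'' by ``non-commutative distribution'' in the free case.

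The main obstacle I anticipate is the free case of the conjugation step: one must be careful that ``$(x_ix_{i+1}x_i^{-1},x_i)\sim(x_{i+1},x_i)$ jointly, and independently of the rest'' really does follow from freeness plus the unitary-conjugation-invariance of non-commutative distributions, rather than from a classical conditioning argument that has no literal free analogue. The clean way around this is to phrase everything purely in terms of moments: freeness lets one compute any mixed moment of the tuple from the individual distributions and the combinatorial position of the letters, and one checks that the moment formula is unchanged when $x_{i+1}$ is replaced by $x_ix_{i+1}x_i^{-1}$ (the extra $x_i,x_i^{-1}$ telescope against neighbouring factors using traciality and unitarity of $x_i$) and the labels $i,i+1$ are swapped. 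Since this parallels exactly the classical bookkeeping — just with free independence in place of tensor independence — and since \cite{Holonomy} already did the classical version, I would present the argument as ``the proof of \cite{Holonomy} goes through verbatim, reading `non-commutative distribution' for `law' and `free independence' for `independence','' spelling out only the conjugation-invariance point that is genuinely different in the free setting.
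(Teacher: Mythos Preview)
Your proposal is correct and follows essentially the same route as the paper: cite \cite{Holonomy} for the classical process, and for the free process verify the two bullets of Definition~\ref{def:purebraid} using stationarity and free independence of increments, reducing braid-invariance to the elementary generators.

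One point worth noting: the paper makes the free case cleaner than you do. You frame the conjugation step as a delicate issue (``the main obstacle I anticipate\ldots'') and propose to resolve it by expanding mixed moments via freeness and checking that the extra $x_i,x_i^{-1}$ telescope. The paper instead observes that \emph{any} pair $(a,b)$ of invertible elements in a tracial non-commutative probability space is purely braid-invariant, simply because
\[
\tau\big(P(a,\,aba^{-1})\big)=\tau\big(P(aaa^{-1},\,aba^{-1})\big)=\tau\big(a\,P(a,b)\,a^{-1}\big)=\tau\big(P(a,b)\big)
\]
for every non-commutative monomial $P$. No freeness is needed at this step; traciality alone does the job. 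Freeness is then used only to pass from the $2$-strand statement to the $k$-strand one, exactly as you outline. Also, a small slip in your first-bullet argument: changing the total order does change the sub-intervals of $[0,\sum_t x_t]$, not merely relabel them; what stays fixed is the \emph{length} $x_t$ assigned to each label $t$, and it is stationarity plus (free) independence that makes the joint law depend only on those lengths.
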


\begin{proof}
The result for $(Y_t)_{t \geq 0}$ is a direct consequence of Proposition $7.2$ of \cite{Holonomy}. Let us consider $(y_t)_{t \geq 0}$. Since it has stationary and freely independent increments, the first property is satisfied. It remains to prove the purely invariance by braids of the increments. Yet, using again the freeness property of the increments, and the fact that the group of braids with $n$ strands is generated by the elementary braids $(\beta_i)_{i=1}^{n-1}$, it remains to prove that for any real $0<t_1<t_2$, $\left(g_{t_2}g_{t_1}^{-1}, g_{t_1}\right)$ is purely invariant by braids. This is a consequence from the fact that if $a$ and $b$ are two elements in $\mathcal{A}$, $(a,b)$ is always purely invariant by braids. Indeed, for any non-commutative monomial $P$ in $a$ and $b$: 
\begin{align*}
\tau\left(P(a,aba^{-1})\right) = \tau\left(P(aaa^{-1},aba^{-1})\right)  =  \tau\left(aP(a,b)a^{-1}\right) = \tau\left(P(a,b)\right).
\end{align*} 
Since the elementary braid $\beta_1$ generates the group of braids with two strands, for any braid $\beta \in \mathcal{B}_2$, $\beta\bullet (a,b)$ has the same law as $\sigma_{\beta} \bullet(a,b)$.  
\end{proof}
Let ${\sf L}$ be a subset of ${\sf L}_{0}$. For any loop $l \in {\sf L}$, the canonical projection defined on $\mathcal{M}ult\left({\sf L}, G\right)$:
\begin{align*}
\pi_{l} : \mathcal{M}ult\left({\sf L}, G\right)& \to G\\
h& \mapsto h(l)
\end{align*}
will simply be denoted by $l$, without any reference to the set ${\sf L}$. In the following we are going to change the space on which $(Y_t)_{t \geq 0}$ is defined in order to get the projective property as explained in Definition \ref{def:proj}. This will allow us to define the group $\Gamma$ and the process $(g_t)_{t \geq 0}$ used in Theorem \ref{th:construction}. We will do the same for the non-commutative algebra $\mathcal{A}$ and the free Lévy process $(y_t)_{t \geq 0}$.

\begin{lemme}
\label{lem:version1}
There exists a measure of probability on $\mathcal{M}ult\left((l_t)_{t \geq 0}, G\right)$, denoted by $\mathbb{P}$, such that the canonical process of projections $(l_t)_{t \geq 0}$, defined on the probability space $\left(\mathcal{M}ult\left((l_t)_{t \geq 0}, G\right), \mathbb{P}\right)$ has the same law as $(Y_t)_{t \geq 0}$. 
\end{lemme}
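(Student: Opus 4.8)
The plan is to build the desired probability measure as the pushforward of the law of the $G$-valued Lévy process $(Y_t)_{t\geq 0}$ under the map that records the trajectory as a multiplicative function on the monothetic set of loops $(l_t)_{t\geq 0}$. The key point is that the process $(l_t)_{t\geq 0}$ of simple loops, with $\mathrm{Int}(l(s))\subset \mathrm{Int}(l(t))$ for $s\leq t$, is totally ordered by inclusion of interiors; concatenation of such loops is compatible with this order in the sense that for $s\leq t$ the loop $l_t$ is homotopic (in any graph subordinate to finitely many of them) to a concatenation $l_t\sim (l_t l_s^{-1})\, l_s$, where $l_t l_s^{-1}$ bounds the annular region of area $t-s$. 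So a multiplicative function $h\in \mathcal{M}ult((l_t)_{t\geq 0},G)$ is entirely determined by the one-parameter family $h(l_t)\in G$, subject only to the multiplicativity constraints $h(l_t)=h(l_s)\cdots$ encoded by \eqref{eq1}--\eqref{eq2}; conversely any càdlàg-type assignment $t\mapsto Y_t$ yields such an $h$ by reading increments.

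First I would make precise the identification $\Phi:\Gamma^{[0,\infty)}\supset\{\text{suitable paths}\}\to \mathcal{M}ult((l_t)_{t\geq 0},G)$ sending a trajectory $(Y_t)_t$ to the multiplicative function $h$ with $h(l_t)=Y_t$ (and $h$ extended to all concatenations/inversions of the $l_t$'s by \eqref{eq1}--\eqref{eq2}); one must check this is well defined, i.e.\ that the values forced on products of the $l_t$ are consistent, which follows because $\pi_1$ of any finite graph containing finitely many $l_{t_i}$ is free on the loops bounding its faces and each $l_{t_i}$ is a product of consecutive such generators. Second I would check that $\Phi$ is measurable with respect to the Borel $\sigma$-algebra $\mathcal{B}$ on $\mathcal{M}ult((l_t)_{t\geq0},G)$ defined earlier (the one making all evaluation maps $h\mapsto h(c)$ measurable), since each $h\mapsto h(l_t)$ corresponds under $\Phi$ to $Y_t$, and the evaluation maps at arbitrary concatenations are then continuous functions of finitely many $Y_{t_i}$. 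Third I would set $\mathbb{P}:=\Phi_*(\mathrm{law of }(Y_t)_{t\geq0})$; by construction the canonical projection process $(l_t)_{t\geq0}$ on $(\mathcal{M}ult((l_t)_{t\geq0},G),\mathbb{P})$ has the same finite-dimensional distributions as $(Y_t)_{t\geq0}$, hence the same law.

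The main obstacle I expect is the well-definedness and measurability bookkeeping in the first two steps: one has to be careful that the set of loops generated by $(l_t)_{t\geq0}$ under concatenation and inversion is exactly a countable-type algebraic object whose multiplicative functions are in bijection with trajectories, and that no hidden relation among the $l_t$ (beyond the group relations in each finite $\pi_1(\mathbb{G})$) forces an unwanted constraint — here the hypothesis that the interiors are nested and that $\mathrm{dx}(\mathrm{Int}(l(t)))=t$ is exactly what guarantees the loops $l_t l_s^{-1}$ behave like independent-area building blocks, so that the map $\Phi$ is onto the relevant multiplicative functions and the Lévy process law transports without obstruction. Once this identification is in place the lemma is immediate, and the analogous statement for the free Lévy process $(y_t)_{t\geq0}$ (the non-commutative counterpart needed for the construction) follows by replacing ``law'' with ``non-commutative distribution'' and ``measure on $\mathcal{M}ult$'' with the corresponding GNS-type representation, exactly as in the parallel development of \cite{Holonomy}.
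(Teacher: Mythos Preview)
Your pushforward idea is correct, but you are doing far more work than needed because you have slightly misread the definition of $\mathcal{M}ult({\sf L},G)$. That set is a subset of $G^{\sf L}$, and the multiplicativity constraints \eqref{eq1}--\eqref{eq2} only apply when the concatenation $l_1l_2$ or the inverse $l_3^{-1}$ \emph{already lies in} ${\sf L}$. Here ${\sf L}=\{l_t:t\ge 0\}$ consists of simple loops with a common orientation and strictly nested interiors, so for $s,t>0$ neither $l_sl_t$ nor $l_t^{-1}$ belongs to ${\sf L}$; the only constraint is $h(l_0)=e$. Thus the paper simply observes that $\mathcal{M}ult((l_t)_{t\ge 0},G)=\{e\}\times G^{\{l_t:t>0\}}$, after which the lemma is the trivial statement that a $G$-valued process has a law on a product space---essentially Kolmogorov. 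All your bookkeeping about $\pi_1(\mathbb{G})$, well-definedness on concatenations, and the ``annular'' loops $l_tl_s^{-1}$ is unnecessary: those loops are not in the index set, and no relation among the $l_t$'s needs to be checked. Your construction would of course land on the same measure, but the paper's one-line identification is the intended argument.
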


\begin{proof}
Let us remark that $\mathcal{M}ult\left((l_t)_{t \geq 0}, G\right)$ is simply the space $\{e\} \times G^{\{l_t, t > 0\}}$. Thus, the lemma only asserts that there exists a measure or probability on $\{e\} \times G^{\{l_t, t > 0\}}$ such that the canonical process of projection has the same law as $(Y_t)_{t \geq 0}.$ This is a well known result. 
\end{proof}

In order to have a similar result for the free Lévy process $(y_t)_{t \geq 0}$, one needs to define the notion of reduced loops. Let us only define the group ${\sf R}(l_t)_{t \geq 0}$. The group ${\sf R}(l_t)_{t \geq 0}$ is the set of paths which are concatenations of elements in $\left\{l_t, l_{t}^{-1} | t \geq 0\right\}$ and which contain no sequence of the form $l_t l_{t}^{-1}$ or $l_t^{-1} l_{t}$ for $t >0$. Such paths are said to be reduced. Any concatenation of elements in $\left\{l_t, l_{t}^{-1} | t \geq 0\right\}$ can be made reduced by deleting sequences of the form $l_t l_{t}^{-1}$ or $l_t^{-1} l_{t}$ for $t >0$: any concatenation of elements in $\left\{l_t, l_{t}^{-1} | t \geq 0\right\}$ will be seen as an element of ${\sf R}(l_t)_{t \geq 0}$. Let us remark that $\mathbb{C}\left[{\sf R}(l_t)_{t \geq 0}\right]$ can be endowed with an involution, denoted $*$, by linear extension of the application which sends $l \in{\sf R}(l_t)_{t \geq 0}$ on $l^{-1}$. 

\begin{lemme}
\label{lem:version2}
There exists a tracial positive linear functional $\tilde{\tau}$ on $\mathbb{C}\left[{\sf R}(l_t)_{t \geq 0}\right]$ such that the process $(l_t)_{t \geq 0}$, seen as a process in $\left(\mathbb{C}\left[{\sf R}(l_t)_{t \geq 0}\right], \tilde{\tau}\right)$ has the same non-commutative law as $(y_t)_{t \geq 0}$.
\end{lemme}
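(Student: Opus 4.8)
The plan is to obtain $\tilde\tau$ by transporting $\tau$ along a canonical $*$-homomorphism. First I would note that, by its very definition, ${\sf R}(l_t)_{t\geq 0}$ is precisely the free group on the generating set $\{l_t:t>0\}$ (recall that $l_0=e$), so that $\mathbb{C}[{\sf R}(l_t)_{t\geq 0}]$ is its group $*$-algebra, the involution being the linear extension of $l\mapsto l^{-1}$. Since each $y_t$ lies in $\mathcal{A}_u$, the universal property of the free group yields a unique group homomorphism ${\sf R}(l_t)_{t\geq 0}\to\mathcal{A}_u$ sending $l_t$ to $y_t$, and extending it by linearity produces a unital algebra homomorphism $\Phi:\mathbb{C}[{\sf R}(l_t)_{t\geq 0}]\to\mathcal{A}$ with $\Phi(l_t)=y_t$. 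Because $y_t$ is unitary, $\Phi(l_t^{-1})=y_t^{-1}=y_t^*=\Phi(l_t)^*$, so $\Phi$ is in fact a $*$-homomorphism, and I would simply set $\tilde\tau:=\tau\circ\Phi$.

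The verification then consists of routine checks. Linearity and $\tilde\tau(1)=\tau(\Phi(1))=1$ are immediate. Traciality follows from that of $\tau$ together with the multiplicativity of $\Phi$, since $\tilde\tau(ab)=\tau(\Phi(a)\Phi(b))=\tau(\Phi(b)\Phi(a))=\tilde\tau(ba)$. Positivity is automatic because $\Phi$ is a $*$-homomorphism: $\tilde\tau(a^*a)=\tau(\Phi(a)^*\Phi(a))\geq 0$. Thus $(\mathbb{C}[{\sf R}(l_t)_{t\geq 0}],\tilde\tau)$ is a non-commutative probability space. To identify the non-commutative law of the canonical process $(l_t)_{t\geq 0}$, I would observe that for every non-commutative $*$-polynomial $P$ in indeterminates indexed by $t\geq 0$ one has $\Phi\big(P(l_t,l_t^{-1}:t\geq 0)\big)=P(y_t,y_t^*:t\geq 0)$, because $\Phi$ is a unital $*$-homomorphism with $\Phi(l_t)=y_t$; applying $\tau$ yields $\tilde\tau\big(P(l_t,l_t^{-1}:t\geq 0)\big)=\tau\big(P(y_t,y_t^*:t\geq 0)\big)$, which is exactly the assertion that the non-commutative law of $(l_t)_{t\geq 0}$ in $(\mathbb{C}[{\sf R}(l_t)_{t\geq 0}],\tilde\tau)$ equals that of $(y_t)_{t\geq 0}$ in $(\mathcal{A},\tau)$.

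This argument is the exact free counterpart of Lemma~\ref{lem:version1}, the function space $G^{\{l_t,\,t>0\}}$ of the classical setting being replaced by the group $*$-algebra of the free group ${\sf R}(l_t)_{t\geq 0}$. I do not expect a genuine obstacle: the only point deserving attention is the compatibility of the two involutions, which is ensured precisely by the unitarity of the $y_t$, while positivity and traciality are transported through $\Phi$ automatically. Note that the free L\'evy process structure of $(y_t)_{t\geq 0}$ is not used in this lemma --- only that it is a family of unitaries in $(\mathcal{A},\tau)$; free independence and stationarity will be exploited later, through the ``finite dimensional law'' property of the construction.
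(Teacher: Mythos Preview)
Your proposal is correct and follows exactly the paper's own argument: define $\phi$ by $l_t\mapsto y_t$, extend multiplicatively to ${\sf R}(l_t)_{t\geq 0}$ and then linearly to the group algebra, and set $\tilde\tau=\tau\circ\phi$. The paper's proof is a one-line version of yours; your additional verifications (that $\phi$ is a unital $*$-homomorphism, hence positivity and traciality transfer) are the routine checks the paper leaves to the reader.
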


\begin{proof}
Let us define the function $\phi$ which sends $l_t$ on $y_t$ for any positive real $t$. We extend it by multiplication from ${\sf R}(l_t)_{t \geq 0}$ to $\mathcal{A}$ and then by linearity from $\mathbb{C}\left[{\sf R}(l_t)_{t \geq 0}\right]$ to $\mathcal{A}$. The functional $\tilde{\tau} = \tau \circ \phi$ satisfies the good properties. 
\end{proof}

In the commutative setting, we will consider $\Gamma_{\sf c}$ the group of $G$-valued measurable functions on $\mathcal{M}ult\left((l_t)_{t \geq 0}, G\right)$ (which is a measurable space when it is endowed with the Borel cylindrical $\sigma$-algebra), endowed with the $L^{1}$ pseudometric and the usual notion of law which are defined using the probability $\mathbb{P}$ given by Lemma \ref{lem:version1}.

In the non-commutative setting, we will take $\Gamma_{\sf nc} = {\sf R}(l_t)_{t \geq 0}$, endowed with the $L^{2}$ pseudometric defined by: 
\begin{align*}
d_{\Gamma_{\sf nc}}( a, b) = \sqrt{\tilde{\tau}\left( (a-b)( a-b)^{*}\right) }
\end{align*}
and the usual notion of non-commutative law given by $ \tilde{\tau}$ where we recall that the tracial functional $ \tilde{\tau}$ was given by Lemma \ref{lem:version2}.

In the commutative setting, we consider ${\sf g}$ equal to the process of canonical projections $(l_t)_{t \geq 0}$ seen as measurable functions on $\mathcal{M}ult\left((l_t)_{t \geq 0}, G\right)$. In the non-commutative setting we consider ${\sf g}$ equal to the process of loops $(l_t)_{t \geq 0}$ seen as a process in ${\sf R}(l_t)_{t \geq 0}$. Using Lemmas \ref{lem:version1} and \ref{lem:version2}, these two processes have the same law respectively as $\left(Y_t\right)_{t \geq 0}$ and $\left(y_t \right)_{t \geq 0}$. Besides the two processes satisfy the analytical bound $(\ref{eq:inegalite})$.

\begin{lemme}
There exists two positive constants $K_{\sf c}$ and $K_{\sf nc}$ such that for any $t \geq 0$: 
\begin{align*}
d_{\Gamma_{\sf c}}(e,l_{t}) &\leq K_{\sf c} \sqrt{t},\\
d_{\Gamma_{\sf nc}}(e,l_{t}) &\leq K_{\sf nc} \sqrt{t}.
\end{align*}
\end{lemme}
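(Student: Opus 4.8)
The plan is to prove both bounds simultaneously by reducing to the behaviour of a single increment, exploiting the fact that in each setting the process $(l_t)_{t\geq 0}$ has the law of a (classical or free) L\'evy process and is therefore continuous and stationary. First I would observe that, by Lemma~\ref{lem:version1} and Lemma~\ref{lem:version2}, it suffices to establish the inequalities $d_{\Gamma}(e,Y_t)\leq K\sqrt t$ (resp.\ $d_{\Gamma}(e,y_t)\leq K\sqrt t$) for the original processes, since the pseudometrics were defined precisely so that the law of $(l_t)_{t\geq 0}$ in $\Gamma_{\sf c}$ (resp.\ $\Gamma_{\sf nc}$) matches that of $(Y_t)_{t\geq0}$ (resp.\ $(y_t)_{t\geq 0}$), and by the compatibility of the pseudometric with the abstract law the quantity $d_{\Gamma}(e,l_t)$ depends only on this law.

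For the commutative case, $d_{\Gamma_{\sf c}}(e,l_t)^2$ unwinds to $\mathbb{E}[d_G(e,Y_t)^2]$ (or $\mathbb{E}[d_G(e,Y_t)]$ if one sticks with the stated $L^1$ pseudometric, in which case one works with that moment instead); I would bound $d_G$ by a constant multiple of the intrinsic distance and then use that a $G$-valued L\'evy process, having a generator of the usual L\'evy--Khintchine type on the compact Lie group $G$, satisfies $\mathbb{E}[d_G(e,Y_t)^2]=O(t)$ for small $t$ — this is the standard short-time estimate for Markov processes with bounded generator (the Brownian part contributes $O(t)$ and the jump part, with finite L\'evy measure away from the identity plus the compensated small-jump part, also contributes $O(t)$ by a second-moment computation). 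For larger $t$, compactness of $G$ makes $d_G$ bounded, so $d_G(e,Y_t)\leq \mathrm{diam}(G)\leq \mathrm{diam}(G)\sqrt{t}$ once $t\geq 1$; combining the two regimes gives a uniform constant $K_{\sf c}$.

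For the non-commutative case, $d_{\Gamma_{\sf nc}}(e,l_t)^2 = \tilde\tau\big((l_t-e)(l_t-e)^*\big) = 2 - 2\Re\,\tau(y_t)$ since $y_t$ is unitary. By Proposition~\ref{momentlibre}, $\tau(y_t)=\exp(i\alpha t - bt/2 + t\int_{\mathbb U}(\Re\zeta-1)\,dv(\zeta))$, where $(\alpha,b,v)$ is the characteristic triplet of the free unitary L\'evy process $(y_t)_{t\geq 0}$. Writing $r_t=\exp(-bt/2+t\int_{\mathbb U}(\Re\zeta-1)\,dv(\zeta))\in(0,1]$ and $\theta_t=\alpha t$, we get $2-2\Re\,\tau(y_t)=2-2r_t\cos\theta_t = 2(1-r_t)+2r_t(1-\cos\theta_t)$. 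Now $1-r_t\leq \big(b/2 - \int_{\mathbb U}(\Re\zeta-1)\,dv(\zeta)\big)t =: ct$ using $1-e^{-x}\leq x$ (and $c<\infty$ precisely because $v$ is a L\'evy measure), while $1-\cos\theta_t\leq \theta_t^2/2 = \alpha^2 t^2/2$; hence $d_{\Gamma_{\sf nc}}(e,l_t)^2\leq 2ct + \alpha^2 t^2$, which is $O(t)$ for $t$ in any bounded interval, and for $t\geq 1$ one uses again $d_{\Gamma_{\sf nc}}(e,l_t)\leq 2\leq 2\sqrt t$ (the diameter of the unit circle in the $L^2$-norm). Taking $K_{\sf nc}$ to be the maximum of the relevant constants finishes the proof.

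The only genuine subtlety — the step I would be most careful about — is the commutative short-time estimate $\mathbb{E}[d_G(e,Y_t)^2]=O(t)$: it requires knowing that $Y_t$ is a L\'evy process on the compact group $G$ with a genuine L\'evy--Khintchine generator (finite L\'evy measure on $G\setminus\{e\}$, plus a diffusion part), which is exactly the hypothesis under which these planar Markovian holonomy fields are built in \cite{Holonomy}; the jump term must be handled by splitting large and small jumps and estimating second moments, rather than by a naive first-order Taylor bound. The non-commutative half, by contrast, is essentially the explicit computation above and presents no real obstacle.
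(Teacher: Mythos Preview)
Your proposal is correct and matches the paper's approach. For the non-commutative bound your computation is exactly the paper's: reduce $d_{\Gamma_{\sf nc}}(e,l_t)^2$ to $2(1-\Re\,\tau(y_t))$ and invoke Proposition~\ref{momentlibre}. For the commutative bound the paper simply cites L\'evy's Proposition~4.3.12 in \cite{Levy2010}, whereas you sketch the content of that result; your outline (diffusion part $O(t)$, compensated small jumps plus large jumps each $O(t)$, compactness for $t\ge 1$) is the right one, though the phrase ``bounded generator'' is a slip since the second-order diffusion part is unbounded --- but you immediately give the correct decomposition, so no harm done.
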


\begin{proof}
The result in the commutative case was proved by T. Lévy in \cite{Levy2010}, Proposition $4.3.12$. Let us consider the non-commutative case. We have to consider: 
\begin{align*}
d_{\Gamma_{\sf nc}}(e,l_{t}) = \sqrt{2\left[ 1- \Re\left(\tau(y_t)\right)\right]}.
\end{align*}
In order to finish the proof, it is enought to apply Proposition \ref{momentlibre}. 
\end{proof}

In order to apply Theorem \ref{th:construction}, we still need to show that $\Gamma_{\sf c}$ and $\Gamma_{\sf nc}$ satisfy the projective property defined in Definition \ref{def:proj}. 

\begin{proposition}
The two groups $\Gamma_{\sf c}$ and $\Gamma_{\sf nc}$ endowed with the abstract law and the distance defined above satisfy the projective property. 
\end{proposition}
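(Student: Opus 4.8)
The plan is to verify the five conditions listed in Definition~\ref{def:proj} separately for each of the two groups, exploiting the fact that in both cases the target group $\Gamma$ is already a group of functions (random variables, resp. elements of a group algebra with a tracial functional) on a space tailored to the process $(l_t)_{t\geq 0}$. The key structural observation is that the candidate ``completion'' $\overline\Gamma$ will again be a group of measurable/algebraic functions, this time defined on a larger space built from \emph{all} piecewise affine loops rather than just the loops $l_t$, and the embedding $\iota$ is the obvious one induced by restriction of loops.

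\textbf{The commutative case.} First I would let $\overline{\Gamma}_{\sf c}$ be the group of $G$-valued measurable functions on $\mathcal{M}ult({\sf L_{aff}},G)$ endowed with its Borel cylindrical $\sigma$-algebra, equipped with the $L^1$ pseudometric for a probability measure $\overline{\mathbb P}$ to be produced. Given a projective family $\big(({\sf H}_F(l))_{l\in F}\big)_{F\in\mathcal P_f({\sf L_{aff}})}$ of multiplicative functions, the compatibility condition $F_1\subset F_2 \Rightarrow ({\sf H}_{F_1}(l))_{l\in F_1}\sim ({\sf H}_{F_2}(l))_{l\in F_1}$ says precisely that the finite-dimensional laws on $G^{F}$ are consistent, so Kolmogorov's extension theorem (applied on the compact group $G$, where it holds with no topological obstruction) gives a probability measure $\overline{\mathbb P}$ on $G^{{\sf L_{aff}}}$; multiplicativity of each ${\sf H}_F$ forces $\overline{\mathbb P}$ to be carried by $\mathcal{M}ult({\sf L_{aff}},G)$, which is a measurable subset. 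The family $(\overline{\sf H}(l))_{l\in {\sf L_{aff}}}$ is then the canonical projection process, $\iota$ sends a random variable on $\mathcal{M}ult((l_t)_{t\geq 0},G)$ to the corresponding random variable on the larger space (well-defined because the law of the $(l_t)$-coordinates under $\overline{\mathbb P}$ matches, by construction, the original law), and the distance identity $\overline d\circ(\iota\times\iota)=d$ together with isometry of translations and inversions is immediate from bi-invariance of $d_G$. Compatibility of $\overline d$ with $\overline\sim$ is again bi-invariance of $d_G$, and the closedness of $\overline\sim$ under limits is the standard fact that equality in distribution passes to $L^1$-limits (equivalently, $\mathbb P$-a.s.\ or in-probability limits).

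\textbf{The non-commutative case.} Here I would take $\overline{\Gamma}_{\sf nc}={\sf R}({\sf L_{aff}})$, the group of reduced words in $\{l,l^{-1}: l\in{\sf L_{aff}}\}$, with the $L^2$ pseudometric coming from a tracial positive functional $\overline\tau$ on $\mathbb C[{\sf R}({\sf L_{aff}})]$ that I must construct from the projective family. For each finite $F$ the datum $({\sf H}_F(l))_{l\in F}$ is a non-commutative law of a tuple in $\mathcal A_u$, i.e.\ a tracial state on the group algebra of the free group on $F$; the projectivity condition is exactly the consistency needed for these states to glue, so one obtains a tracial positive functional $\overline\tau$ on the inductive limit $\bigcup_F \mathbb C[\text{free group on }F]=\mathbb C[{\sf R}({\sf L_{aff}})]$ (positivity and traciality pass to the inductive limit since each is a finite condition). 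The family $(\overline{\sf H}(l))_{l\in {\sf L_{aff}}}$ is the tautological inclusion of ${\sf L_{aff}}$ into ${\sf R}({\sf L_{aff}})$, which is multiplicative by the definition of reduction; $\iota:{\sf R}((l_t)_{t\geq 0})\hookrightarrow{\sf R}({\sf L_{aff}})$ is the inclusion of reduced words, an isometry for the respective $L^2$ pseudometrics because $\overline\tau$ restricts to $\tilde\tau$ on words in the $l_t$. The remaining items---isometry of left/right translation and of inversion for the $L^2$ pseudometric, compatibility of $\overline d$ with equality of non-commutative law, and stability of $\overline\sim$ under $L^2$-limits---follow from traciality of $\overline\tau$ (which gives $\overline d(ab,ac)=\overline d(b,c)=\overline d(ba,ca)$ and $\overline d(a^{-1},b^{-1})=\overline d(a,b)$) and from the elementary fact that each polynomial $P\mapsto \overline\tau(P(\cdot))$ is $L^2$-continuous in its arguments, uniformly on the unitary group, so moment equalities survive in the limit.

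\textbf{Main obstacle.} The genuine work is the gluing/extension steps producing $\overline{\mathbb P}$ and $\overline\tau$ from the projective families, and---more delicately---checking that the resulting objects really are \emph{probability measures / tracial states} (carried by $\mathcal{M}ult$, positive, normalized) rather than merely finitely additive or formal functionals. In the commutative case this is Kolmogorov extension plus a measurability argument that the multiplicativity relations cut out a full-measure Borel set; in the non-commutative case it is the verification that positivity, a countable family of inequalities over finitely many words, is preserved under the inductive limit. I expect the last bullet of Definition~\ref{def:proj} (closedness of $\overline\sim$ under coordinatewise limits) to be the subtlest to state cleanly, since it requires knowing that the relevant $L^1$- or $L^2$-convergence of tuples of unitaries forces convergence of \emph{all} joint moments, which one gets from the uniform bound $\|u\|=1$ combined with the triangle inequality applied word by word.
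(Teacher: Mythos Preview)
Your overall strategy coincides with the paper's: in the commutative case one builds a measure $\overline{\mathbb P}$ on $\mathcal{M}ult({\sf L_{aff}},G)$ by Kolmogorov-type extension (the paper cites \cite[Proposition 2.2.3]{Levy2010} and \cite[Proposition 2.1]{Holonomy}) and takes $\overline{\Gamma}_{\sf c}$ to be the $G$-valued random variables on that space with $\iota$ induced by restriction of loops; in the non-commutative case one equips a group algebra built from ${\sf L_{aff}}$ with a tracial functional assembled from the projective family and uses the resulting $L^2$ pseudometric. The verification of the last three items of Definition~\ref{def:proj} via bi-invariance (commutative case) and traciality plus Cauchy--Schwarz (non-commutative case) is also what the paper does.

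There is, however, a genuine gap in your non-commutative construction. Your $\overline{\Gamma}_{\sf nc}={\sf R}({\sf L_{aff}})$, described as the inductive limit $\bigcup_F(\text{free group on }F)$, is the free group on ${\sf L_{aff}}$ \emph{as a set}. In that group the generator $[l_1l_2]$ and the product $[l_2]\cdot[l_1]$ are distinct elements, so the tautological inclusion ${\sf L_{aff}}\hookrightarrow{\sf R}({\sf L_{aff}})$ is \emph{not} an element of $\mathcal{M}ult({\sf L_{aff}},\overline{\Gamma}_{\sf nc})$ in the sense of \eqref{eq1}--\eqref{eq2}. It is true that your glued functional $\overline\tau$ cannot separate $[l_1l_2]$ from $[l_2][l_1]$ (precisely because each ${\sf H}_F$ is multiplicative), so they lie at $L^2$-distance zero; but Definition~\ref{def:proj} requires $\overline{\sf H}\in\mathcal{M}ult({\sf L_{aff}},\overline\Gamma)$ literally, and also requires $(\overline\Gamma,\overline d)$ to be a \emph{complete metric} group, neither of which holds for the free group with its $L^2$ pseudometric. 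The paper resolves both points at once: it works with the group ${\sf RL_{aff}}$ of \emph{reduced piecewise affine loops} (where concatenation already is the group law, so the tautological map $l\mapsto l$ is genuinely multiplicative), defines $\tau(l)=\tilde\tau({\sf H}_{\{l\}}(l))$, extends linearly and checks positivity from the projective hypothesis, then quotients by the normal subgroup $H=\{l: d(l,e)=0\}$ and completes. Adding exactly this quotient-and-complete step to your construction repairs it, and the two approaches then agree.
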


\begin{proof}
Let us consider the group  $\Gamma_{\sf c}$ endowed with the notion of law coming from $\mathbb{P}$ and the $L_1$ pseudometric associated to it. Let us suppose that we have a projective family of multiplicative functions $\left( ({\sf H}_F(l))_{l \in F}\right)_{F \in \mathcal{P}_f({\sf L_{aff}})}$. Applying Proposition $2.1$ of \cite{Holonomy} or Proposition $2.2.3$ of \cite{Levy2010}, there exists a measure $\mathbb{P}_{{\sf Aff}}$ on $\mathcal{M}ult({\sf L}_{\sf aff}, G)$ such that for any $F \in \mathcal{P}_{f}(T)$ the law of $ ({\sf H}_F(l))_{l \in F}$ is the same as the law of the canonical process $(l)_{l \in F}$ on $\mathcal{M}ult({\sf L}_{\sf aff}, G)$. We consider $\overline{\Gamma_{\sf c}}$, the group of $G$-valued measurable functions on $\mathcal{M}ult({\sf L}_{\sf aff}, G)$, endowed with the $L_1$ metric and the notion of law which are defined by using the probability $\mathbb{P}_{{\sf Aff}}$. 

The translations and inversion are isometries on $\overline{\Gamma_{\sf c}}$: this is a consequence of the invariance of the distance on $G$. The distance $\overline{d}$ is obviously compatible with the notion of law and satisfies the fifth point of Definition \ref{def:proj}. Besides, there exists a natural homeomorphism $\iota: \Gamma_{\sf c} \to \overline{\Gamma_{\sf c}}$, it is induced by the restriction function $\psi:  \mathcal{M}ult\left({\sf L}_{\sf aff},G\right) \to \mathcal{M}ult\left((l_t)_{t \geq 0},G\right)$. For this homeomorphism, the relation $\overline{d}(\iota \times \iota) = d$ holds. The family $(\overline{\sf H}(l))_{l \in {\sf Aff}}$ is given by the canonical process $(l)_{l \in {\sf L_{aff}}}$ on $\mathcal{M}ult({\sf L}_{\sf aff}, G)$ which is, by tautology, an element of $\mathcal{M}ult({\sf L}_{\sf Aff},\overline{\Gamma_{\sf c}})$. By definition of $\mathbb{P}_{\sf Aff}$, the second property of Definition \ref{def:proj} is satisfied.

Let us consider the group  $\Gamma_{\sf nc}$ endowed with the notion of law coming from $\tilde{\tau}$ and the $L_2$ pseudometric associated to it. Let us consider ${\sf RL_{Aff}}$ the group of reduced  piecewise affine loops based at $0$. This is the set of loops $l$ in ${\sf L_{aff}}$ such that $l$ does not contain any sequence of the form $ee^{-1}$ where $e$ is a piecewise affine path. Again, any piecewise affine loop can be made reduced by deleting sequences of the form $ee^{-1}$, and thus any loop in ${\sf L_{Aff}}$ can be seen as an element of ${\sf RL_{Aff}}$. Let us consider a projective family of multiplicative $\Gamma_{\sf nc}$-valued functions $\left( \left( {\sf H}_{F}(l) \right) \right)_{F \in \mathcal{P}_{f}({\sf L_{aff}})}$. We can define a function on ${\sf L_{aff}}$, denoted by $\tau$, such that for any $l \in {\sf L_{aff}}$, $\tau(l) = \tilde{\tau} \left({\sf H}_{\{l\}}(l)\right)$. The linear extension of $\tau$ on $\mathbb{C}\left[{\sf RL_{aff}}\right]$ is positive: this is a consequence of the projective property and the positivity of $\tilde{\tau}$. As for $\mathbb{C}[{\sf R}(l_t)_{t \geq 0}]$, one can define an involution, denoted by $*$, on $ \mathbb{C}[{\sf RL}_{\sf aff}]$, which is the antilinear extension of the application which sends $l \in {\sf RL}_{\sf aff}$ on $l^{-1}$. Using this fact, we can define a pseudometric on ${\sf RL_{aff}}$: $d(l,l') = \sqrt{\tau((l-l')(l-l')^{*})}$. Let us consider: 
\begin{align*}
H = \left\{ l \in {\sf RL_{aff}} | d(l,e) = 0 \right\}. 
\end{align*}
Using the invariance by multiplication and by inversion of $d$, and using the triangular inequality, $H$ is a group and it is normal in ${\sf RL_{aff}}$. Thus, ${\sf RL_{aff}}/H$ is a group, and the pseudometric $d$ defines on ${\sf RL_{aff}}/H$ a distance. We define $\overline{\Gamma_{\sf nc}}$ the completion of ${\sf RL_{aff}}/H$ for the distance $d$: the distance on the completion will be denoted by $\overline{d}$. The family $(\overline{{\sf H}}(l))_{l \in {\sf L_aff}}$ will be the family $(l)_{l \in {\sf L_{aff}}}$. The homeomorphism $\iota: \Gamma_{\sf nc} \to \overline{\Gamma_{\sf nc}}$ is the natural application from $\Gamma_{\sf nc}$ in $\overline{\Gamma_{\sf nc}}$ resulting from the injection $\Gamma_{\sf nc} \to {\sf RL_{aff}}$. It remains to define an abstract law on $\overline{\Gamma_{\sf nc}}$. The application $\tau$ defined on $\mathbb{C}[{\sf RL_{aff}}]$ satisfies the Cauchy-Schwarz inequality: for any $l$ and $l'$ in ${\sf RL_{aff}}$, 
\begin{align}
\label{continuity}
\tau(l) - \tau(l') = \tau(l-l') \leq d(l,l'). 
\end{align}
Thus, if $d(l,l') = 0 $ then $\tau(l)= \tau(l')$: the function $\tau$ defines a function, also denoted by $\tau$, on ${\sf RL_{aff}}/H$ and, again by (\ref{continuity}), it defines a function on $\overline{\Gamma_{\sf nc}}$. By linearity, we extend $\tau$ on $\mathbb{C}[\overline{\Gamma_{\sf nc}}]$: $\left( \mathbb{C}[\overline{\Gamma_{\sf nc}}], \tau\right)$ is a non-commutative probability space. The abstract law on $\overline{\Gamma_{\sf nc}}$ is the restriction of the notion of non-commutative law defined on $\left( \mathbb{C}[\overline{\Gamma_{\sf nc}}], \tau\right)$. It is not difficult to see that all the properties are satisfied with these definitions. 
\end{proof}

Using all the discussion we had above, we can apply Theorem \ref{th:construction}. In the commutative case, we have thus constructed a multiplicative function ${\sf HF}^{{\sf Y}}$ in the space $\mathcal{M}ult({\sf L}_0, L^{1}(\mathcal{M}ult({\sf L}_0, G), \mathbb{P}))$ which satisfies some invariance by area-preserving Lipschitz homeomorphism property, a finite dimensional law property and a continuity property. In order to prove that it is a G-valued planar Markovian holonomy field, we must show that it satisfies the independence property of Definition \ref{def:Markplanar}. We do not explain the proof since it would follow the same arguments as the one used to prove the axiom ${\sf wDP_2}$ in the proof of Proposition $8.4$ in \cite{Holonomy}.

\begin{theorem}
Let ${\sf Y}=(Y_t)_{t \geq 0}$ be a Lévy process on $G$ which is invariant in law by conjugation by $G$. There exists a stochastically continuous planar Markovian holonomy field $\big({\sf HF}^{{\sf Y}}(l)\big)_{{\sf L}_0}$, associated with $(Y_t)_{t \geq 0}$, such that for any finite planar graph $\mathbb{G}$, for any enumeration of the bounded faces $(F_1,...,F_k)$, for any family of loops $(l_1,...,l_k) \in \mathcal{F}(\mathbb{G})$, $\big({\sf HF}^{{\sf Y}}(l_i)\big)$ is a vector of independent random variables, and for any $i \in \{1,...,k\}$, ${\sf HF}^{{\sf Y}}(l_i)$ has the same law as $Y_{{\sf dx}(F_i)}$. \label{th:injectivlevyun}
\end{theorem}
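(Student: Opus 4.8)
The plan is to read this statement off from the abstract construction of Theorem~\ref{th:construction}, specialised to the commutative setting, the only genuinely new input being the independence property of Definition~\ref{def:Markplanar}. First I would take $\Gamma=\Gamma_{\sf c}$ with its $L^1$ pseudometric and the notion of law coming from $\mathbb{P}$ (Lemma~\ref{lem:version1}), and ${\sf g}=(l_t)_{t\ge 0}$ the process of canonical projections, which has the same law as $(Y_t)_{t\ge 0}$. By the lemmas above, ${\sf g}$ is a purely braidable stationary process satisfying the analytical bound~\eqref{eq:inegalite}, and $\Gamma_{\sf c}$ has the projective property, so Theorem~\ref{th:construction} produces a multiplicative function ${\sf HF}^{\sf Y}:={\sf HF}^{\sf g}\in\mathcal{M}ult({\sf L}_0,\overline{\Gamma_{\sf c}})$, where $\overline{\Gamma_{\sf c}}$ is the completed group of $G$-valued $L^1$ random variables over $(\mathcal{M}ult({\sf L_{aff}},G),\mathbb{P}_{\sf Aff})$; in particular ${\sf HF}^{\sf Y}$ is an element of $\mathcal{M}ult({\sf L}_0,L^1(\mathcal{M}ult({\sf L_{aff}},G),\mathbb{P}_{\sf Aff}))$ as required.

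Then the three properties of Theorem~\ref{th:construction} translate almost directly. The finite dimensional law property says that for any finite planar graph $\mathbb{G}$, any enumeration $(F_1,\dots,F_k)$ of its bounded faces and any $(l_1,\dots,l_k)\in\mathcal{F}(\mathbb{G})$, the vector $({\sf HF}^{\sf Y}(l_i))_{i=1}^k$ has the same law as $\big(Y_{\sum_{j\le i}{\sf dx}(F_j)}\,Y_{\sum_{j<i}{\sf dx}(F_j)}^{-1}\big)_{i=1}^k$; since $(Y_t)_{t\ge 0}$ has independent and stationary increments, this is a vector of independent random variables with ${\sf HF}^{\sf Y}(l_i)$ distributed as $Y_{{\sf dx}(F_i)}$, which is the asserted finite-dimensional statement. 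The continuity property gives continuity of $l\mapsto{\sf HF}^{\sf Y}(l)$ for the convergence with fixed endpoints in the $L^1$ pseudometric, i.e. $\mathbb{E}[d_G({\sf HF}^{\sf Y}(l_n),{\sf HF}^{\sf Y}(l))]\to 0$ whenever $l_n\to l$ in ${\sf L}_0$; this is exactly stochastic continuity. The invariance by area-preserving Lipschitz homeomorphisms required by Definition~\ref{def:Markplanar} is item~1 of Theorem~\ref{th:construction}, and gauge invariance follows from the assumed invariance in law by conjugation of $(Y_t)_{t\ge 0}$: conjugating all face increments by a fixed $g\in G$ leaves their joint law unchanged, hence leaves the law of $({\sf HF}^{\sf Y}(l))_{l\in{\sf L_{aff}}}$ unchanged by the finite dimensional law property, and one extends to all of ${\sf L}_0$ by continuity.

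The remaining and only substantial point is the independence property. Given two simple loops $l_1,l_2$ with disjoint interiors and finitely many loops drawn inside $\overline{{\sf Int}(l_1)}$ and $\overline{{\sf Int}(l_2)}$, I would choose, via Lemma~\ref{lemme:graphe} together with Theorem~$3.2$ of \cite{Holonomy}, a finite planar graph $\mathbb{G}$ adapted to all these loops whose bounded faces split as those inside ${\sf Int}(l_1)$, those inside ${\sf Int}(l_2)$, and the rest, then select a covering tree and a family in $\mathcal{F}(\mathbb{G})$ so that the loop $b_F$ attached to a face $F\subset{\sf Int}(l_i)$ stays inside $\overline{{\sf Int}(l_i)}$, and enumerate the faces so that those of ${\sf Int}(l_1)$ precede those of ${\sf Int}(l_2)$. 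With this choice, the finite dimensional law property shows that every holonomy of a loop in $\overline{{\sf Int}(l_1)}$ is a function of the increments $g_i$ with $i$ in the first block and every holonomy of a loop in $\overline{{\sf Int}(l_2)}$ is a function of the $g_i$ in the second block; these two blocks of increments are independent because $(Y_t)_{t\ge 0}$ has independent increments, whence the two families of holonomies are independent for loops drawn on $\mathbb{G}$. One then passes to arbitrary loops in the two closed interiors by refining the graph and using the continuity of ${\sf HF}^{\sf Y}$, exactly as in the proof of axiom ${\sf wDP_2}$ in Proposition~$8.4$ of \cite{Holonomy}. The main obstacle is precisely this last approximation step: one must choose the approximating graphs and generating families uniformly, keeping track of which of the two regions each approximating loop lives in, so that the block structure, and hence the independence, is preserved in the limit.
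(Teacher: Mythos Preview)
Your proposal is correct and follows the same approach as the paper: specialise the abstract construction of Theorem~\ref{th:construction} to $\Gamma_{\sf c}$ using the lemmas of Section~\ref{From Levy to HF}, read off invariance, stochastic continuity and the finite-dimensional laws, and then handle the independence property by reducing to the proof of axiom ${\sf wDP_2}$ in Proposition~8.4 of \cite{Holonomy}. In fact you give more detail than the paper, which simply cites that reference without sketching the graph-splitting argument; your added treatment of gauge invariance is likewise absent from the paper but is straightforward.
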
 

By considering the non-commutative setting, one gets the following theorem. 

 \begin{theorem}
 Let ${\sf y}=(y_t)_{t \geq 0}$ be a free unitary Lévy process. There exists a free planar Markovian holonomy field $\big({\sf HF}^{{\sf y}}(l)\big)_{l\in {\sf L}_0}$, associated with $(y_t)_{t \geq 0}$, such that for any finite planar graph $\mathbb{G}$, for any enumeration of the bounded faces $(F_1,...,F_k)$, for any family of loops $(l_1,...,l_k) \in \mathcal{F}(\mathbb{G})$, $\big({\sf HF}^{{\sf y}}(l_i)\big)$ is a vector of freely independent variables, and for any $i \in \{1,...,k\}$, ${\sf HF}^{{\sf y}}(l_i)$ has the same law as $y_{{\sf dx}(F_i)}$. \label{th:injectivlevy}
 \end{theorem}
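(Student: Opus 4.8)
The plan is to run the non-commutative instance of the construction in exact parallel with the commutative one that yields Theorem~\ref{th:injectivlevyun}. By the lemma preceding the statement, the free unitary L\'evy process $(y_t)_{t\geq 0}$ --- realised as the canonical process of loops $(l_t)_{t\geq 0}$ inside $\Gamma_{\sf nc}={\sf R}(l_t)_{t\geq 0}$ through Lemma~\ref{lem:version2} --- is a purely braidable stationary process; it satisfies the analytic bound $d_{\Gamma_{\sf nc}}(e,l_t)\leq K_{\sf nc}\sqrt t$ thanks to Proposition~\ref{momentlibre}; and $\Gamma_{\sf nc}$ has the projective property. Hence Theorem~\ref{th:construction} applies and produces a multiplicative function ${\sf HF}^{\sf y}\in \mathcal{M}ult({\sf L}_0,\overline{\Gamma_{\sf nc}})$ which is invariant by area-preserving Lipschitz homeomorphisms, has the finite-dimensional law stated there, and is continuous for the convergence with fixed endpoints. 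Since $\overline{\Gamma_{\sf nc}}$ sits by construction inside the non-commutative probability space $(\mathbb{C}[\overline{\Gamma_{\sf nc}}],\tau)$ and every element of $\overline{\Gamma_{\sf nc}}$ is group-like with adjoint equal to its inverse, the family $({\sf HF}^{\sf y}(l))_{l\in {\sf L}_0}$ lies in $\mathcal{M}ult({\sf L}_0,\mathcal{A}_u)$ with $\mathcal{A}=\mathbb{C}[\overline{\Gamma_{\sf nc}}]$.

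It then remains to check the three axioms of Definition~\ref{def:Levymaster}. Axiom~1 is exactly property~1 of Theorem~\ref{th:construction}. For axiom~3, all the ${\sf HF}^{\sf y}(l)$ are unitaries, hence act as $L^2(\tau)$-isometries, so a telescoping estimate gives $\|w({\sf HF}^{\sf y}(l^{(1)}_n),\dots)-w({\sf HF}^{\sf y}(l^{(1)}),\dots)\|_{L^2(\tau)}\leq \sum_i \overline{d}({\sf HF}^{\sf y}(l^{(i)}_n),{\sf HF}^{\sf y}(l^{(i)}))$ for every word $w$; combined with $|\tau(a)-\tau(b)|\leq \|a-b\|_{L^2(\tau)}$, this upgrades the continuity with fixed endpoints of ${\sf HF}^{\sf y}$ (property~3 of Theorem~\ref{th:construction}) to convergence in non-commutative distribution. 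Finally, the assertion that ${\sf HF}^{\sf y}(l_i)$ has the law of $y_{{\sf dx}(F_i)}$, and that for $(l_i)\in\mathcal{F}(\mathbb{G})$ the family $({\sf HF}^{\sf y}(l_i))$ is freely independent, is property~2 of Theorem~\ref{th:construction} read against the fact that $(y_t)_{t\geq 0}$ has stationary, freely independent increments, so that the increments $y_{\sum_{j\leq i}{\sf dx}(F_j)}\,y_{\sum_{j<i}{\sf dx}(F_j)}^{-1}$, $i=1,\dots,k$, over a partition of $[0,{\sf dx}(\mathbb{G})]$ are freely independent with the announced marginals.

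The real content, and the main obstacle, is the independence property (axiom~2 of Definition~\ref{def:Levymaster}), which has to be extracted from the finite-dimensional law by the argument used for the axiom ${\sf wDP}_2$ in the proof of Proposition~$8.4$ of \cite{Holonomy}. Given simple loops $l_1,l_2$ based at $0$ with disjoint interiors, and loops $c_1,\dots,c_m$ (resp. $c'_1,\dots,c'_n$) whose images lie in $\overline{{\sf Int}(l_1)}$ (resp. $\overline{{\sf Int}(l_2)}$), one first uses the continuity of ${\sf HF}^{\sf y}$ to reduce, as in \cite{Holonomy}, to the case where all these loops together with $l_1,l_2$ are piecewise affine. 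By Lemma~\ref{lemme:graphe} one finds a finite planar graph $\mathbb{G}$, with $0$ as a vertex and $l_1,l_2$ among its loops, carrying all the $c_i$ and $c'_j$, and such that every bounded face of $\mathbb{G}$ is contained in ${\sf Int}(l_1)$, in ${\sf Int}(l_2)$, or in the unbounded component of the complement of $l_1\cup l_2$. The crucial combinatorial step is to choose a covering tree of $\mathbb{G}$ and the associated generating family in $\mathcal{F}(\mathbb{G})$ (Theorem~\ref{th:base}) adapted to this partition of the faces, so that the face-loops of the family surrounding faces inside ${\sf Int}(l_1)$ (resp. ${\sf Int}(l_2)$) generate the subgroup of $\pi_1(\mathbb{G})$ of loops drawn in $\overline{{\sf Int}(l_1)}$ (resp. $\overline{{\sf Int}(l_2)}$); in particular each $c_i$ (resp. $c'_j$) is a word in the first (resp. second) sub-collection. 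By property~2 of Theorem~\ref{th:construction} the holonomies of the whole generating family are freely independent, and two disjoint sub-collections of a freely independent family generate freely independent $*$-subalgebras; hence $\{{\sf HF}^{\sf y}(c_i)\}$ and $\{{\sf HF}^{\sf y}(c'_j)\}$ are freely independent, as required. The delicate point throughout is precisely this adapted choice of generating family --- checking that the face-loops inside ${\sf Int}(l_i)$ generate exactly the $\pi_1$ of the relevant subgraph --- and here one invokes, essentially verbatim, the planar-graph combinatorics of \cite{Holonomy}.
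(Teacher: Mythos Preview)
Your proposal is correct and follows essentially the same approach as the paper: apply the abstract construction of Theorem~\ref{th:construction} in the non-commutative setting (via $\Gamma_{\sf nc}$, the analytic bound from Proposition~\ref{momentlibre}, and the projective property), then verify the axioms of Definition~\ref{def:Levymaster}, deferring the independence property to the ${\sf wDP}_2$ argument of \cite{Holonomy}. In fact you spell out more than the paper does --- the paper simply states ``By considering the non-commutative setting, one gets the following theorem'' after the commutative case, and your treatment of axiom~3 (upgrading $\overline{d}$-continuity to convergence in non-commutative distribution via the unitary/telescoping argument) and of axiom~2 (the adapted generating family) makes explicit what the paper leaves to the reader.
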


\subsection{Proof of Theorem~\ref{Characterization Levy} and Theorem~\ref{th:infnorm}}

Combining Theorem~\ref{charact triplet} and Theorem~\ref{th:injectivlevy} gives a proof of Theorem~\ref{Characterization Levy}.

Let us prove Theorem~\ref{th:infnorm}. First, let $\big(h_l\big)_{l\in {\sf L}_0}$ be the master field, that is to say, a free planar Markovian holonomy field with characteristic triplet $(\alpha, b, 0)$ in some non-commutative probability space $(\mathcal A,\tau)$. We consider the pseudodistance $d(x,y)=\|x-y\|_{L^\infty(\tau)}$ on the group $\mathcal A_u$ of unitaries of $\mathcal A$. Let us prove that $\big(h_l\big)_{l\in {\sf L}_{0}}\in \mathcal{M}ult( {\sf L}_{0}, \mathcal A_u)$ is continuous for this metric. By Theorem~\ref{main}, it suffices to prove that there exists $K \geq 0$ such that for all simple loop $l \in {\sf L_{aff}}$ bounding a domain $D$ and such that $\ell(l) \leq K^{-1}$, the following inequality holds: 
\begin{align}
d(1,h_l) \leq K \sqrt{{\sf dx}(D)}.
\end{align}
This is a consequence of \eqref{eq:controlsqrt}, which in our case says that, denoting by $t$ the area inside the simple loop $l$, we have, for $t$ sufficiently small,
$$d(1,h_l)=\|1-h_l\|_{L^\infty(\tau)}\leq 2\sin(\theta_t/2),$$ where $\theta_t=|\alpha|t+\sqrt{(4-bt)bt}/2+\arccos(1-bt/2)$.

Now, if $\big(h_l\big)_{l\in {\sf L}_0}$ is not the master field, which means that $v\neq 0$ in its characteristic triplet $(\alpha, b, v)$, let us prove that it is not continuous for $\|\cdot\|_{L^\infty(\tau)}$. Let $\left(l(t)\right)_{t\geq 0}$ be a continuous process of simple loops of ${\sf L}_0$ such that the interiors of the increments are disjoint and such that $dx(Int(l(t)))=t$. By Proposition~\ref{loopprocess}, we know that $(h_{l(t)})_{t\geq 0}$ is a unitary Lévy process of characteristic triplet of $(\alpha, b, v)$. Thus Proposition~\ref{prop:controlsqrt} says that $\|1-h_{l(t)}\|_{L^\infty(\tau)}$ doesn't converge to $0$ when $t$ tends $0$. On the other hand, $h_{l(t)}$ converges to $h_{l(0)}=1$ as $t$ tends $0$. Consequently, $\big(h_l\big)_{l\in {\sf L}_0}$ is not continuous for $\|\cdot\|_{L^\infty(\tau)}$.

\section{Convergence of planar Markovian holonomy fields toward free planar Markovian holonomy fields}\label{Sec:convergence}
The main purpose of this section is to prove Theorem \ref{Existence Approximation} which asserts that any free planar holonomy field can be approximated by $U(N)$-valued  planar Markovian fields. This will be achieved in section  \ref{Proof Exist Approx}.  Theorem \ref{th:injectivlevyun} and \ref{th:injectivlevy} defined one-to-one mappings $\sf{HF}$ between invariant L\'evy processes  invariant by conjugation and Markovian planar holonomy fields, as well analogues in the framework of free probability.  A first step in proving \ref{Existence Approximation} is to show that the mappings $\sf{HF}$ are continuous in the sense of non-commutative distribution. 

\begin{theorem}\label{extension of convergences}Let ${\sf Y}^{(N)}=(Y_t^{(N)})_{t\ge 0}$ be  a sequence of Lévy processes on a closed subgroup $G_N$ of $U(N)$.  Assume that

\begin{itemize}
\item For  any $t_1,\ldots,t_n\ge 0$ and $\epsilon_1,\ldots, \epsilon_n\in \{1,\ast\}$,
$\lim_{N\to \infty}\frac{1}{N}\Tr\left((Y_{t_1}^{(N)})^{\epsilon_1}\cdots (Y_{t_n}^{(N)})^{\epsilon_n}\right)$ exists in probability. 
\item There exists a constant $C>0$ such that for all $N\ge 1,$ $$1-\frac{1}{N}\mathbb{E}\left[\Re\left(\Tr(Y_t)\right)\right]\le C t. $$  
\end{itemize}
Then  there  exists a  trace $\Phi$ on $(\C[{\sf L}_0],*)$ such that  for any $l\in \sf L_0,$ 
$$\mathbb{E}\left[\left|\frac{1}{N}\Tr({\sf HF^Y}(l))-\Phi(l)\right|\right]\longrightarrow0.$$ 
If furthermore  $(\mathcal{A},\tau)$ is a non-commutative probability space containing a free unitary  L\'evy process ${\sf y}=(y_t)_{t\ge 0}$  and the free planar Markovian holonomy field $\sf{HF}^y,$ such that $$\lim_{N\to \infty}\frac{1}{N}\Tr\left((Y_{t_1}^{(N)})^{\epsilon_1}\cdots (Y_{t_n}^{(N)})^{\epsilon_n}\right)=\tau\left((y_{t_1})^{\epsilon_1}\cdots (y_{t_n})^{\epsilon_n}\right),$$ for any $t_1,\ldots,t_n\ge 0$ and $\epsilon_1,\ldots, \epsilon_n\in \{1,\ast\}$, then, for any $l\in {\sf L}_0,$ 
$$\Phi(l)= \tau({\sf HF^y}(l)).$$ 
\end{theorem}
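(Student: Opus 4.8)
The plan is to prove the convergence first on piecewise affine loops, where ${\sf HF}^{\sf Y}(l)$ is an explicit word in the increments of the L\'evy process and the first hypothesis applies verbatim, and then to propagate it to all rectifiable loops by a modulus of continuity that the second hypothesis makes uniform in $N$. The identification $\Phi=\tau\circ{\sf HF}^{\sf y}$ will follow from the same word-level description together with the convergence hypothesis and the continuity of a free planar Markovian holonomy field.

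\emph{Piecewise affine loops.} Fix a finite planar graph $\mathbb{G}$ with piecewise affine edges, an enumeration $(F_1,\dots,F_k)$ of its bounded faces and a family $(l_1,\dots,l_k)\in\mathcal{F}(\mathbb{G})$. By the finite dimensional law property of Theorem~\ref{th:injectivlevyun}, a loop $l\in\pi_1(\mathbb{G})$ written as a reduced word $l=w(l_1,\dots,l_k)$ has ${\sf HF}^{\sf Y}(l)$ distributed as $w^{op}(h_1,\dots,h_k)$, where the $h_i$ are independent, $h_i$ being distributed as $Y^{(N)}_{{\sf dx}(F_i)}$; equivalently $(h_1,\dots,h_k)$ has the joint law of the increments $Y^{(N)}_{s_i}(Y^{(N)}_{s_{i-1}})^{-1}$, $s_i=\sum_{j\le i}{\sf dx}(F_j)$. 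As $Y^{(N)}$ is unitary, each $h_i^{\pm1}$ is a product of factors $(Y^{(N)}_{s_j})^{\epsilon}$, $\epsilon\in\{1,\ast\}$, so $\tfrac1N\Tr({\sf HF}^{\sf Y}(l))$ has the law of $\tfrac1N\Tr$ of a word in the $(Y^{(N)}_{t_j})^{\epsilon_j}$; by the first hypothesis it converges in probability to a constant, and being bounded by $1$ in modulus it converges in $L^1$, say to $\Phi(l)$. Since any finite family of piecewise affine loops sits inside one $\pi_1(\mathbb{G})$ (Lemma~\ref{lemme:graphe}), this defines $\Phi$ on $\mathbb{C}[{\sf L_{aff}}]$, and the relations $\Phi(l_1l_2)=\Phi(l_2l_1)$, $\Phi(l^{-1})=\overline{\Phi(l)}$, $\Phi(a^*a)\ge 0$, $\Phi(e)=1$ hold because $\Phi_N:=\mathbb{E}\circ\tfrac1N\Tr\circ{\sf HF}^{\sf Y}$ satisfies them and they pass to the limit.

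\emph{From ${\sf L_{aff}}$ to ${\sf L}_0$.} Work with a suitably normalized bi-invariant distance $d_G$. For a simple $l\in{\sf L_{aff}}$ bounding $D$ one has ${\sf HF}^{\sf Y}(l)\overset{\mathrm{law}}{=}Y^{(N)}_{{\sf dx}(D)}$, and $|\tfrac1N\Tr(Y^{(N)}_t)|\le 1$ together with $1-\tfrac1N\mathbb{E}[\Re\Tr(Y^{(N)}_t)]\le Ct$ give, by splitting the squared modulus as $(1-\Re)^2+\Im^2$, the bound $|\Phi_N(l)-1|\le K\sqrt{{\sf dx}(D)}$ and likewise $\mathbb{E}[d_G(e,{\sf HF}^{\sf Y}(l))]\le K\sqrt{{\sf dx}(D)}$, with $K$ depending only on $C$ — the uniform-in-$N$ form of Proposition~4.3.12 of~\cite{Levy2010}. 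Fed into the proof of Theorem~\ref{main}, this uniform bound produces a modulus of continuity $\omega$, independent of $N$, with $\mathbb{E}[d_G({\sf HF}^{\sf Y}(l),{\sf HF}^{\sf Y}(l'))]\le\omega(d(l,l'))$ for $l,l'\in{\sf L_{aff}}$ of bounded length, whence $|\Phi_N(l)-\Phi_N(l')|\le\mathbb{E}[\tfrac1N|\Tr({\sf HF}^{\sf Y}(l)-{\sf HF}^{\sf Y}(l'))|]\le\omega(d(l,l'))$. For $l\in{\sf L}_0$ and piecewise affine $l_n\to l$, the double-limit argument — pointwise convergence $\Phi_N(l_n)\xrightarrow{N}\Phi(l_n)$ from the previous step, equicontinuity from $\omega$ — shows that $(\Phi_N(l))_N$ is Cauchy; its limit $\Phi(l)$ extends $\Phi$ continuously to ${\sf L}_0$, and decomposing $\mathbb{E}[|\tfrac1N\Tr({\sf HF}^{\sf Y}(l))-\Phi(l)|]$ through $l_n$, using $\mathbb{E}[\tfrac1N|\Tr({\sf HF}^{\sf Y}(l)-{\sf HF}^{\sf Y}(l_n))|]\le\omega(d(l,l_n))$, the $L^1$ convergence on ${\sf L_{aff}}$ and the continuity of $\Phi$, yields the claimed convergence for every $l\in{\sf L}_0$. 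The delicate point of the whole argument is precisely this uniformity in $N$: the first hypothesis alone controls ${\sf HF}^{\sf Y}$ only face by face on piecewise affine graphs, and it is the $\sqrt t$-bound of the second hypothesis — making L\'evy's extension by continuity quantitative with a constant free of $N$ — that lets one reach all rectifiable loops and conclude that $\Phi$ is continuous.

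\emph{Identification with $\tau\circ{\sf HF}^{\sf y}$.} Since ${\sf HF}^{\sf y}$ is a free planar Markovian holonomy field, $l\mapsto\tau({\sf HF}^{\sf y}(l))$ is continuous on ${\sf L}_0$; as ${\sf L_{aff}}$ is dense, it is enough to prove $\Phi(l)=\tau({\sf HF}^{\sf y}(l))$ for $l\in{\sf L_{aff}}$. For $l\in\pi_1(\mathbb{G})$, Theorem~\ref{th:injectivlevy} gives ${\sf HF}^{\sf y}(l)=w^{op}({\sf HF}^{\sf y}(l_1),\dots,{\sf HF}^{\sf y}(l_k))$ with the \emph{same} word $w^{op}$ as above, the $({\sf HF}^{\sf y}(l_i))_i$ being freely independent with $i$-th term distributed as $y_{{\sf dx}(F_i)}$, hence having the joint non-commutative law of the freely independent increments $y_{s_i}(y_{s_{i-1}})^{-1}$; thus $\tau({\sf HF}^{\sf y}(l))$ is $\tau$ of a word in the $(y_{t_j})^{\epsilon_j}$. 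The convergence hypothesis $\tfrac1N\Tr((Y^{(N)}_{t_1})^{\epsilon_1}\cdots)\to\tau((y_{t_1})^{\epsilon_1}\cdots)$ applied to that word gives $\tfrac1N\Tr({\sf HF}^{\sf Y}(l))\to\tau({\sf HF}^{\sf y}(l))$ in probability, hence in $L^1$, so $\Phi(l)=\lim_N\Phi_N(l)=\tau({\sf HF}^{\sf y}(l))$, and by density and continuity this extends to all $l\in{\sf L}_0$.
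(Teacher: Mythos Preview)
Your proof is correct and follows essentially the same route as the paper: first establish convergence on ${\sf L_{aff}}$ via the word description in the increments of ${\sf Y}^{(N)}$, then use the second hypothesis to obtain a $\sqrt{t}$-bound on simple loops that is uniform in $N$, feed this into L\'evy's extension theorem to get a uniform-in-$N$ modulus of continuity, and conclude by a double-limit argument; the identification with $\tau\circ{\sf HF}^{\sf y}$ is handled identically. The only cosmetic differences are that the paper works with the $L^2$-type metric $d_N(X,Y)=\frac{1}{\sqrt{2N}}\mathbb{E}[\Tr((X-Y)(X-Y)^*)]^{1/2}$ on $L^1(\Omega,G_N)$ (which makes the identity $d_N(1,{\sf HF}^{\sf Y}(s))^2=1-\Re\,\mathbb{E}[\tfrac1N\Tr(Y_t)]$ immediate), and explicitly cites the quantitative bound $d(H_l,H_{D_n(l)})\le K\ell(l)^{3/4}(\ell(l)-\ell(D_n(l)))^{1/4}$ from the proof of Theorem~3.3.1 in~\cite{Levy2010}, whereas you invoke an unspecified modulus of continuity $\omega$ and the $L^1$ metric $\mathbb{E}[d_G(\cdot,\cdot)]$.
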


In section \ref{Sec:Existence Approximation}, a second step   in proving theorem \ref{Existence Approximation} will be  the use of an approximation by finite unitary matrices of  any free unitary L\'evy process fulfilling the above conditions.

 \begin{remark}   Approximation results  can be  found using different classical series  for example with  orthogonal, symplectic compact groups, but also with  groups of permutations  viewed as matrices.  Then, the Theorem can also be applied. This latter example was investigated in   \cite{Gabriel2015c}, where  the above Theorem \ref{extension of convergences} was used to get  therein Theorem 3.2.
 \end{remark}
  \begin{remark}  The random variables considered being bounded, their convergence in $L^1$ and in probability are equivalent. A yet unsolved problem is to prove that the theorem \ref{extension of convergences} holds true when the convergences in $L^1$ are replaced by almost sure convergences. 
 \end{remark}
\begin{remark}The second assertion of Theorem \ref{extension of convergences} yields that the free planar Markovian holonomy fields are the only possible limiting objects in the framework of free probability.  Let us prove it   assuming the first one to hold true.
\end{remark}

We will prove Theorem \ref{extension of convergences} in two steps, corresponding to the  next two sections. 

\subsection{Convergence for affine loops}

We shall prove that the convergence of normalized traces for the L\'evy process $(Y^{N}_t)_{t\ge0}$ yields the one for the  field $({\sf HF}^{Y}(l))_{l\in \sf L_{Aff}}$. 
\begin{lemme} \label{conv affine} Let $(Y_t)_{t\ge 0}$ be a process satisfying the first conditions of Theorem  \ref{extension of convergences}, then,  there exist a  trace $\Phi$ on  $ \C[\sf L_{Aff}]$ with
$$\mathbb{E}\left[ \left|\frac{1}{N}\Tr({\sf HF}^Y(l))-\Phi(l)\right|\right]\to0.$$ If $(y_t)_{t\ge 0}$ is free unitary L\'evy process in $(\mathcal{A},\tau)$ fulfilling the second conditions then for any $l\in \sf{L}_{\sf Aff}$,
$$\Phi(l)=\tau\left({\sf HF}^y(l)\right).$$
\end{lemme}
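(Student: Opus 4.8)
The plan is to reduce the statement about affine loops to the convergence hypothesis on the L\'evy process, using the combinatorial description of ${\sf HF}^Y$ on a finite planar graph provided by Theorem \ref{th:injectivlevyun} (and its free analogue Theorem \ref{th:injectivlevy}). First I would fix an affine loop $l\in {\sf L_{Aff}}$, and more generally a finite family of affine loops, and invoke Lemma \ref{lemme:graphe} to place them all on a single finite planar graph $\mathbb{G}$ whose edges are piecewise affine, with bounded faces $F_1,\dots,F_k$. Choosing a free generating family $(l_1,\dots,l_k)\in\mathcal{F}(\mathbb{G})$, every loop $l$ drawn on $\mathbb{G}$ is a reduced word $w$ in the $l_i^{\pm1}$, and by construction ${\sf HF}^Y(l)=w^{op}(h_1,\dots,h_k,h_1^{-1},\dots,h_k^{-1})$ where $h_i=Y^{(N)}_{\sum_{j\le i}{\sf dx}(F_j)}\left(Y^{(N)}_{\sum_{j<i}{\sf dx}(F_j)}\right)^{-1}$. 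Thus $\frac1N\Tr({\sf HF}^Y(l))$ is exactly one of the normalized traces of words in the increments $Y^{(N)}_{t_p}$ appearing in the first hypothesis of Theorem \ref{extension of convergences}; hence it converges in probability (equivalently in $L^1$, the variables being bounded by $1$) to a limit which I would call $\Phi(l)$, namely the corresponding $\tau$-value of the same word in the increments of the free process in the second part of the statement. This simultaneously gives the identification $\Phi(l)=\tau({\sf HF}^y(l))$, since by Theorem \ref{th:injectivlevy} the field ${\sf HF}^y$ on $\mathbb{G}$ is built from the very same word in the increments $y_{t_p}$.

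The main point requiring care is that $\Phi$ is \emph{well defined} as a function on ${\sf L_{Aff}}$ (and extends linearly to a trace on $\mathbb{C}[{\sf L_{Aff}}]$), independently of the auxiliary choices of graph $\mathbb{G}$ and generating family $\mathcal{F}(\mathbb{G})$. For this I would argue that for a fixed finite family of loops the candidate limit is the limit of a convergent sequence of random variables, so it is unique; and that Theorem \ref{th:compatibilite} guarantees that the joint law of $\left({\sf HF}^Y_{\mathbb{G}}(l_1),\dots\right)$ is unchanged under refinement of the graph, so two graphs carrying the same loops give the same limits by passing to a common refinement. Linearity in $\mathbb{C}[{\sf L_{Aff}}]$ is immediate from linearity of $\frac1N\Tr$; traciality $\Phi(ll')=\Phi(l'l)$ follows from traciality of $\frac1N\Tr$ on matrices (and of $\tau$); and $*$-compatibility $\Phi(l^{-1})=\overline{\Phi(l)}$ follows since ${\sf HF}^Y(l^{-1})={\sf HF}^Y(l)^{-1}={\sf HF}^Y(l)^*$ as $Y$ is unitary. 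Positivity of $\Phi$ on $\mathbb{C}[{\sf L_{Aff}}]$ is inherited as a limit of the positive functionals $\mathbb{E}\circ\frac1N\Tr$.

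The step I expect to be the genuine obstacle is not any single algebraic identity but the \emph{uniformity} needed to go from convergence of each individual trace to the existence of a single trace $\Phi$ defined on all of ${\sf L_{Aff}}$ at once: one must make sure the null set on which convergence fails can be chosen independently of the loop (or, more precisely, that one works with $L^1$ convergence throughout, where countable families cause no trouble, rather than almost sure convergence, as flagged in the remark after Theorem \ref{extension of convergences}). Concretely I would handle this by noting that ${\sf L_{Aff}}$ is generated, via concatenation and inversion, by a countable dense family of affine segments, reduce all statements to this countable set, extract limits there, and extend $\Phi$ by multiplicativity/linearity; the compatibility under refinement from Theorem \ref{th:compatibilite} then shows the extension is consistent. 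The second hypothesis of Theorem \ref{extension of convergences} makes the identification $\Phi(l)=\tau({\sf HF}^y(l))$ automatic on this countable generating set, and hence everywhere on ${\sf L_{Aff}}$ by multiplicativity of both ${\sf HF}^y$ and (in the sense of the word description above) of the limit.
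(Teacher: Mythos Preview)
Your core argument is exactly the paper's: embed $l$ in a piecewise-affine planar graph $\mathbb{G}$ via Lemma~\ref{lemme:graphe}, use the finite-dimensional law (point~2 of Theorem~\ref{th:construction}, equivalently Theorems~\ref{th:injectivlevyun} and~\ref{th:injectivlevy}) to write ${\sf HF}^Y(l)$ as a word in the increments $Y^{(N)}_{\sum_{j\le i}{\sf dx}(F_j)}\bigl(Y^{(N)}_{\sum_{j<i}{\sf dx}(F_j)}\bigr)^{-1}$, and then read off the convergence of $\frac1N\Tr({\sf HF}^Y(l))$ directly from the first hypothesis of Theorem~\ref{extension of convergences}. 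The identification $\Phi(l)=\tau({\sf HF}^y(l))$ under the second hypothesis follows the same way. This is correct and complete.

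Your final paragraph, however, addresses a non-issue. The limit $\Phi(l)$ is a \emph{deterministic complex number} for each $l$ (it is the $L^1$-limit of bounded random variables, hence unique and constant); there is no null set to keep track of and no need for a countable generating family or any uniformity argument. You simply define $\Phi:{\sf L_{Aff}}\to\mathbb{C}$ loop by loop. Likewise, your worry about independence from the choice of $\mathbb{G}$ and of the basis in $\mathcal{F}(\mathbb{G})$ is resolved more cheaply than via Theorem~\ref{th:compatibilite}: the random variable ${\sf HF}^Y(l)$ is already well defined (the graph is only a tool to compute its law), so its $L^1$-limit is automatically well defined. The paper's proof accordingly dispenses with all of this and is about four lines long.
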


\begin{proof}For any $l\in \sf L_{Aff}$,  let us consider the embedded graph $\mathbb{G}$ containing $l,$ given   by  Lemma \ref{lemme:graphe} for $\{l\}$.  According to point 2. of Theorem \ref{th:construction},  for any enumeration $F_1,\ldots, F_k$ of the bounded faces of $\mathbb{G},$ if ${\sf l}=(l_1,\ldots,l_k)\in\mathcal{F}(\mathbb{G}),$ then, 
$$\left({\sf HF}^Y(l_i)\right)_{i=1}^k\overset{(\text{law})}{=} \left(Y_{\sum_{j=1}^{i} dx(F_j)} \left(Y_{\sum_{j=1}^{i-1}dx(F_j)} \right)^{-1}\right)_{i=1}^k.$$  
By assumption, the right tuple converges in non-commutative distribution in probability, furthermore  towards $\left(y_{\sum_{j=1}^{i} dx(F_j)} \left(y_{\sum_{j=1}^{i-1}dx(F_j)} \right)^{-1}\right)_{i=1}^k$, if  the second condition is satisfied.  Let us decompose $l$ in  the basis $(l_1,\ldots,l_k)$ of $\pi_1(\mathbb{G})$ as a word $w(l_i,i=1..k),$  and write $w^op$ for the same word written in reverse.  Then,  $\frac{1}{N}\Tr\left({\sf HF^Y}(l)\right)=\frac{1}{N}\Tr\left(w^{op}({\sf HF^Y(l_i)},i=1..k)\right)$ converges in probability (towards $\tau\left(w^{op}({\sf HF^y}(l_i),i=1..k)\right)=\tau\left({\sf HF^y}(l)\right)$, when  the second assumption is fulfilled).
\end{proof}

\textbf{Two uniformity estimates:}  Extending this result to any  loop in $\sf L_0$ leads to the following problem of  uniform convergence.  Consider a loop $l\in \sf L_0$ and an approximation of  it  by affine loops $l_n\in \sf L_{Aff}$.  Together with  the continuity property of the  holonomy fields (3. of Theorem \ref{th:construction}), the  Lemma \ref{conv affine}  yields  the following  double limit array  of  $\sf L^1$-convergence:     
\begin{align*}
 \xymatrix{
 & \frac{1}{N}\textbf{${\sf Tr}\left({\sf HF^Y}(l_n)\right)$}\ar@/^{+0pc}/[rr]^{n \to \infty}\ar@/^{+0pc}/[dd]_{N \to \infty} &&  \frac{1}{N}\textbf{${\sf Tr}({\sf HF^Y}(l))$}\ar@{.>}[dd]^{N \to \infty} \\
\text{ } \\
 & \Phi(l_n) \ar@{.>}[rr]^{n \to \infty} && \Phi(l)  
}
\end{align*}
where plain arrows are known convergences, whereas   convergences along   dotted arrows remain to be proved. Therefor, it would be enough to prove a uniformity estimate for  one of the plain  arrows. That is, showing whether 
\begin{align}
\label{eq:tc1}
\sup_{n} \mathbb{E}\left[\left|\frac{1}{N}\Tr({\sf HF^Y}(l_n))- \Phi(l_n)\right|\right] \underset{N \to \infty}{\longrightarrow} 0
\end{align}
or
\begin{align}
\label{eq:tc2}
\sup_{N} \mathbb{E}\left[\left|\frac{1}{N}\Tr({\sf HF^Y}(l_n))- \frac{1}{N}\Tr({\sf HF^Y}(l))\right|\right] \underset{n \to \infty}{\longrightarrow} 0.
\end{align}
In \cite{Levy2011}, T. L\'evy proved a uniformity estimate for the convergence towards the master field, yielding (\ref{eq:tc1}), when ${\sf Y}$ is a Brownian motion on\footnote{or on the compact orthogonal and symplectic groups.} $U(N)$. Namely, he showed that when ${\sf Y}$ is properly scaled Brownian motion on $U(N),$ for any $l\in \sf L_{Aff},$ 
\begin{align}
\label{eq:Levyuni}
\mathbb{E}\left[\left|\frac{1}{N}\Tr({\sf HF^Y}(l_n))- \Phi(l_n)\right|\right]  \leq \frac{1}{N}\left[ \ell(l_n) e^{\frac{1}{2} \ell(l_n)^{2}} + \ell(l_n)^{2} e^{\ell(l_n)^{2}}\right], 
\end{align}
where $\ell(l)$ denotes the length of $l$. When $(l_n)_{n\ge 1}$ converges to $l$, its length converges towards the one of $l$ and in particular $\sup_n \ell(l_n)<\infty.$  Hence, the converge.

\vspace{0,5 cm}

For other  L\' evy processes a bound alike (\ref{eq:Levyuni}) seems much more difficult to obtain. To bypass this difficulty, we will  in the next section prove  (\ref{eq:tc2}) instead. 
\begin{remark} One can tackle the problem  of fluctuations around the limit or  of  moderate deviations  with the same strategy. In \cite{Dahlqvist2014a}, it has been shown that the approach of \cite{Levy2011},  can be generalized for the Brownian motion, to get convergence results for  Laplace transforms of the considered random variables, yielding  local central limit theorems. For L\'evy processes, this remains an open question. \end{remark}

\begin{proof}[Proof of Theorem \ref{extension of convergences} knowing (\ref{eq:tc2}):]    According to Lemma \ref{conv affine}, for any integer $n\ge 0,$ $\frac{1}{N}\Tr({\sf HF^Y}(l_n))$ converges to $\Phi(l_n)$ in $L^1$.  Using (\ref{eq:tc2}) yields that $(\Phi(l_n))_{n\ge 0}$  is a Cauchy sequence. Let $\Phi(l)$ be its limit. Then, (\ref{eq:tc2}) implies that $\frac{1}{N}\Tr({\sf HF^Y}(l))$ converges  towards $\Phi(l)$ in $L^1.$  Indeed, for any $n\ge 0,$

$$\begin{aligned}\sup_N\mathbb{E}\left[\left|\frac{1}{N}\Tr({\sf HF^Y}(l))-\Phi(l) \right|\right]\le \sup_N\frac{1}{N} \mathbb{E}\left[\left|\Tr({\sf HF^Y}(l))-\Tr({\sf HF^Y}(l_n)) \right|\right]\ + |\Phi(l_n)-\Phi(l)| \end{aligned}$$
Considering $n\to\infty,$ gives $\sup_N\mathbb{E}\left[\left|\frac{1}{N}\Tr({\sf HF^Y}(l))-\Phi(l) \right|\right]=0.$ When ${\sf Y}$ converges in non-commutative distribution to a free unitary L\'evy process ${\sf y},$ such that ${\sf y}$ and ${\sf HF^y}$ take their  values in $(\mathcal{A},\tau),$ then  for any $l\in {\sf L_{Aff}},$ $\Phi(l)=\tau(\sf{HF^y}(l))$.  The continuity property of Theorem \ref{th:construction} yields that this equality holds true for any $l\in {\sf L}_0.$  
\end{proof}

In order to prove (\ref{eq:tc2}), we shall use Theorem \ref{main}, together with a uniformity estimate.

\subsection{Application of an extension theorem}
Let us fix here the approximation of a loop that we shall use.  For any loop $l\in \sf L_0$,  and any integer $n\in \mathbb{N},$  let  $D_n(l)\in \sf L_{Aff}$ be the piecewise affine loop connecting consecutively $\tilde{l}(0), \tilde{l}(2^{-n}\ell(l)),\ldots, \tilde{l}((1-2^{-n})\ell(l))$ and $\tilde{l}(0)$ with segments, where $\tilde{l}$ is the parametrization of $l$ by its length. The sequence $(D_n(l))_{n\ge 0}$ converges to $l$ \cite[Proposition $1.2.12$]{Levy2010}.   
\vspace{0,5 cm}

\noindent The following estimate is the main argument in what follows.  

\begin{proposition}[Theorem 3.3.1 in \cite{Levy2010}]\label{main bound}Let $(\Gamma,d)$ be a  complete metric group, such that $d$ is invariant by translations and inversion,  $H\in \mathcal{M}ult({\sf L}_0, \Gamma),$  a multiplicative function and  a constant $K>0$ such that  
$$d(1, H_s)\le K \sqrt{ t},$$
for any simple loop $s\in {\sf L}_0$,  bordering an area $t.$   Then, for any $ n\ge 1,$ 
$$d(H_{l},H_{D_n(l)})\le K \ell(l)^{3/4}\left(\ell(l)-\ell(D_n(l))\right)^{1/4}.$$
\end{proposition}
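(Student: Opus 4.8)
The plan is to reduce the bound on $d(H_l, H_{D_n(l)})$ to the telescoping of the loop $l$ into a product of thin "sliver" loops, each bounding a small area, and then to estimate each sliver using the hypothesis $d(1,H_s)\le K\sqrt{t}$ together with the invariance of $d$ by translations. First I would recall that, by a standard construction (this is exactly the combinatorial heart of the proof of Theorem~3.3.1 in \cite{Levy2010}), the loop $l$ written with its length-parametrization $\tilde l$ and the piecewise affine approximation $D_n(l)$ differ by a "lasso decomposition": there is a finite family of loops $s_1,\dots,s_m$, each of the form (a path that goes out along a piece of $l$ or $D_n(l)$, runs around a thin region, and comes back) such that, using multiplicativity of $H$, one can write the difference $H_l H_{D_n(l)}^{-1}$ (after conjugating) as a product of the $H_{s_i}$, and such that each $s_i$ bounds a region lying between the arc of $l$ from $\tilde l((k-1)2^{-n}\ell(l))$ to $\tilde l(k2^{-n}\ell(l))$ and the corresponding chord of $D_n(l)$.

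Second, I would estimate the area $t_i$ bounded by $s_i$. The arc of $l$ over a parameter interval of length $2^{-n}\ell(l)$ and its chord together bound a region whose area is at most (length of the arc) times (maximal deviation of the arc from the chord); since the arc has length $2^{-n}\ell(l)$ and deviates from the chord by at most $2^{-n}\ell(l)$, crude bounds give $t_i \le C\,2^{-2n}\ell(l)^2$ for each of the $m\le 2^n$ slivers; but more usefully, summing, $\sum_i t_i$ is controlled by $\ell(l)\cdot(\ell(l)-\ell(D_n(l)))$ up to constants, because $\ell(l)-\ell(D_n(l)) = \sum_k (\text{arc length}_k - \text{chord length}_k)$ and the area of a thin sliver between an arc and its chord is comparable to (arc length)$\times$(arc length $-$ chord length). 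Then, by the triangle inequality and translation-invariance of $d$,
$$d(H_l, H_{D_n(l)}) \le \sum_{i=1}^m d(1, H_{s_i}) \le K\sum_{i=1}^m \sqrt{t_i}.$$
Applying Cauchy--Schwarz, $\sum_i \sqrt{t_i}\le \sqrt{m}\,\sqrt{\sum_i t_i}$, and using $m\le 2^n$ and $\sum_i t_i \lesssim \ell(l)(\ell(l)-\ell(D_n(l)))$, one has to be a little more careful to land on the exponents $3/4$ and $1/4$: the right grouping is to bound $\sum_i\sqrt{t_i}$ by writing $t_i \le (\sup_i \text{arc length}_i)\cdot(\text{arc length}_i - \text{chord length}_i)$ so that $\sqrt{t_i}\le (2^{-n}\ell(l))^{1/2}(\text{arc}_i-\text{chord}_i)^{1/2}$, then Cauchy--Schwarz on $\sum_i 1^{1/2}\cdot\big[(2^{-n}\ell(l))(\text{arc}_i-\text{chord}_i)\big]^{1/2}$ gives $\le m^{1/2}\big(2^{-n}\ell(l)\sum_i(\text{arc}_i-\text{chord}_i)\big)^{1/2} = m^{1/2}(2^{-n}\ell(l))^{1/2}(\ell(l)-\ell(D_n(l)))^{1/2}$, and finally $m\cdot 2^{-n}\le \ell(l)^{?}$... — this needs one more interpolation between the "few slivers, each large" and "many slivers, each small" regimes, which is precisely how the exponent $3/4 = 1/2 + 1/4$ on $\ell(l)$ appears.

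The main obstacle I expect is exactly this last bookkeeping: getting the clean exponents $\ell(l)^{3/4}(\ell(l)-\ell(D_n(l)))^{1/4}$ rather than a cruder bound. The point is that one must not estimate each sliver by its worst case; instead one interpolates, splitting the slivers by size or using Hölder with the right exponent so that the total deviation $\ell(l)-\ell(D_n(l))$ enters to the power $1/4$. Since this is verbatim the content and method of \cite[Theorem~3.3.1]{Levy2010} (only the target group $\Gamma$ and metric $d$ are abstract here, and all that is used of them is completeness, multiplicativity of $H$, the triangle inequality, and translation/inversion invariance of $d$), I would simply carry the argument there through in the present abstract setting, checking at each step that only these properties are invoked, and would not reproduce the full combinatorial lasso decomposition in detail.
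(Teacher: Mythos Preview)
Your sketch diverges from the paper's argument and, as written, has a genuine gap. The paper does \emph{not} attempt to compare $l$ directly to $D_n(l)$ via slivers. Instead it proceeds in two steps: first, it quotes from the proof of \cite[Proposition~3.3.7]{Levy2010} the Cauchy-type estimate between two \emph{piecewise affine} dyadic approximations,
\[
d\big(H_{D_m(l)},H_{D_n(l)}\big)\le K\,\ell(l)^{3/4}\big(\ell(l)-\ell(D_n(l))\big)^{1/4}\qquad(m\ge n\ge 0),
\]
and second, it invokes Theorem~\ref{main} to know that $H$ is continuous on ${\sf L}_0$, so that letting $m\to\infty$ (with $D_m(l)\to l$) turns the left-hand side into $d(H_l,H_{D_n(l)})$. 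That is the entire proof; the only thing to check is that the constant $K$ in L\'evy's estimate is the same $K$ appearing in the hypothesis.

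Your direct route runs into trouble at the very first step: the ``sliver'' $s_i$ you form between an arc of the arbitrary rectifiable loop $l$ and the corresponding chord of $D_n(l)$ is \emph{not} in general a simple loop (the arc may self-intersect or cross the chord many times), so the hypothesis $d(1,H_s)\le K\sqrt t$ does not apply to it, and ``the area it bounds'' is not the naive region you estimate. L\'evy's actual argument avoids this by telescoping over the levels $k=n,n+1,\ldots,m-1$ and comparing $D_k(l)$ with $D_{k+1}(l)$: at each level the slivers are genuine triangles (three straight segments), hence simple, and the $3/4,\,1/4$ exponents drop out of the geometric series over $k$, not from a single Cauchy--Schwarz step as you attempt. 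This is also why your bookkeeping at the end never closes. If you want to follow L\'evy verbatim, you must work between $D_m(l)$ and $D_n(l)$ and then pass to the limit using continuity, exactly as the paper does.
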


\begin{proof} In the proof of Proposition 3.3.7 of  \cite{Levy2010}, that for any \footnote{Here, as we consider loops in the plane, there is no  bound on the length of geodesics for the approximations of $l$,  so that  we can consider any dyadic approximation $(D_n(l))_{n\ge 0}$ for all $n\in\mathbb{N}$.} $m\ge n\ge 0,$ 
$$d(H_{D_m(l)},H_{D_n(l)})\le K \ell(l)^{3/4}\left(\ell(l)-\ell(D_n(l))\right)^{1/4}.$$
According to  Theorem  \ref{main}, the function $H$ is continuous and  as $(D_m(l))_{m\ge 0}$ converges to $l$, the statement holds true. An important remark must be added. A careful inspection shows that the  constant $K$ used in this proposition is the same as the constant of Theorem 3.3.1 and the same as ours.
\end{proof}
We can now conclude the proof of   Theorem \ref{extension of convergences} with the following argument.
\begin{proof}[Proof of the $L^1$-uniform estimate (\ref{eq:tc2}):]  Let us  consider  a sequence of L\'evy processes ${\sf Y^{(N)}}$ satisfying the assumptions  of Theorem \ref{extension of convergences}, together with the associated fields ${\sf HF^Y}$, realized on a probability space $(\Omega,\mathcal{B},\mathbb{P}).$  We  define a metric $d_N$ on   the group $\Gamma_N=L^1(\Omega,G_N)$ of $G_N$-valued random variables setting for any $X,Y\in L^1(\Omega,G_N),$ 
$$d_N(X,Y)=\frac{1}{\sqrt{2N}}\mathbb{E}\left[\Tr((X-Y)(X-Y)^*)\right]^{1/2}. $$
We shall apply the proposition \ref{main bound}  to $(\Gamma_N,d_N)$.  Since $G_N$ is closed  $\Gamma_N$ is complete, besides $d_N$ is invariant by translations and inversion.  For any simple loop $s\in {\sf  L}_0$, bounding an area $t$,  ${\sf HF^Y}(s)$  has the same  law as $Y_t^{(N)}$, therefore
$$d_N(1,{\sf HF^Y}(s))^2=1-\frac{1}{2N}\mathbb{E}\left[\Tr\left({\sf HF^Y}(s)+{\sf HF^Y}(s)^*\right)\right]=1-\Re \left(\mathbb{E}\left[\frac{1}{N}\Tr(Y_t)\right]\right).$$
By assumption,  we have a bound on the right-hand-side and we get $d_N(1,{\sf HF^Y}(s))\le \sqrt{C t}.  $ We can apply Proposition \ref{main bound} to ${\sf HF^Y}$ considered as a multiplicative function of $\mathcal{M}({\sf L}_0, \Gamma_N)$. Using Cauchy-Schwarz inequality, we get for any $n$,
$$\begin{aligned}\frac{1}{N}\mathbb{E}\left[\left|\Tr\left({\sf HF^Y}(l)\right)-\Tr\left({\sf HF^Y}(D_n(l))\right)\right|\right]&\le  \sqrt{2}d_N\left({\sf HF^Y}(l),{\sf HF^Y}(D_n(l))\right)\\
&\le \sqrt{2C} \ell(l)^{3/4}(\ell(l)-\ell\left(D_n(l)\right))^{1/4}.\end{aligned}$$
This proves the limit (\ref{eq:tc2}).
\end{proof}

\subsection{Proof of  Theorem \ref{Existence Approximation}\label{Proof Exist Approx}}\label{Sec:Existence Approximation}
The following result is a crucial step in the proof of Theorem \ref{Existence Approximation}.

\begin{theorem}[Theorem 3 of \cite{Cebron2015b}]Let $(y_t)_{t\in \mathbb{R}^+}$ be a free unitary Lévy process. There exists a sequence (indexed by $N$) of Lévy processes $({Y_t}^{(N)})_{t\in \mathbb{R}^+}$ with values in $U(N)$, starting at $I_N$, and unitarily invariant, such that $({Y_t}^{(N)})_{t\in \mathbb{R}^+}$ converges in non-commutative distribution to $(y_t)_{t\in \mathbb{R}^+}$ in the following senses: \label{Existence Approximation Levy}
for all $t_1,\ldots, t_n\in \mathbb{R}^+$ and $\epsilon_1,\ldots, \epsilon_n\in \{1,\ast\}$,
$$\lim_{N\to \infty}\mathbb{E}\left[\frac{1}{N}\Tr\left((Y_{t_1}^{(N)})^{\epsilon_1}\cdots (Y_{t_n}^{(N)})^{\epsilon_n}\right)\right]=\tau\left((y_{t_1})^{\epsilon_1}\cdots (y_{t_n})^{\epsilon_n}\right);$$
and in addition, almost surely: for all $t_1,\ldots, t_n\in \mathbb{R}^+$ and $\epsilon_1,\ldots, \epsilon_n\in \{1,\ast\}$,
$$\lim_{N\to \infty}\frac{1}{N}\Tr\left((Y_{t_1}^{(N)})^{\epsilon_1}\cdots (Y_{t_n}^{(N)})^{\epsilon_n}\right)=\tau\left((y_{t_1})^{\epsilon_1}\cdots (y_{t_n})^{\epsilon_n}\right).$$
Moreover, we can also require the following equality for the first moment: for all $N\geq 1$,
$$\mathbb{E}\left[\frac{1}{N}\Tr\left({Y_t}^{(N)}\right)\right]=\tau(y_t).$$
\end{theorem}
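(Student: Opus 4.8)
The plan is to split the characteristic triplet $(\alpha,b,v)$ of $(y_t)_{t\ge 0}$ into its three ingredients, to produce a unitarily invariant L\'evy process on $U(N)$ realising each of them, and then to drive a single left-invariant stochastic differential equation on $U(N)$ with all three inputs simultaneously. The drift contributes only a central factor $e^{i\alpha_N t}I_N$, where the real number $\alpha_N\to\alpha$ will be fine-tuned at the very end. The Gaussian datum $b$ is realised by the Brownian motion on $U(N)$ with its time rescaled by $b$, whose non-commutative moments converge, in expectation and --- by concentration of measure on $U(N)$ --- almost surely, to those of the free unitary Brownian motion of speed $b$; this is precisely the content of L\'evy's construction of the master field, and Proposition~\ref{momentlibre} shows that the expected normalised trace of this Brownian motion is already the correct exponential, up to a scalar factor that will be absorbed into $\alpha_N$.

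The real work is the jump part. Fix $\epsilon>0$ and let $v_\epsilon$ be the restriction of $v$ to $\{\zeta\in\mathbb{U}:|\zeta-1|>\epsilon\}$, a finite measure since $\int_{\mathbb{U}}(\Re(\zeta)-1)\,dv(\zeta)$ converges. I would build a unitarily invariant compound Poisson process $Y^{(N),\epsilon}$ on $U(N)$ from a Poisson point process on $\mathbb{R}_+\times\mathbb{U}\times S^{2N-1}$ with intensity $N\,dt\otimes dv_\epsilon(\zeta)\otimes d\sigma_N(\xi)$, where $\sigma_N$ is the uniform probability measure on the unit sphere of $\C^N$: at each atom $(s,\zeta,\xi)$ one multiplies the current matrix, on the right, by the rank-one unitary perturbation $I_N+(\zeta-1)\xi\xi^*$. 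Unitary invariance is immediate from the conjugation invariance of $\sigma_N$. The point is that each jump is a rank-one perturbation along a uniformly random direction independent of the past, so the standard asymptotic-freeness computation applies: expanding $\mathbb{E}[\frac1N\Tr(\text{word in }Y^{(N),\epsilon})]$ over the atoms of the Poisson process and using $\mathbb{E}_\xi[\xi\xi^*]=I_N/N$ shows that it converges, as $N\to\infty$, to the corresponding moment of the free unitary L\'evy process with triplet $(0,0,v_\epsilon)$, the phase $\exp(it\int\Im(\zeta)\,dv_\epsilon)$ being relegated to the drift. The scaling by $N$ in the intensity is exactly what turns the rate-$N$ accumulation of size-$O(1/N)$ perturbations into an $O(1)$ effect, and identifying the limit law amounts to matching this expansion against the $S$-transform description of free unitary L\'evy processes recalled above.

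It then remains to remove the cut-off $\epsilon$ and to upgrade the convergence in probability to an almost sure one. For the cut-off I would run the same double-limit argument as the one used in this section for affine approximations: comparing the $\epsilon$- and $\epsilon'$-truncations (which differ only through the compensated jumps with $\epsilon'<|\zeta-1|\le\epsilon$), the $L^2$-distance $d_N$ from Proposition~\ref{main bound} between them is bounded by a quantity of the form $t\int_{\epsilon'<|\zeta-1|\le\epsilon}(1-\Re(\zeta))\,dv(\zeta)$, which tends to $0$ with $\epsilon$ \emph{uniformly in $N$}; this is where the compensated compound Poisson construction (subtracting the first-order drift) is needed, so that the small jumps actually sum, and where the integrability of $v$ near $1$ is used. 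The same bound holds on the free side, so the limits $N\to\infty$ and $\epsilon\to0$ commute. Combining the three blocks yields the convergence of $\mathbb{E}[\frac1N\Tr(\cdots)]$; a variance estimate of order $N^{-2}$ together with Borel--Cantelli, applied along a countable dense set of times and extended to all times by right-continuity of $t\mapsto Y^{(N)}_t$ and continuity of the limit, gives the almost sure statement. Finally, the exact first-moment identity is arranged by the choice of $\alpha_N$: replacing $\alpha$ by $\alpha_N$ multiplies $\mathbb{E}[\frac1N\Tr(Y^{(N)}_t)]$ by $e^{i(\alpha_N-\alpha)t}$ and this expectation is an explicit exponential in $t$, so one solves for the unique $\alpha_N$ making it equal to $\tau(y_t)=\exp(i\alpha t-bt/2+t\int_{\mathbb{U}}(\Re(\zeta)-1)\,dv(\zeta))$ for all $t$ at once; clearly $\alpha_N\to\alpha$.

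The main obstacle is the combinatorial heart of the second paragraph: proving that a product of many independent rank-one unitary perturbations in uniformly random directions is asymptotically free from the past and has, in the limit, \emph{exactly} the prescribed free unitary L\'evy distribution, i.e.\ matching the large-$N$ expansion with the free multiplicative convolution semigroup on $\mathbb{U}$. The uniform-in-$N$ control of the small jumps and the concentration estimates are comparatively routine, but both rely essentially on the hypothesis $\int_{\mathbb{U}}(\Re(\zeta)-1)\,dv(\zeta)<\infty$.
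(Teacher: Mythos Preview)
Your construction is essentially the same object as the paper's, but reached by a longer road. The paper does not re-prove this theorem; it cites \cite{Cebron2015b} and only \emph{describes} the approximating L\'evy process via its characteristic triplet $(i\alpha\cdot I_N,\,b\cdot I_{N^2},\,v_N)$, where $v_N$ is the push-forward of $N\,dv(\zeta)\otimes d\mathrm{Haar}(g)$ by $(\zeta,g)\mapsto g\,\mathrm{diag}(\zeta,1,\dots,1)\,g^*$. Since $g\,\mathrm{diag}(\zeta,1,\dots,1)\,g^*=I_N+(\zeta-1)(ge_1)(ge_1)^*$ and $ge_1$ is uniform on $S^{2N-1}$, this is precisely your rank-one jump $I_N+(\zeta-1)\xi\xi^*$; so your compound Poisson block is exactly the jump part of the paper's process, and the Brownian and drift blocks match as well.

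Two differences are worth flagging. First, the paper works directly at the level of the generator~\eqref{gen}, with the compensator $(i\Im(g))^l$ built in, and hence never needs a cut-off $\epsilon$: the L\'evy measure $v_N$ is finite on $U(N)$-neighbourhoods of $I_N$ (by the hypothesis on $v$) and the semigroup exists without truncation. Your $\epsilon\to0$ double-limit is correct in spirit but is extra work you can avoid. Second, and more substantively, your tuning of $\alpha_N$ is unnecessary: the paper computes, using the generator applied to the identity map and Proposition~\ref{momentlibre},
\[
\mathbb{E}\!\left[\tfrac1N\Tr(Y_t^{(N)})\right]=\exp\!\Big(t\big(i\alpha-\tfrac{b}{2}+\textstyle\int_{\mathbb U}(\Re(\zeta)-1)\,dv(\zeta)\big)\Big)=\tau(y_t)
\]
\emph{exactly}, for every $N$, with $\alpha_N=\alpha$. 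Your argument that one can ``solve for $\alpha_N$'' is in any case incomplete: a central phase only shifts the imaginary part of the exponent, so it would first have to be checked that the real part already matches~$-b/2+\int(\Re(\zeta)-1)\,dv$, which is precisely the computation the paper carries out. Once you notice that, the tuning step disappears.
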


To be precise, the last equality is not explicitely mentioned in \cite[Theorem 3]{Cebron2015b}, but can be traced out of the proof. For the reader convenience, we describe now a possible process $({Y_t}^{(N)})_{t\in \mathbb{R}^+}$ occuring in the theorem above, making the equality $\mathbb{E}\left[\frac{1}{N}\Tr\left({Y_t}^{(N)}\right)\right]=\tau(y_t)$ explicit (see also \cite{Gabriel2015b} for a second proof of Theorem \ref{Existence Approximation Levy}).

We define the  transition  semigroup $(P_t)_{t\in \mathbb{R}^+}$ of a Lévy process $(Y_t)_{{t \geq 0}}$ on $U(N)$ as follows: for all $t\in \mathbb{R}^+$, all bounded Borel function $f$ on $U(N)$ and all $U\in U(N)$, we set $P_tf(U)=\mathbb{E}[f(UY_t)]$. The \textit{generator} of $(Y_t)_{{t \geq 0}}$, is defined to be the linear operator $L$ on $C(U(N))$ such as $Lf=\lim_{t\rightarrow 0}(P_tf-f)/t$ whenever this limit exists.
In order to describe the generator of a semigroup, we shall successively  introduce in the three next paragraphs the Lie algebra $\frak{u}(N)$ of $U(N)$, a scalar product on $\frak{u}(N)$ and the notion of Lévy measure on $U(N)$.

 The unitary group $U(N)$ is a compact real Lie group of dimension $N^2$, whose Lie algebra $\frak{u}(N)$ is the real vector space of skew-Hermitian matrices: $\frak{u}(N)=\{M\in M_N(\mathbb{C}): M^*+M=0\}.$ Any $X\in \frak{u}(N)$ induces a \textit{left invariant vector field }$X^l$ on $U(N)$ defined for all $g\in U(N)$ by $X^l(g)=DL_g(Y)$ where $DL_g$ is the differential map of $h\mapsto gh$.
We consider the following \textit{inner product} on $\frak{u}(N)$:
$$(X,Y) \mapsto \left\langle X,Y\right\rangle_{\frak{u}(N)}= N\Tr (X^*Y)=-N\Tr (XY).$$
It is a real scalar product on $\frak{u}(N)$ which is invariant under the adjoint action of $U(N)$. Let us fix an orthonormal basis $\left\{X_1,\ldots,X_{N^2}\right\}$ of $\frak{u}(N)$.\label{bloubi}

It is convenient now to introduce an arbitrary auxiliary set of local coordinates around $I_N$. Let $\Re,\Im:U(N)\rightarrow M_N(\mathbb{C})$ be such that for all $U\in U(N)$, we have $\Re (U)=(U+U^*)/2$ and $\Im (U)=(U-U^*)/2i$. Note that $i \Im$ takes its values in $\frak{u}(N)$. A \textit{Lévy measure} $\Pi$ on $U(N)$ is a measure on $U(N)$ such that $\Pi(\{I_N\})=0$, for all neighborhood $V$ of $I_N$, we have $\Pi(V^c)<+\infty$ and $\int_{U(N)} \|i \Im(x)\|_{\frak{u}(N)}^2\ \Pi(\diff x)<\infty$.

The following theorem gives us a characterization of the generator of Lévy processes.
\begin{theorem}[\cite{Applebaum1993,Liao2004}]Let $(Y_t)_{t\in \mathbb{R}^+}$ be a Lévy process on $U(N)$ starting at $I_N$.
There exist an element $X_0\in \frak{u}(N)$, a symmetric positive semidefinite matrix $(x_{i,j})_{1\leq i,j \leq N^2}$ and a Lévy measure $\Pi$ on $U(N)$ such that the generator $L$ of $\mu$ is the left-invariant differential operator given, for all $f\in C^2(U(N))$ and all $h\in U(N)$, by 
\begin{equation}
Lf(h)=X_0^lf(h)+\frac{1}{2}\displaystyle\sum_{i,j=1}^{N^2}x_{i,j}X_i^l X_j^l f(h) +\int_{U(N)}f(hg)-f(h)- \left( i \Im(g)\right)^l f(h)\ \Pi(\diff g).\label{gen}
\end{equation}
Conversely, given such a triplet $(X_0, (x_{i,j})_{1\leq i,j \leq N^2}, \Pi)$, there exists a Lévy process on $U(N)$ starting at $I_N$ whose generator is given by \eqref{gen}.
\end{theorem}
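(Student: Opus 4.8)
The plan is to follow Hunt's analysis of weakly continuous convolution semigroups on a Lie group, in the form due to Applebaum and Liao. Let $(\mu_t)_{t\geq0}$ be the family of laws of $Y_t$; stationarity and independence of the increments give $\mu_{s+t}=\mu_s*\mu_t$, and stochastic continuity gives $\mu_t\to\delta_{I_N}$ weakly as $t\to0$. Then $P_tf(h)=\int_{U(N)}f(hg)\,\mu_t(\diff g)$ is a strongly continuous contraction (Feller) semigroup on $C(U(N))$, and the whole task is to identify its generator $L$ on $C^2(U(N))$. First I would extract the L\'evy measure: for $\varphi\in C(U(N))$ vanishing in a neighbourhood of $I_N$, a subadditivity argument based on $\mu_{s+t}=\mu_s*\mu_t$ shows that $\tfrac1t\int\varphi\,\diff\mu_t$ has a limit as $t\to0$, and the limiting positive linear functional is represented by a measure $\Pi$ on $U(N)\setminus\{I_N\}$ finite outside every neighbourhood of $I_N$; the square-integrability $\int_{U(N)}\|i\Im(x)\|_{\frak{u}(N)}^2\,\Pi(\diff x)<\infty$ needed for $\Pi$ to be a L\'evy measure then follows from the second-order behaviour of $P_t$ on a single fixed $C^2$ function (using that near $I_N$ one may write $g=\exp Y$ with $Y\in\frak{u}(N)$, and $i\Im(\exp Y)=\sinh Y=Y+O(\|Y\|^3)$).

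Next I would prove $C^2(U(N))\subseteq\operatorname{Dom}(L)$ and compute $L$ there. Fix an orthonormal basis $\{X_1,\dots,X_{N^2}\}$ of $\frak{u}(N)$ and set $\xi_k(g)=\langle i\Im(g),X_k\rangle_{\frak{u}(N)}$, so that $(\xi_k(g))_k$ is the coordinate vector of $i\Im(g)$ and $\sum_k\xi_k(g)X_k^lf(h)=(i\Im(g))^lf(h)$ globally on $U(N)$. A second-order Taylor expansion along one-parameter subgroups gives, uniformly in $h$,
$$f(hg)-f(h)=\sum_k\xi_k(g)X_k^lf(h)+\tfrac12\sum_{j,k}\xi_j(g)\xi_k(g)X_j^lX_k^lf(h)+R_f(h,g),\qquad|R_f(h,g)|=o\big(d(I_N,g)^2\big),$$
where $d$ is the distance induced by $\langle\cdot,\cdot\rangle_{\frak{u}(N)}$. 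Writing $\tfrac{P_tf(h)-f(h)}{t}$ as the sum of the integrals over a small neighbourhood $V$ of $I_N$ and over $V^c$: the $V^c$-part tends to $\int_{V^c}[f(hg)-f(h)]\,\Pi(\diff g)$ by the vague convergence $\tfrac1t\mu_t|_{V^c}\to\Pi|_{V^c}$; inside $V$ one substitutes the expansion, and, using the semigroup relation together with the convergence already obtained on $V^c$, one shows that the limits $\tfrac1t\int_V\xi_k\,\diff\mu_t\to(X_0)_k$ and $\tfrac1t\int_V\xi_j\xi_k\,\diff\mu_t\to x_{jk}$ exist, the matrix $(x_{jk})$ being symmetric and positive semidefinite as a limit of Gram-type matrices, while $\tfrac1t\int_VR_f(h,g)\,\mu_t(\diff g)\to\int_VR_f(h,g)\,\Pi(\diff g)$ because $|R_f(h,g)|=o(d(I_N,g)^2)$ and $\tfrac1t\int_Vd(I_N,g)^2\,\mu_t(\diff g)$ stays bounded. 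Collecting the terms and absorbing the $\xi_k$-pieces from both regions into the compensator $(i\Im(g))^l$ yields exactly \eqref{gen}. Uniqueness of $(X_0,(x_{ij}),\Pi)$ is read off by evaluating \eqref{gen} on functions supported away from $I_N$, on the $\xi_k$, and on the products $\xi_j\xi_k$.

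For the converse, given a triplet I would realise the process by a L\'evy--It\^o construction on $U(N)$: superpose the left-invariant Brownian motion with drift $X_0$ and covariance $(x_{ij})$ with the jumps of a Poisson point process on $U(N)\setminus\{I_N\}$ of intensity $\diff t\otimes\Pi$, compensating the small jumps through the $(i\Im(g))^l$ term, and verify by It\^o's formula that the resulting process starts at $I_N$, has stationary and independent increments, is stochastically continuous, and has generator \eqref{gen}. Alternatively, one checks that $L$ given by \eqref{gen} on $C^2(U(N))$ satisfies the positive maximum principle and has dense range, and invokes Hille--Yosida to produce the Feller semigroup directly.

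The main obstacle is the middle step. The delicate ingredients are the subadditivity argument yielding the vague limit $\tfrac1t\mu_t|_{V^c}\to\Pi$, the existence of the first- and second-order coordinate limits $(X_0)_k$ and $x_{jk}$, and the uniform-in-$h$ control of the Taylor remainder $R_f$ guaranteeing both that $\tfrac1t\int_VR_f(h,g)\,\mu_t(\diff g)$ stays bounded and that it converges to $\int_VR_f(h,g)\,\Pi(\diff g)$; this is precisely where compactness of $U(N)$ and the $C^2$-regularity of $f$ enter.
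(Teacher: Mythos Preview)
The paper does not prove this statement: it is quoted as a known result from \cite{Applebaum1993,Liao2004} and used without proof, so there is no ``paper's own proof'' to compare against. Your sketch follows exactly the classical Hunt-type analysis carried out in those references (extract the L\'evy measure as the vague limit of $t^{-1}\mu_t$ away from the identity, Taylor-expand in local coordinates to isolate drift and diffusion coefficients, and build the converse by a L\'evy--It\^o or Hille--Yosida argument), so it is consistent with the cited sources and there is no discrepancy to discuss.
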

The triplet $(X_0, (x_{i,j})_{1\leq i,j \leq N^2}, \Pi)$ is called the \textit{characteristic triplet} of $(Y_t)_{t\in \mathbb{R}^+}$.

Let $(y_t)_{t\in \mathbb{R}^+}$ be a free unitary Lévy process with characteristic triplet $(\alpha,b,v)$. We consider the Lévy process $({Y_t}^{(N)})_{t\in \mathbb{R}^+}$ on $U(N)$ starting at $I_N$ with characteristic triplet $\left(i\alpha \cdot I_N, b\cdot I_{N^2},  v_N\right)$, where $ \upsilon_N$ is the Lévy measure on $U(N)$ defined, for all bounded and measurable function $f$ on $U(N)$, by
$$\int_{U(N)}f\diff v_N=N\int_\mathbb{U}\int_{U(N)}f\left(g\left(\begin{array}{cccc}\zeta & 0 & \cdots & 0 \\0 & 1 & \ddots & \vdots \\\vdots & \ddots & \ddots & 0 \\0 & \cdots & 0 & 1\end{array}\right)g^*\right)\diff g \diff v(\zeta).$$
The generator associated to $\left(i\alpha \cdot I_N, b\cdot I_{N^2},  v_N\right)$ is unitarily invariant, which implies that the process $({Y_t}^{(N)})_{t\in \mathbb{R}^+}$ is unitarily invariant. Now, \cite[Theorem 7.8, Remark 7.10]{Cebron2015b} says that we have the following convergence: for all $n\geq 1$,
$$\lim_{N\to \infty}\mathbb{E}\left[\frac{1}{N}\Tr\left(({Y_t}^{(N)})^n\right)\right]=\tau(y_t^n);$$
and in addition the following convergences which hold almost surely: for all $n\geq 1$,
$$\lim_{N\to \infty}\frac{1}{N}\Tr\left(({Y_t}^{(N)})^n\right)=\tau(y_t^n).$$
As explained in \cite[Section 7.4]{Cebron2015b}, the fact that the increments of $({Y_t}^{(N)})_{t\in \mathbb{R}^+}$ are independent and unitarily invariant together with one version of the theorem of Voiculescu \cite{Voiculescu1991} imply in particular that the process converges in non-commutative distribution to a process with free increments, in both the following senses: for all $t_1,\ldots, t_n\in \mathbb{R}^+$ and $\epsilon_1,\ldots, \epsilon_n\in \{1,\ast\}$,
$$\lim_{N\to \infty}\mathbb{E}\left[\frac{1}{N}\Tr\left((Y_{t_1}^{(N)})^{\epsilon_1}\cdots (Y_{t_n}^{(N)})^{\epsilon_n}\right)\right]=\tau\left((y_{t_1})^{\epsilon_1}\cdots (y_{t_n})^{\epsilon_n}\right);$$
and in addition, almost surely: for all $t_1,\ldots, t_n\in \mathbb{R}^+$ and $\epsilon_1,\ldots, \epsilon_n\in \{1,\ast\}$,
$$\lim_{N\to \infty}\frac{1}{N}\Tr\left((Y_{t_1}^{(N)})^{\epsilon_1}\cdots (Y_{t_n}^{(N)})^{\epsilon_n}\right)=\tau\left((y_{t_1})^{\epsilon_1}\cdots (y_{t_n})^{\epsilon_n}\right).$$
Remains the last equality in Theorem \ref{Existence Approximation Levy} above, which is a consequence of \cite[Proposition 5.8]{Cebron2015b}. Alternatively, one can argue as follows. Denoting by $L_N$ the generator of $({Y_t}^{(N)})_{t\in \mathbb{R}^+}$, which can be read in \eqref{gen}, we compute the following expectation (the last line is given by Proposition \ref{momentlibre})
\begin{align*}
\mathbb{E}\left[\frac{1}{N}\Tr\left({Y_t}^{(N)}\right)\right]&=\frac{1}{N}\Tr\left(\mathbb{E}\left[{Y_t}^{(N)}\right]\right)\\
&=\frac{1}{N}\Tr\exp\left(\left[t\ L_N(Id_{U(N)})\right](I_N)\right)\\
&=\frac{1}{N}\Tr \exp\left(i\alpha t I_N-b t/2 I_N+t  \int_{U(N)} \left(g-I_N-i\Im(g)\right)\diff v_N(g)\right)\\
&=e^{t\cdot \left(i\alpha -b/2+\int_{\mathbb{U}}(\Re(\zeta)-1)\diff v(\zeta)\right)}\\
&=\tau(y_t).
\end{align*}
 Thanks to Theorem \ref{Existence Approximation Levy} of  approximation of free Lévy processes we can now prove  Theorem \ref{Existence Approximation}.
 \begin{proof}[Proof of  Theorem \ref{Existence Approximation}] Consider a free planar Markovian holonomy field in $(\mathcal{A},\tau)$ and $(y_t)_{t\ge 0}$ a free Lévy process in $(\mathcal{A},\tau)$ associated to it. According to Theorem \ref{Existence Approximation Levy}, there exists a $U(N)$-valued Lévy process $(Y^{(N)}_t)_{t\ge 0}$, converging in non-commutative distribution towards $(y_t)_{t\ge0}$ and such that  
$$\mathbb{E}[\tr_N(Y_t)]=\tau(y_t)$$
for all $t\ge 0.$ If $(\alpha,b,v)$ denotes the characteristic triplet of $(y_t)_{t\ge 0},$ Proposition \ref{momentlibre} says that
$$\tau(y_t)=e^{t\cdot \left(i\alpha -b/2+\int_{\mathbb{U}}(\Re(\zeta)-1)\mathrm{d}v(\zeta)\right)}$$
and there exists $C>0$, such that  for all $t\ge 0,$   $1-\Re(\tau(y_t))\le C t.$ 
Let us now choose $(H_l)_{l\in \sf L_0}$ to be the planar Markovian holonomy field associated to $(Y_t^{(N)})_{t\ge 0}.$  For any simple loop $s\in \sf L_{Aff}$ bordering a domain of area $t$, $\tr_N(H_t)$ has same law as $Y_t$ so that  
$$1-\mathbb{E}[\Re\left(\tr_N(H_s)\right)]=1-\Re(\tau(y_t))\le C t.$$ We can now apply Theorem \ref{extension of convergences} to conclude.
\end{proof}

\bibliographystyle{plain}
\bibliography{biblio}

\end{document}